\documentclass[10pt]{article}

\usepackage[english]{babel}
\usepackage[letterpaper,top=2cm,bottom=2cm,left=3cm,right=3cm,marginparwidth=1.75cm]{geometry}
\usepackage[utf8]{inputenc}
\usepackage[T1]{fontenc}
\usepackage{setspace}
\usepackage{amsmath,amsthm,amssymb,amsfonts,amscd,keyval}
\usepackage{mathtools,mathrsfs}
\mathtoolsset{mathic=true}
\usepackage{bbm}
\usepackage{adjustbox}
\usepackage{tensor}
\usepackage{braket}
\usepackage{slashed}
\usepackage{tikz-cd}
\usepackage{circuitikz}
\usepackage{multirow}

\usepackage{graphicx}
\usepackage{subcaption}
\usepackage{float}
\usepackage{abstract}
\usepackage{titlesec}
\usepackage{titletoc}
\usepackage[backend=biber,  style=ieee,
  citestyle=numeric-comp, maxcitenames=3, maxbibnames=99 ]{biblatex}
\usepackage[colorlinks=true, allcolors=blue]{hyperref}
\usepackage{microtype} 
\usepackage{url}
\usepackage{tikz}
\usetikzlibrary{arrows.meta,calc}
\usepackage{authblk}

\usepackage{palatino}
\usepackage{booktabs,array,tabularx}

\let\originalmathbb\mathbb
\renewcommand{\mathbb}[1]{%
  \ifstrequal{#1}{k}{\Bbbk}{\originalmathbb{#1}}}

\newcommand{\comments}[1]{}

\DeclareMathOperator{\im}{im}

\DeclareMathOperator{\tr}{Tr}

\newcommand{\CZ}{\mathrm{C}Z}

\newcommand{\CCZ}{\mathrm{C}\mathrm{C}Z}
\newcommand{\mCZ}{\mathrm{C}^{r-1}Z}

\DeclareMathOperator{\type}{\text{type}}

\DeclareMathOperator{\sd}{sd}
\DeclareMathOperator{\id}{id}

\DeclareMathOperator{\ev}{ev}
\DeclareMathOperator{\RS}{RS}

\numberwithin{equation}{section} 
\theoremstyle{definition}
\newtheorem{definition}{Definition}[section]

\theoremstyle{plain}

\newtheorem{theorem}[definition]{Theorem}
\newtheorem{proposition}[definition]{Proposition}
\newtheorem{lemma}[definition]{Lemma}

\title{Transversal non-Clifford gates on good  quantum locally testable codes}
\author{Yiming Li}
\author{Zimu Li}
\author{Zi-Wen Liu}
\affil{Yau Mathematical Sciences Center, Tsinghua University}
\date{}
\begin{document}

\maketitle

\vspace{-1cm}

\begin{abstract}
We achieve nontrivial transversal logical multi-controlled-$Z$ gates with asymptotically optimal parameters simultaneously on quantum low-density parity check codes and quantum locally testable codes, by applying the gate framework of [arXiv:2604.01874] to the recent good qLTC construction of [arXiv:2609.20780]. 
To this end, we use the covering space method to construct a nonzero cup product pairing on a finite arithmetic cubical complex. This differs from the previous construction of almost-good codes, whose base space is a hypergraph product. We express the pairing as a coefficient in a product of Moore determinants and prove polynomial nonvanishing by a bipartite multigraph specialization. We then construct covering spaces to obtain an asymptotic family of good qLTCs on which the pulled-back pairing induces the desired nontrivial transversal logical action.
We further establish polynomial lower bounds on the logical tensor subrank, yielding polynomially many independent logical non-Clifford gates and enabling sublogarithmic-overhead magic state distillation with good qLTCs.

\end{abstract}

\vspace{-4mm}
\tableofcontents


\section{Introduction and main results}\label{sec:intro}

Quantum error correction must enable computation as well as protect stored quantum information. Beyond rate, distance, and the cost of syndrome extraction, an essential question is therefore which logical operations a code can implement fault-tolerantly. Non-Clifford gates are especially important: they are necessary for computational universality beyond Clifford operations, and their implementation through magic-state preparation and distillation often accounts for a substantial fraction of the overhead of fault-tolerant quantum computation~\cite{Yamasaki2024QusiPloylog,Tamiya2024PolylogTime,Wills2024magic,Nguyen2025CCZ}. Transversal gates provide a particularly simple way to control error propagation. Constructing codes that directly support non-Clifford logical operations while retaining strong error-correction parameters is consequently a central problem in quantum coding theory.

Quantum low-density parity check (qLDPC) codes underpin the prospect of fault-tolerant quantum computation with constant overhead~\cite{Gottesman2013}, making the construction of qLDPC codes with optimal asymptotic parameters a central goal of recent research. After decades of progress, asymptotically good qLDPC codes, with constant rate and linear distance, were established through lifted product and quantum Tanner constructions~\cite{PK2022Good,QuantumTanner2022,DHLV2022}. Quantum locally testable codes (qLTCs) impose the further requirement that local checks detect errors with probability proportional to their distance from the code space~\cite{AharonovEldar2015QLTC}. This quantitative relation between local violations and global distance is important both for quantum fault tolerance and for the study of quantum PCPs and the complexity of low-energy states~\cite{Nguyen2025FT,AharonovAradVidick2013_qPCPSurvey,Eldar2016NLETS}. These coding advances make it pressing to understand whether strong parameters are compatible with useful fault-tolerant logical gates, especially non-Clifford gates.

The quest for this compatibility has driven sustained research over the years. Color codes and gauge color codes provided influential constructions of fault-tolerant non-Clifford operations~\cite{Bombin_2007,Bombin_2013,Kubica2015}, while subsequent work has explored generalized color codes, homological products, higher-dimensional topology, and cup products on sheaf codes~\cite{RainbowCode,10.21468/SciPostPhys.14.4.065,Zhu2023,Wang_2024,Lin2024transversal,Breuckmann2024Cups,Tiew2026copycup,Zhu2025A,Zhu2025B}. A complementary algebraic approach produced asymptotically good quantum codes with transversal non-Clifford gates, including increasingly strong guarantees of addressability and parallelism, without bounded-weight stabilizer checks~\cite{Nguyen2025CCZ,Wills2024magic,He2025addressable,
He2025GoodAddressable,Golowich_Guruswami2025A,VirgileCCZ,
GasnierGuemard2026}. Recent advances include improved binary codes with transversal
$T$ gates~\cite{Wills2026TransversalT} and asymptotically good binary
codes with higher-level transversal
gates~\cite{SanJose2026Triorthogonal}. Constructions based on products of algebraic codes and related
approaches further improved the tradeoffs among rate, distance,
and check weight~\cite{Golowich_Lin2024,Golowich_Guruswami2025B,
GolowichTamoZhu2026}.  At the same time, no-go theorems for product codes identified structural obstructions to strictly transversal non-Clifford gates~\cite{Burton2022,Fu2025nogo}. Together, these results show that efficient logical operations depend on algebraic structure that is not determined by code parameters alone.

Recent work brought this problem into the regime of nearly optimal parameters. The theory of cohomological invariants on sheaf codes provides a systematic framework for constructing logical representatives and evaluating cup products~\cite{Li2025Poincare,LSWLL2026Theory}. Building on this framework, Li--Li--Liu established nontrivial transversal logical multi-controlled-$Z$ gates on almost-good qLDPC codes and qLTCs~\cite{LLL2026nontrivial}, using the cubical construction of Dinur--Lin--Vidick~\cite{Dinur2024sheaf} and the theory of extendable code of Panteleev and Kalachev~\cite{Panteleev2024,KP2025Extendable}. These codes have constant rate, with only polylogarithmic losses in distance and, for qLTCs, soundness. That is, the compatibility of sparsity, nearly optimal parameters, and transversal non-Clifford gates has been established, but achieving exactly linear distance and constant soundness simultaneously with these gates remains an outstanding goal.

The recent constructions of asymptotically good qLTCs by Gay--Jeronimo and Mitali--Li--Nguyen establish the desired coding parameters using arithmetic cubical complexes and compatible projective Reed--Solomon local codes~\cite{GJ2026,bafna2026goodquantumlocallytestable}. Retaining nontrivial logical operations in this setting requires an additional argument. The product structure that made earlier cup product calculations accessible is replaced by arithmetic identifications that change the local coordinates, and compatibility of the local codes does not by itself produce nonzero logical pairings. Removing the polylogarithmic parameter losses yet preserving the non-Clifford logical operation therefore require more efforts. 


\paragraph{Main results.}
In this work, we realize nontrivial transversal logical multi-controlled-$Z$ gates on asymptotically good qLDPC codes and qLTCs by extending the gate framework of Ref.~\cite{LLL2026nontrivial} to the recent asymptotically good qLTC construction of Ref.~\cite{GJ2026}. At the conceptual level, the extension exploits the close structural similarities between the asymptotically almost-good and good qLTCs: their global structures are based on abelian lifts~\cite{Dinur2024sheaf} and Ramanujan cubical complexes~\cite{GJ2026}, respectively, while their local codes are product-expanding punctured Reed--Solomon (RS) codes~\cite{LLL2026nontrivial} and projective Reed--Solomon codes~\cite{GJ2026}, respectively. The main difficulty lies in constructing compatible cohomology classes on the new complexes realized by arithmetic quotients rather than tensor products, and establishing the nontriviality of the resulting cup products and pairing.


As is standard, $\mCZ$ denotes a (multi-)controlled-$Z$ gate on $r$ qubits with $r-1$ controls.  We prove:

\begin{theorem}\label{thm:main}
For any integer $r\geq 3$, there exist  
\begin{itemize}
    \item $[\![N,\Theta(N),\Theta(N)]\!]$ quantum LDPC codes
    \item $[\![N,\Theta(N),\Theta(N)]\!]$ quantum locally testable codes with soundness $\Theta(1)$
\end{itemize}
that support $k_{\mCZ} = \Omega\!\left(N^{1/[6(r-2)]}\right)$ independent nontrivial transversal logical $\mCZ$ gates.
\end{theorem}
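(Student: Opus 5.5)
We will follow the framework of \cite{LLL2026nontrivial}, but replace its hypergraph-product base complex with the arithmetic Ramanujan cubical complex of \cite{GJ2026}, equipped with projective Reed--Solomon local codes. The argument has three stages.

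\textbf{Stage 1: the sheaf code and its logical gate.} We fix the $r$-dimensional cubical complex $X$ of \cite{GJ2026}, with local codes given by tensor products of projective RS codes of suitable degrees $d_1,\dots,d_r$. The first requirement is that the degrees satisfy the multiplication condition: the coordinatewise product of codewords of the $r$ local codes (or of their duals, depending on the level) lands in the RS code of degree $\sum d_i$. This is needed for the cup product $\cup\colon H^{i_1}\times\cdots\times H^{i_r}\to H^{\mathrm{top}}$ of sheaf cohomology to be defined at cochain level.

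Given this, the transversal $\mCZ$ on $r$ copies of the code (or on the $r$ cochain levels) acts logically by the multilinear form $(a_1,\dots,a_r)\mapsto \int a_1\cup\cdots\cup a_r$, by the general gate theorem of \cite{LLL2026nontrivial,LSWLL2026Theory}. We keep the degrees in the range where \cite{GJ2026} still guarantees product expansion and robust testability. This gives linear distance, constant rate and constant soundness, and hence the qLDPC/qLTC parameters. The only remaining task is nontriviality.

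\textbf{Stage 2: a nonzero pairing on a finite base complex.} We choose a small finite arithmetic quotient $X_0$ and construct explicit cocycles on it, supported near a single vertex star. We then compute the top pairing as an evaluation sum over the projective line. Using the structure of projective RS duals, this reduces to the coefficient of a fixed monomial in a product of Moore determinants in the local coordinates. The Moore determinants arise because the arithmetic identifications act on the coordinates by $\mathbb{F}_q$-linear (Frobenius-type) changes.

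To show this coefficient is a nonzero polynomial, we specialize the variables according to a bipartite multigraph encoding which coordinate pairs are identified. After specialization, the coefficient counts perfect-matching-like structures with signs, and a unique extremal term survives. We also check that the cocycles are not coboundaries: their pairing is nonzero, and coboundaries pair to zero by Stokes.

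\textbf{Stage 3: covering spaces and independence.} We take a tower of finite covers $X_n\to X_0$ (congruence quotients) with $N\to\infty$. The good-code parameters hold on each $X_n$ by \cite{GJ2026}. The pulled-back classes satisfy $\int_{X_n}\pi^*a_1\cup\cdots\cup\pi^*a_r=\deg(\pi)\int_{X_0}a_1\cup\cdots\cup a_r$. We choose covers whose degree is prime to $2$, or else use lifts to disjoint sheets, so that the pairing stays nonzero.

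To get many independent gates, we place $k$ copies of the Stage~2 configuration at pairwise far-apart vertex stars in $X_n$. This yields a diagonal sub-tensor of size $k$ in the logical cubic form, so the tensor subrank is at least $k$. Using the injectivity radius of the covers, which is logarithmic in $N$ with polynomial ball growth, and the multiplicity constraints coming from the $r-2$ extra controls, the number of disjoint configurations we can fit is $\Omega(N^{1/[6(r-2)]})$.

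\textbf{Main obstacle.} We expect the hardest step to be Stage~2, namely the polynomial nonvanishing of the Moore-determinant coefficient. The arithmetic gluing destroys the product structure used in \cite{LLL2026nontrivial}, and it is not clear in advance that the bipartite specialization leaves a single term that survives cancellation. The second delicate point is locality in Stage~3: we must ensure that the localized cocycles remain cocycles and stay mutually non-interacting after passing to the covers.
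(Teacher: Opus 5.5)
\textbf{The gap is in Stage~3,} which is where the $\Omega(N^{1/[6(r-2)]})$ bound has to come from. Your codes on $X_n$ have $d_X=\Theta(N)$, and by the definition of $d_X$ every cocycle of weight below $d_X$ is a coboundary. So any cocycle supported near a vertex star of $X_n$ is zero in $H^2(X_n,\mathcal F_a)$. The same holds for any single-sheet lift of a localized cocycle from $X_0$, which is your fallback when the covering degree is even. Since $I_\xi$ is a cohomological invariant form, it vanishes as soon as one input is a coboundary. Placing $k$ far-apart copies therefore gives the zero tensor, not a diagonal of size $k$: linear distance rules out localized logical operators altogether. Your count via injectivity radius and ball growth is also not tied to any working mechanism; bounded configurations, if they worked, would give nearly linearly many copies, not $N^{1/[6(r-2)]}$.

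The same point shows your picture of Stage~2 is off. The paper's cocycles $\alpha_b$ are not localized. Each is supported on all faces of type $B_b=\{2b-1,2b\}$, obtained by propagating one Moore-determinant polynomial through the whole complex. Lemma~\ref{source:pair-cocycle} supplies the equal opposite-face restrictions that make it a cocycle.

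\textbf{The missing idea is to obtain many global classes from symmetry.} The paper passes to principal congruence covers at batches of four fresh primes $z^D+\beta_i$ of degree $D$, one batch per triple $(1,2,a)$. Their deck group is $\mathbf G=\prod\mathrm{SL}_2(\mathbb F_{q_D})$, and these covers have even degree. So instead of pulling back scalar cocycles, the paper keeps the fresh variables in the Moore determinants. This gives $\mathbf G$-equivariant maps $\Phi_a:\mathsf R_a^\vee\to H_a$, with $\mathsf R_a$ built from $\mathsf A_\pm$ and the sum-zero space $\mathsf S$. Because $\operatorname{End}_G(\mathsf S)$ is scalar, the pulled-back logical tensor is forced to equal $c_U(D)\mathcal K$ for the unique invariant $\mathcal K$ (Lemma~\ref{lem:subrank-norm-bridge}). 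Nonvanishing $c_U(D)\neq0$ comes from a leading-coefficient computation plus a fixed-size transfer-matrix periodicity in $D$, which keeps $\mathbb F_q$ and the local codes fixed (Lemma~\ref{subrank:nonzero-family}). Finally, $\mathcal K$ has a diagonal restriction of size $s^{4D}=q_D^2$, while $N=\Theta(q_D^{12(r-2)})$.

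\textbf{Stage~1 also needs repair.}
Theorem~\ref{thm:construction-good-qltc} concerns degree-two codes on complexes of dimension $t\geq4$. You therefore need $t=2r$ with degree-two cocycles, not an $r$-dimensional complex with degree-one classes. The polynomial sheaves need the endpoint congruence $(1+u_i)d_i\equiv0\pmod{Q_i-1}$ and character twists in order to descend. The degree budget $\sum_a d_{a,i}\leq Q_i-1$ means not every block can satisfy that theorem's rate condition. The paper gets constant rate from the first block alone and takes the direct sum.

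For mere nontriviality, your Stage~2 outline (Moore-determinant coefficient, bipartite multigraph specialization, odd-degree covers) does match the paper. You would still need a Schwartz--Zippel step to realize the formal nonvanishing with actual irreducibles.
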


This combines qLDPC-ness, constant rate, linear distance, constant soundness, and guaranteed polynomially many independent nontrivial transversal non-Clifford logical operations in a single construction. 

Note that the polynomial $k_{\mCZ}$ subrank bound   indicates that the codes support magic state distillation with sublogarithmic average overhead. To our knowledge, this is the first construction to combine this with qLDPC-ness, linear distance, and constant soundness. 

\paragraph{Proof overview.}
We now outline how the framework of Ref.~\cite{LLL2026nontrivial} enables nontrivial transversal logical multi-controlled-$Z$ gates on asymptotically good qLDPC codes and qLTCs. For simplicity, we refer to the code constructions of Refs.~\cite{Dinur2024sheaf} and \cite{GJ2026} as the DLV and GJ codes, respectively.

The main mechanism can be summarized as follows. Suppose $X$ is a sparse cell complex with three sheaves of $\mathbb F_q$-vector spaces $\mathcal{F}_1, \mathcal{F}_2$ and $\mathcal F_3$, where $\mathbb F_q$ has characteristic $2$. If there exist cocycles $\alpha_1\in C^i(X,\mathcal{ F}_1), \alpha_2\in C^j(X,\mathcal{ F}_2), \alpha_3\in C^k(X,\mathcal{ F}_3)$ and a cycle $\xi\in C_{i+j+k}(X,\mathcal{ F}_1\otimes\mathcal F_2\otimes\mathcal F_3)$ such that
\begin{align}\label{eq:nontrivial triple cup pairing}
 \langle\alpha_1\smile\alpha_2\smile\alpha_3,\xi\rangle\ne 0,
\end{align}
then the following unitary yields a nontrivial transversal logical $\CCZ$ gate across the CSS code blocks associated with $C^i(X,\mathcal{ F}_1),C^j(X,\mathcal{ F}_2)$ and $C^k(X,\mathcal{ F}_3)$ :
\begin{equation}\label{eq:invariant_poly}
    U_\xi
    \coloneqq
    \sum_{x_1,x_2,x_3}
    (-1)^{
        \operatorname{Tr}_{\mathbb F_q/\mathbb F_2}
        \left(
            \left\langle
                x_1\smile x_2\smile x_3,\xi
            \right\rangle
        \right)
    }
    \ket{x_1,x_2,x_3}\bra{x_1,x_2,x_3},
\end{equation}
where the sum below ranges over $x_1\in C^i(X,\mathcal F_1)$, $x_2\in C^j(X,\mathcal F_2)$, and $x_3\in C^k(X,\mathcal F_3)$.

However, cup products can be difficult to compute in general, especially when the cell complex $X$ arises from an intricate algebraic construction. Since it suffices to find a particular collection of (co)cycles satisfying Eq.~\eqref{eq:nontrivial triple cup pairing}, we develop the following covering space method. Suppose $X$ is the covering space of $X_0$ of sheet $\ell$, let $P:X\to X_0$ be the covering map and $T:X_0\to X$ be the transfer map defined in \cite{LLL2026nontrivial}. Suppose $X_0$ is equipped with three sheaves $\mathcal{ G}_1, \mathcal{G}_2$ and $\mathcal G_3$ such that $\mathcal F_i=P^*\mathcal G_i$, $i\in[3]$ are the pullback sheaves, then if there exist cocycles $\beta_1\in C^i(X_0,\mathcal G_1), \beta_2\in C^j(X_0,\mathcal G_2)$ and $\beta_3\in C^k(X_0,\mathcal G_3)$ and a cycle $\eta\in C_{i+j+k}(X_0,\mathcal G_1\otimes \mathcal G_2\otimes\mathcal G_3)$, then
\begin{align}
    \langle P^\#\beta_1\smile P^\#\beta_2\smile P^\#\beta_3,T_\#\eta\rangle=\ell\langle \beta_1\smile\beta_2\smile\beta_3,\eta\rangle.
\end{align}
Therefore, the problems is reduced to find (co)cycles on $X_0$ such that $\langle \beta_1\smile\beta_2\smile\beta_3,\eta\rangle\ne 0$ as long as $\ell$ is an odd number.
 
To compare how this covering space strategy is implemented in the DLV and GJ constructions, we describe their underlying cubical complexes in the common language of decorated Cayley cubical complexes~\cite{hsieh2025explicitlosslessvertexexpanders}.


Let $G$ be a finite set and let $\{A_1,\ldots,A_t\}$ be a collection of inverse-closed subsets of permutations of $G$, which is called a collection of \emph{cubical generating sets} if $A_iA_j=A_jA_i$ for all $i\neq j$, and $|A_1\cdots A_t|=|A_1|\cdots|A_t|$. The associated decorated Cayley cubical complex is denoted by $X=\operatorname{Cay} \bigl(G;(A_1,\ldots,A_t)\bigr)$. Its vertex set is $X(0)=G\times\mathbb F_2^t.$ A $t$-dimensional cube is a collection $f =\left\{ (f_x,x):x\in\mathbb F_2^t\right\}\subseteq G\times\mathbb F_2^t$ such that, for every $x\in\mathbb F_2^t$ and every $i\in[t]$, $f_x^{-1}f_{x\oplus e_i}\in A_i$. The lower-dimensional faces are obtained by restricting $f$ to subcubes of $\mathbb F_2^t$.

In the DLV construction, the set and generating subsets are given by 
\begin{align}\label{eq:DLV_underlying_set}
    G^{\mathrm{DLV}}
    =
    C_\ell\times V_0^t,
    \qquad
    A_i^{\mathrm{DLV}}
    =
    \left\{
        a_i^1,\ldots,a_i^n
    \right\}
    \subseteq
    \operatorname{Sym}(G^{\mathrm{DLV}}),
\end{align}
where $C_\ell$ is the cyclic group of order $\ell$, $V_0$ is the vertex set of some expander Cayley graph $G_0$, and each $A_i^{\mathrm {DLV}}$ can be identified as the generating set of the corresponding Cayley graph. The additional requirement is that permutations belonging to distinct directions commute
\emph{elementwise}:
\begin{align}\label{eq:DLV_elementwise_commutation}
    a_i^\mu a_j^\nu
    =
    a_j^\nu a_i^\mu
    \qquad
    \text{for all }
    i\neq j,\ 
    a_i^\mu\in A_i^{\mathrm{DLV}},\
    a_j^\nu\in A_j^{\mathrm{DLV}}.
\end{align}

This elementwise commutation condition allows a flexible choice of local codes for defining sheaves on the complex.
Notice that the cyclic group $C_\ell$ induce a group action on $X^{\mathrm{DLV}}$, which is defined to be $\operatorname{Cay} \bigl(G^{\mathrm{DLV}};(A_1^{\mathrm{DLV}},\ldots,A_t^{\mathrm{DLV}})\bigr)$. Then this group action gives a covering map:
\begin{align}
    X^{\mathrm{DLV}}\to X^{\mathrm{DLV}}/C_\ell\coloneq X_0^{\mathrm{DLV}}.
\end{align}
The key property is that $X_0^{\mathrm{DLV}}$ is the $t$-fold HGP of the double cover of $G_0$, and we may find $t$ many two-way product-expanding local codes, each of them generated a sheaf defining a Sipser-Spielman code on the double cover of $G_0$. Then the HGP structure induce an external tensor product of sheaves on $X_0^{{\mathrm{DLV}}}$. The sheaf on $X^{{\mathrm{DLV}}}$ is exactly the pullback sheaf of this sheaf on $X^{\mathrm{DLV}}_0$.

This HGP structure of $X_0^{{\mathrm{DLV}}}$ makes it particularly easy to calculate cup products. Consider $t=3$ for example. We choose three 1-cochains $\beta_1,\beta_2,\beta_3$, where $\beta_i$ is supported only on edges of type $i$. Each cochain is obtained by choosing an edge cochain in its active direction and periodically repeating suitable constant vector in the other two directions. This guarantees that these cochains are cocycles, and their cup product supported on a single cube. Then by choosing appropriate punctured Reed--Solomon local codes gives a 3-cycle $\eta$ nonvanishing on every cube, which gives $ \left\langle\beta_1\smile\beta_2\smile\beta_3,\eta\right\rangle\neq 0$.

Good qLTCs in this framework require both global spectral expansion of the underlying complex and local coboundary expansion of the associated sheaf, with the latter relying on the acyclicity of its local cochain complexes. In the DLV construction, local acyclicity follows automatically from the elementwise commutativity of the directional permutations. However, enforcing this commutativity through the abelian $C_\ell$-lift limits the global spectral expansion and leads to a polylogarithmic loss in distance and soundness.Ramanujan cubical complexes provide stronger global expansion, but their directional permutation sets commute only setwise, so local acyclicity now depends crucially on the choice of local codes.

Panteleev and Kalachev therefore proposed using projective Reed--Solomon codes, whose projective-linear symmetries match the local permutation actions of Ramanujan cubical complexes \cite{Panteleev2024}. The remaining obstacle was two-way product expansion: the straightforward choice of using the same evaluation set in every direction does not provide product expansion simultaneously for the local codes and their duals. Gay and Jeronimo overcome this obstacle by using projective Reed--Solomon codes of different but comparable sizes in different directions and proving a new interpolation theorem for their products, thereby obtaining asymptotically good qLTCs~\cite{GJ2026}.

We now describe the GJ construction using exactly the same decorated Cayley cubical complex language. For each $i\in[t]$, let $T_i$ be the $(q_i+1)$-regular Bruhat--Tits tree associated with the $i$-th split place. Combinatorially, $T_i$ is an infinite bipartite tree whose $q_i+1$ edges incident to each vertex are naturally indexed by $\mathbb P^1(\mathbb F_{q_i})$. For every point of $\mathbb P^1(\mathbb F_{q_i})$, the arithmetic construction provides a distinguished transformation that moves the base vertex to the corresponding neighbor in the $i$-th tree while leaving its other coordinates unchanged. The collection of these $q_i+1$ transformations is denoted by $A_i^{\mathrm{GJ}}$. Thus $A_i^{\mathrm{GJ}}$ is precisely the set of all elementary one-step moves in the $i$-th direction, rather than an abstractly chosen generating set. The group $\Lambda^{\mathrm{GJ}}$ is defined to be the group generated by these elementary moves in all $t$ directions. Then the universal covering space of the Ramanujan cubical complex \cite{RSV2019} can be identified as
\begin{align}\label{eq:GJ_universal_decorated_cayley}
    \widetilde X^{\mathrm{GJ}}
    =
    \operatorname{Cay}\bigl(\Lambda^{\mathrm{GJ}};(A_1^{\mathrm{GJ}},\ldots,A_t^{\mathrm{GJ}})\bigr)
    \cong
    T_1\times\cdots\times T_t.
\end{align}

Each tree $T_i$ is bipartite. Fix one of its two vertex classes as type zero and the other as type one. A vertex of $\widetilde X^{\mathrm{GJ}}=T_1\times\cdots\times T_t$ therefore carries a type in $\mathbb F_2^t$, recording the bipartition class of each of its $t$ coordinates. Moving along an edge in direction $i$ flips the $i$th entry of the type and leaves all other entries unchanged. A subgroup of $\Lambda^{\mathrm{GJ}}$ is called type-preserving if each of its elements maps every vertex to a vertex of the same type.

To obtain a finite complex, let
$\Gamma^{\mathrm{GJ}}\leq\Lambda^{\mathrm{GJ}}$ be a finite-index
type-preserving subgroup and define the finite set
\begin{align}\label{eq:GJ_finite_state_set}
    G^{\mathrm{GJ}}
    =
    \Gamma^{\mathrm{GJ}}\backslash\Lambda^{\mathrm{GJ}}.
\end{align}
Every $a\in A_i^{\mathrm{GJ}}$ induces a permutation of this set by right multiplication, $\Gamma^{\mathrm{GJ}}g\mapsto\Gamma^{\mathrm{GJ}}ga$. We use the same notation $A_i^{\mathrm{GJ}}$ for the resulting set of permutations. The finite GJ complex is therefore
\begin{align}\label{eq:GJ_decorated_cayley_complex}
    X^{\mathrm{GJ}}
    =
    \operatorname{Cay}\bigl(G^{\mathrm{GJ}};(A_1^{\mathrm{GJ}},\ldots,A_t^{\mathrm{GJ}})\bigr)
    \cong
    \Gamma^{\mathrm{GJ}}\backslash\widetilde X^{\mathrm{GJ}}.
\end{align}

In particular, the arithmetic construction provides a subgroup $\Gamma_0^{\mathrm{GJ}}\leq\Lambda^{\mathrm{GJ}}$ such that two vertices of $\widetilde X^{\mathrm{GJ}}$ are related by an element of $\Gamma_0^{\mathrm{GJ}}$ if and only if they have the same type. We take as our base complex
\begin{align}\label{eq:GJ_base_complex}
    X_0^{\mathrm{GJ}}
    =
    \Gamma_0^{\mathrm{GJ}}\backslash\widetilde X^{\mathrm{GJ}}.
\end{align}
Thus $X_0^{\mathrm{GJ}}$ has exactly one vertex of each type and remains fixed throughout the code family. We choose $\Gamma^{\mathrm{GJ}}\leq\Gamma_0^{\mathrm{GJ}}$, so the subgroup inclusion induces a covering map
\begin{align}\label{eq:GJ_covering}
    P^{\mathrm{GJ}}:X^{\mathrm{GJ}}\longrightarrow X_0^{\mathrm{GJ}}.
\end{align}
The sheet of the covering map is exactly $[\Gamma_0^{\mathrm{GJ}}:\Gamma^{\mathrm{GJ}}]$, which is chosen to be odd throughout our construction.

The essential distinction is that the GJ generating sets commute only as sets. More precisely, for every
$a_i\in A_i^{\mathrm{GJ}}$ and $a_j\in A_j^{\mathrm{GJ}}$, there are unique $a_i'\in A_i^{\mathrm{GJ}}$ and $a_j'\in A_j^{\mathrm{GJ}}$ such that $a_i a_j=a_j'a_i'$.

The sheaves in the DLV and GJ constructions are defined by essentially the same local rule. In each direction, we choose a local code on the set of edges incident to a vertex. Informally, the stalk assigned to a face is the tensor product of the local code spaces associated with the directions transverse to that face, while the restriction maps are obtained by evaluating the corresponding local-code factors at the labels of the added edges.

For the DLV complex, permutations in different directions commute elementwise, so opposite edges of every square carry exactly the same labels. The two ways of restricting across a square therefore agree automatically. In the GJ complex, opposite edges may instead carry different labels, related by a projective-linear change of coordinates. The crucial observation is that projective Reed--Solomon codes are equivariant under these projective-linear transformations: relabeling the evaluation points induces a compatible transformation of the local code. Consequently, although the edge labels change around a square, the corresponding restriction maps still agree. Therefore, the restriction maps obtained from the projective Reed--Solomon codes are independent of the order in which one passes across the faces of a cube, and hence define a sheaf on $X_0^{\mathrm{GJ}}$. The sheaf on $X^{\mathrm{GJ}}$ is then obtained simply by pulling back this sheaf along the covering map $P^{\mathrm{GJ}}$.

We now describe the nontrivial cup product in the similar three-dimensional degree-one case. As in the DLV construction, we choose three cocycles $\beta_1,\beta_2,\beta_3$, where $\beta_i$ is supported only on edges of type $i$. The absence of an HGP decomposition does not prevent such polarized supports. It only means that the value of $\beta_i$ cannot be copied unchanged from one edge to another. Instead, we start from one projective Reed--Solomon word and propagate it to all edges of type $i$ using the projective-linear identifications of the sheaf. The compatibility of these identifications around every square ensures that the resulting cochain is a cocycle. Similarly, the same propagation method produces a three-cycle $\eta$ whose pairing with the cup product of the three cocycles is nonzero. Still, this idea can be generalized to obtain multi-controlled-$Z$ gates.

{We further strengthen the nontriviality to a polynomial subrank, giving polynomially many independent logical gates. We explain the idea for $\CCZ$. The key is to generate many cocycles from one by applying the symmetries of the covering space. We construct cocycles on the larger covers whose transformed copies span large irreducible spaces. Irreducibility means that the symmetry transforms of any nonzero vector span the whole space. Consequently, if a nonzero cocycle in this space were a coboundary, the covering symmetries would force the entire space to consist of coboundaries. Our nonzero cup product rules this out. Thus a single nonvanishing statement makes a whole large space survive in cohomology. We choose the three spaces so that symmetry also determines their cup-product tensor up to a scalar. Its elementary building block is matrix evaluation: a row and a column select an entry of a matrix. Restricting the matrix inputs to diagonal entries then gives independent logical $\CCZ$ gates. The matrix sizes grow as fixed powers of the covering degree, yielding polynomial subrank while keeping the local codes fixed.}

\section{Preliminaries}\label{sec:preliminaries}

We now introduce sheaf codes and logical multi-controlled-$Z$ gates. Throughout, $\mathbb F_q$ is a finite field of characteristic two, and all (co)chain and sheaf coefficient spaces are finite dimensional over $\mathbb F_q$.

\subsection{Chain complexes and CSS codes}\label{subsec:prelim-complexes}

A \emph{cochain complex} $C^\bullet$ consists of a sequence of finite-dimensional vector spaces over $\mathbb F_q$ and linear
\emph{coboundary maps}
\begin{equation}
	\cdots\longrightarrow C^{j-1}
	\xrightarrow{\delta^{j-1}}C^j
	\xrightarrow{\delta^j}C^{j+1}\longrightarrow\cdots,
	\qquad \delta^j\delta^{j-1}=0.
\end{equation}
Its dual \emph{chain complex} has spaces $C_j=\operatorname{Hom}_{\mathbb F_q}(C^j,\mathbb F_q)$ and \emph{boundary maps} $\partial_{j+1}=(\delta^j)^*:C_{j+1}\to C_j$. We use the evaluation pairing with the cochain first:
\begin{equation}
	\langle\alpha,\xi\rangle=\xi(\alpha),\qquad
	\langle\delta^j\alpha,\xi\rangle
	=\langle\alpha,\partial_{j+1}\xi\rangle.
	\label{eq:prelim-evaluation}
\end{equation}
Fix bases of the cochain spaces and dual bases of the chain spaces. In these bases, the matrix of $\partial_{j+1}$ is $(\delta^j)^T$.

The spaces of \emph{cocycles}, \emph{coboundaries},
\emph{cycles}, and \emph{boundaries} are
\begin{align}
	Z^j&=\ker\delta^j,& B^j&=\im\delta^{j-1},&
	Z_j&=\ker\partial_j,& B_j&=\im\partial_{j+1}.
\end{align}
The corresponding cohomology and homology are
\begin{equation}
	H^j(C^\bullet)=Z^j/B^j,\qquad
	H_j(C_\bullet)=Z_j/B_j.
\end{equation}
The pairing in Eq.~\eqref{eq:prelim-evaluation} induces the evaluation pairing between these two quotient spaces. In particular, cycles are elements of $C_j$, and cocycles are elements of $C^j$.

Write $q=2^b$. The field trace is the $\mathbb F_2$-linear map
\begin{equation}
	\tr_{\mathbb F_q/\mathbb F_2}:\mathbb F_q\longrightarrow\mathbb F_2,
	\qquad
	\tr_{\mathbb F_q/\mathbb F_2}(a)
	=\sum_{s=0}^{b-1}a^{2^s}.
\end{equation}
Choose an $\mathbb F_2$-basis $\beta=(\beta_1,\ldots,\beta_b)$ of $\mathbb F_q$ and its \emph{trace-dual basis} $\beta^\vee$ so that $\tr_{\mathbb F_q/\mathbb F_2}(\beta_s\beta^\vee_u)$ is one for $s=u$ and zero otherwise. For a vector $x$ over $\mathbb F_q$, let $[x]_\beta$ denote its binary coordinate vector. For a matrix $M$ over $\mathbb F_q$, let $R_\beta(M)$ be the binary matrix characterized by $[Mx]_\beta=R_\beta(M)[x]_\beta$. Then
\begin{equation}
	[x]_\beta\cdot[z]_{\beta^\vee}=\tr_{\mathbb F_q/\mathbb F_2}(z^Tx),\qquad
	R_\beta(M)^T=R_{\beta^\vee}(M^T).
	\label{eq:prelim-trace-dual}
\end{equation}
We use $\beta$ for cochains and $\beta^\vee$ for chains. This makes binary duality agree with the chain--cochain pairing. Thus restriction of scalars gives the binary complexes used below.

\begin{definition}\label{def:prelim-css}
	The binary \emph{Calderbank–Shor–Steane (CSS) code} associated with an $\mathbb F_q$-cochain complex
	$C^{j-1}\xrightarrow{\delta^{j-1}}C^j
	\xrightarrow{\delta^j}C^{j+1}$ is defined by the parity check matrices over $\mathbb F_2$:
	\begin{equation}
		H_Z=R_\beta(\delta^j),\qquad
		H_X=R_\beta(\delta^{j-1})^T.
		\label{eq:prelim-css-matrices}
	\end{equation}
	Let $N=\dim_{\mathbb F_2}C^j$. Its code space $\mathcal Q\subseteq(\mathbb C^2)^{\otimes N}$ is the common $+1$ eigenspace of the Pauli operators $X(h)$ for the rows $h$ of $H_X$ and $Z(h)$ for the rows $h$ of $H_Z$. Here $X(h)$ applies Pauli $X$ at the positions where $h$ is one, and $Z(h)$ is defined in the same way. The relation $H_ZH_X^T=0$ ensures that these operators commute. The $X$-type and $Z$-type logical Pauli operators are represented by cohomology and homology classes, respectively.
\end{definition}

The number of encoded qubits is $k=\dim_{\mathbb F_2}H^j(C^\bullet)$. For a cocycle $x\in Z^j$, the encoded computational basis state is
\begin{equation}
	|[x]\rangle
	=\frac{1}{\sqrt{|B^j|}}\sum_{u\in B^j}|[x+u]_\beta\rangle,
	\qquad [x]\in H^j(C^\bullet).
	\label{eq:prelim-css-basis}
\end{equation}
These states range over all cohomology classes. The \emph{$X$-distance} and \emph{$Z$-distance} are
\begin{equation}
	d_X=\min_{x\in Z^j\setminus B^j}|[x]_\beta|,
	\qquad
	d_Z=\min_{z\in Z_j\setminus B_j}|[z]_{\beta^\vee}|.
	\label{eq:prelim-distances}
\end{equation}
Here $|\cdot|$ is binary Hamming weight. We write $[\![N,k,d]\!]$ for the code parameters, where $d=\min\{d_X,d_Z\}$. A family is \emph{quantum low-density parity check (qLDPC) } if the row and column weights of its specified check matrices are bounded independently of $N$.

\subsection{Cocycle expansion and quantum local testability}
\label{sec:prelim-qltc}

We now regard the cochain complex and its dual as binary vector spaces, using the bases fixed above. In this subsection, $|\cdot|$ and $\operatorname{dist}$ denote binary Hamming weight and distance. We use the notions of cocycle and cycle expansion from Ref.~\cite{Dinur2024sheaf}.

\begin{definition}[Cocycle and cycle expansion] \label{def:prelim-expansion}
	At the middle degree $j$ of the chain complex $C^\bullet$, define
	\begin{align}
		\epsilon_\delta(j)
		&=\min_{x\in C^j\setminus\ker\delta^j}
		\frac{|\delta^j x|}{\operatorname{dist}(x,\ker\delta^j)},
		\label{eq:prelim-cocycle-expansion}\\
		\epsilon_\partial(j)
		&=\min_{z\in C_j\setminus\ker\partial_j}
		\frac{|\partial_j z|}{\operatorname{dist}(z,\ker\partial_j)}.
		\label{eq:prelim-cycle-expansion}
	\end{align}
	These are the \emph{cocycle expansion} and \emph{cycle expansion}, respectively. 
\end{definition}

Cocycle expansion measures distance to all cocycles. Coboundary expansion instead measures distance to $B^j=\operatorname{im}\delta^{j-1}$:
\begin{equation}\label{eq:prelim-coboundary-expansion}
	\beta_\delta(j)
	=\min_{x\in C^j\setminus B^j}
	\frac{|\delta^j x|}{\operatorname{dist}(x,B^j)}.
\end{equation}
Thus $\beta_\delta(j)\leq\epsilon_\delta(j)$, and $\beta_\delta(j)=0$ whenever $H^j\ne0$. The soundness of a CSS code is determined by cocycle and cycle expansion, with distance to the full kernels in Eq.~\eqref{eq:prelim-cocycle-expansion} and Eq.~\eqref{eq:prelim-cycle-expansion}.

Let $M_X,M_Z>0$ be the numbers of rows in the specified lists $H_X,H_Z$, retaining dependent and zero rows. Since
$H_Z=\delta^j$ and $H_X=\partial_j$ in the chosen binary bases, the normalized \emph{soundness} constants are
\begin{align}
	\rho_X
	&=\min_{x\in C^j\setminus\ker\delta^j}
	\frac{|\delta^jx|}{M_Z}
	\frac{N}{\operatorname{dist}(x,\ker\delta^j)}
	=\frac{N}{M_Z}\epsilon_\delta(j),\label{eq:prelim-soundness-normalization}\\
	\rho_Z
	&=\min_{z\in C_j\setminus\ker\partial_j}
	\frac{|\partial_jz|}{M_X}
	\frac{N}{\operatorname{dist}(z,\ker\partial_j)}
	=\frac{N}{M_X}\epsilon_\partial(j).
\end{align}
Thus soundness is the rescaling of expansion by the number of physical qubits and the number of checks. We put $\rho=\min\{\rho_X,\rho_Z\}$. For every finite $\rho_0>0$ with $\rho_0\leq\rho$, the following inequalities hold for all $x,z\in\mathbb F_2^N$:
\begin{equation}\label{eq:prelim-css-soundness}
	\frac{|H_Zx|}{M_Z}\geq
	\rho_0\frac{\operatorname{dist}(x,\ker H_Z)}{N},
	\qquad
	\frac{|H_Xz|}{M_X}\geq
	\rho_0\frac{\operatorname{dist}(z,\ker H_X)}{N}.
\end{equation}
The subscripts refer to the errors being tested: $H_Z$ detects $X$ errors, and $H_X$ detects $Z$ errors.

\begin{definition}[Quantum local testability] \label{def:prelim-qltc}
	A CSS code is \emph{locally testable} with soundness at least a finite constant $\rho_0>0$ if both inequalities in Eq.~\eqref{eq:prelim-css-soundness} hold.
\end{definition}
This CSS definition is equivalent to the usual quantum definition, up to the normalization of the tester~\cite{AharonovEldar2015QLTC,Dinur2024sheaf}. For the families considered below, both numbers of listed checks are $\Theta(N)$. Therefore, a constant lower bound on $\rho$ also gives constant quantum soundness when the tester samples uniformly from the combined list of checks.

A family of quantum locally testable codes with growing blocklength $N$ is called \emph{asymptotically good} 
if it is a qLDPC family with parameters satisfying
\[
k=\Omega(N),\qquad d=\Omega(N),\qquad \rho=\Omega(1).
\]
For $r$ participating blocks, the combined code is $\bigotimes_{a=1}^r\mathcal Q_a$. If block $a$ has $N_a$ physical qubits and encodes $k_a$ qubits, then $N=\sum_aN_a$ and $k=\sum_ak_a$. In the construction below, one block supplies enough logical qubits to ensure a positive total rate, while distance and soundness bounds are established for every block.

\subsection{Cell complexes and sheaves}\label{subsec:prelim-cells}

We consider finite regular cell complexes. Write $X(j)$ for the set of $j$-cells of $X$. For cells $\sigma,\tau\in X$, the relation $\sigma\leq\tau$ means that $\sigma$ is a face of $\tau$, and $\sigma\lessdot\tau$ means that it is a face of codimension one. A family of cell complexes is \emph{sparse} if each cell has a bounded number of faces and cofaces, with bounds independent of the size of the complex.

The complexes in our construction are cubical: the closure of each cell is a cube, and its attaching maps preserve faces. Their dimension is denoted by $t$. Coordinate directions are labeled by $[t]=\{1,\ldots,t\}$, and
\begin{equation}
	\type(\sigma)\subseteq[t],\qquad
	|\type(\sigma)|=\dim\sigma
\end{equation}
is the set of edge directions of $\sigma$. In every remaining direction a cell has an endpoint labeled zero or one. The direction labels and endpoint labels are compatible under passage to faces. They specify the factors in the local sheaf construction below and the faces used to compute cup products. For a cell $\sigma$, write $X_{\leq\sigma}$ for its faces and $X_{\geq\sigma}$ for the cells containing it, called its upward star.

\begin{definition}
	A \emph{sheaf} $\mathcal F$ on the cell poset of $X$ assigns a vector space
	$\mathcal F_\sigma$, called the stalk, to each cell $\sigma$, and a linear restriction map
	\begin{equation}
		\mathcal F_{\sigma,\tau}:\mathcal F_\sigma\longrightarrow
		\mathcal F_\tau
		\qquad(\sigma\leq\tau),
	\end{equation}
	such that $\mathcal F_{\sigma,\sigma}=\id$ and
	\begin{equation}
		\mathcal F_{\tau,\pi}\circ\mathcal F_{\sigma,\tau}
		=\mathcal F_{\sigma,\pi}
		\qquad(\sigma\leq\tau\leq\pi).
	\end{equation}
\end{definition}

A morphism $\mu:\mathcal F\to\mathcal G$ consists of linear maps $\mu_\sigma:\mathcal F_\sigma\to\mathcal G_\sigma$ satisfying
\begin{equation}
	\mathcal G_{\sigma,\tau}\mu_\sigma
	=\mu_\tau\mathcal F_{\sigma,\tau}.
	\label{eq:prelim-morphism}
\end{equation}
The tensor product sheaf has stalks and restrictions
\begin{equation}
	(\mathcal F\otimes\mathcal G)_\sigma
	=\mathcal F_\sigma\otimes\mathcal G_\sigma,
	\qquad
	(\mathcal F\otimes\mathcal G)_{\sigma,\tau}
	=\mathcal F_{\sigma,\tau}\otimes\mathcal G_{\sigma,\tau}.
\end{equation}
If $f:Y\to X$ is an order-preserving map of cell posets, the pullback sheaf $f^*\mathcal F$ is given by
\begin{equation}
	(f^*\mathcal F)_\sigma=\mathcal F_{f(\sigma)},\qquad
	(f^*\mathcal F)_{\sigma,\tau}
	=\mathcal F_{f(\sigma),f(\tau)}.
	\label{eq:prelim-pullback-sheaf}
\end{equation}

For sheaves $\mathcal F$ on $X$ and $\mathcal G$ on $Y$, their external tensor product on $X\times Y$ has stalk $\mathcal F_\sigma\otimes\mathcal G_\tau$ at $(\sigma,\tau)$ and restriction maps given by tensor products of the respective restriction maps~\cite{LLL2026nontrivial}.

For a sheaf $\mathcal F$ on $X$, define
\begin{equation}
	C^j(X,\mathcal F)=\bigoplus_{\sigma\in X(j)}\mathcal F_\sigma,
	\qquad
	C_j(X,\mathcal F)
	=\operatorname{Hom}_{\mathbb F_q}(C^j(X,\mathcal F),\mathbb F_q).
	\label{eq:prelim-sheaf-cochains}
\end{equation}
Thus a cochain assigns a vector to each $j$-cell, and a chain assigns a covector in $\mathcal F_\sigma^*$ to each $j$-cell. In characteristic two, the coboundary and boundary operators are
\begin{align}
	(\delta^j\alpha)(\tau)
	&=\sum_{\sigma\lessdot\tau}
	\mathcal F_{\sigma,\tau}\alpha(\sigma),
	\quad \tau\in X(j+1),\label{eq:prelim-sheaf-delta}\\
	(\partial_{j+1}\xi)(\sigma)
	&=\sum_{\tau\gtrdot\sigma}
	\mathcal F_{\sigma,\tau}^*\xi(\tau),
	\quad \sigma\in X(j).\label{eq:prelim-sheaf-boundary}
\end{align}
These give a cochain complex and its dual chain complex. For the associated code, choose a basis in each stalk, concatenate these bases in the direct sum, and use the dual basis on chains. The pairing is
\begin{equation}
	\langle\alpha,\xi\rangle
	=\sum_{\sigma\in X(j)}\xi(\sigma)(\alpha(\sigma)).
\end{equation}
We write $H^j(X,\mathcal F)$ and $H_j(X,\mathcal F)$ for the cohomology and homology. A sheaf morphism induces the cochain map
$\mu_*:C^j(X,\mathcal F)\to C^j(X,\mathcal G)$ given by
$(\mu_*\alpha)(\sigma)=\mu_\sigma\alpha(\sigma)$.

The {sheaf code in degree $j$} is the CSS code obtained from the three terms $C^{j-1}(X,\mathcal F)$, $C^j(X,\mathcal F)$, and $C^{j+1}(X,\mathcal F)$ by Definition~\ref{def:prelim-css}. For sparse complexes, fixed $q$, and bounded stalk dimensions, these codes are qLDPC. Local testability and distance require additional conditions on the complexes and local codes. Those conditions are stated with the construction theorems later in the paper. The quantum codes use degree $j=2$ in the present paper. The abelian lifts used for almost-good qLTCs in Ref.~\cite{Dinur2024sheaf} provide an earlier application of this sheaf code framework. In this work, we use arithmetic cubical complexes in the good qLTC construction \cite{GJ2026}.

Let $\mathcal C_i\subseteq\mathbb F_q^{A_i}$ be a local linear code, where $|A_i|=n_i$ and $\dim\mathcal C_i=m_i$, for $i\in[t]$. Thus $n_i$ and $m_i$ denote local length and dimension; $N$ continues to denote the number of physical qubits. Choose an injective linear map
\begin{equation}
	\ev_i:\mathbb F_q^{m_i}\longrightarrow\mathbb F_q^{A_i},
	\qquad \im\ev_i=\mathcal C_i,
\end{equation}
and write $\ev_{i,a}(u)=(\ev_i(u))_a$ for its coordinate at $a\in A_i$. In matrix notation, $h_i=\ev_i^T:\mathbb F_q^{A_i}\to\mathbb F_q^{m_i}$ is a parity check matrix for $\mathcal C_i^\perp$.

Here is the basic sheaf construction on a graph. At each vertex of an $n_i$-regular graph with its incident edges labeled by $A_i$, put the space $\mathbb F_q^{m_i}$, and at each edge put $\mathbb F_q$. The restriction from a vertex to its incident edge labeled $a$ is $\ev_{i,a}$. On a product of $t$ such graphs, take the external tensor product of these sheaves. For a cell $\sigma$ of type $S$, its stalk is
\begin{equation}
	\mathcal F_\sigma=\bigotimes_{i\notin S}\mathbb F_q^{m_i}.
	\label{eq:prelim-product-stalk}
\end{equation}
Passing to a coface in direction $i$ applies the appropriate $\ev_{i,a}$ to that factor. This is the local construction underlying the sheaf codes. On the cubical complexes used here, the restrictions are specified in local charts. The restriction maps agree under these changes of local coordinates. Additional coefficient spaces, when present, are included in the stalks and in these restriction maps.

\subsection{Cohomological invariants and logical gates}
\label{subsec:prelim-logical-gates}

Fix an integer $r\geq2$. For block $a\in[r]$, let $C^{(a)}$ be the middle space of a cochain complex, with cocycle and coboundary spaces $Z^{(a)}$ and $B^{(a)}$. Let $\mathcal Q_a$ be the CSS code of Definition~\ref{def:prelim-css}, and put $H^{(a)}=Z^{(a)}/B^{(a)}$.

\begin{definition}[Cohomological invariant form]
	An $\mathbb F_q$-multilinear map
	\begin{equation}
		T:C^{(1)}\times\cdots\times C^{(r)}\longrightarrow\mathbb F_q
	\end{equation}
	is a \emph{cohomological invariant form} if
	\begin{equation}
		T(x_1+u_1,\ldots,x_r+u_r)=T(x_1,\ldots,x_r)
		\quad
		(x_a\in Z^{(a)},\ u_a\in B^{(a)}).
		\label{eq:prelim-invariant}
	\end{equation}
	It therefore induces an $r$-linear form on
	$H^{(1)}\times\cdots\times H^{(r)}$.
\end{definition}

In the next section, we construct cohomological invariant forms using cup products and the evaluation pairing in Eq.~\eqref{eq:prelim-evaluation}.

The $r$-qubit gate $\mCZ$ is defined by
\begin{equation}
	\mCZ\,|b_1,\ldots,b_r\rangle
	=(-1)^{b_1\cdots b_r}|b_1,\ldots,b_r\rangle,
	\qquad b_a\in\mathbb F_2.
	\label{eq:prelim-controlled-z}
\end{equation}
Thus $r=2$ gives $\CZ$ and $r=3$ gives $\CCZ$. Using the field trace $\tr_{\mathbb F_q/\mathbb F_2}$ defined above, a cohomological invariant form $T$ defines the physical diagonal unitary
\begin{equation}
	U_T|[x_1]_\beta,\ldots,[x_r]_\beta\rangle
	=(-1)^{\tr_{\mathbb F_q/\mathbb F_2}\bigl(T(x_1,\ldots,x_r)\bigr)}
	|[x_1]_\beta,\ldots,[x_r]_\beta\rangle.
	\label{eq:prelim-physical-gate}
\end{equation}
By Eq.~\eqref{eq:prelim-invariant}, its phase is constant on each product of coboundary cosets in Eq.~\eqref{eq:prelim-css-basis}. It therefore preserves $\mathcal Q_1\otimes\cdots\otimes\mathcal Q_r$ and acts on all encoded computational basis states by
\begin{equation}
	U_T|[x_1],\ldots,[x_r]\rangle
	=(-1)^{\tr_{\mathbb F_q/\mathbb F_2}\bigl(T(x_1,\ldots,x_r)\bigr)}|[x_1],\ldots,[x_r]\rangle
	\qquad(x_a\in Z^{(a)}).
	\label{eq:prelim-logical-action}
\end{equation}

If the induced logical form is nonzero, choose cocycles for which $v=T(x_1,\ldots,x_r)\ne0$. The field trace is surjective, so choose $\eta\in\mathbb F_q$ with $\tr_{\mathbb F_q/\mathbb F_2}(\eta)=1$ and replace $T$ by $(\eta/v)T$. The resulting binary logical phase takes the value one on the chosen tuple. This is the normalization used to establish a nontrivial logical action.

Write $x_{a,s}=([x_a]_\beta)_s$ for the $s$-th binary coordinate of $x_a$. In these coordinates, $\tr_{\mathbb F_q/\mathbb F_2}\circ T$ has the form
\begin{equation}
	\tr_{\mathbb F_q/\mathbb F_2}\bigl(T(x_1,\ldots,x_r)\bigr)
	=\sum_{s_1=1}^{N_1}\cdots\sum_{s_r=1}^{N_r}
	t_{s_1\cdots s_r}
	x_{1,s_1}\cdots x_{r,s_r},
	\qquad t_{s_1\cdots s_r}\in\mathbb F_2,
	\label{eq:prelim-gate-tensor}
\end{equation}
where $N_a$ is the number of physical qubits in block $a$ and $N=\sum_{a=1}^rN_a$ is the total system size. Each nonzero term gives one physical $\mCZ$ gate, with one qubit from each block. If every coordinate occurs in a bounded number of terms, these gates form a \emph{constant-depth circuit}. A \emph{transversal} implementation has pairwise disjoint physical gate supports. When each qubit occurs in only boundedly many gates, a repetition encoding of fixed length gives such an implementation and changes the code parameters by at most constant factors~\cite{Nguyen2025CCZ, Golowich_Lin2024}.


\section{Invariant forms on good quantum locally testable codes} \label{sec:good-qltc-cup}

We present more details about the global geometry and local codes of good qLTCs~\cite{GJ2026}. Cup products and covering maps~\cite{Li2025Poincare,LSWLL2026Theory,LLL2026nontrivial} remain well-defined in this setting. We use them in Section~\ref{sec:sources} to construct an invariant form that induces a nontrivial logical $\mCZ$ gate.

\subsection{Cubical complexes and good qLTCs} \label{subsec:construction-geometry}

We use finite quotients of products of trees. For $i\in[t]$, let $L_i$ be a nonarchimedean local field with valuation ring $\mathcal O_i$, uniformizer $\varpi_i$, and residue field $\mathbb k_i$ of size $Q_i$. Its Bruhat--Tits tree $T_i$ has vertices $[\mathcal M]$, the homothety classes of rank-two $\mathcal O_i$-lattices in $L_i^2$, and edges represented by $\varpi_i\mathcal M\subsetneq\mathcal M'\subsetneq\mathcal M$. The neighbors of a vertex are indexed by $\mathbb P^1(\mathbb k_i)$, so every vertex has degree $n_i=Q_i+1$. The endpoint type of $[g\mathcal O_i^2]$ is $v_i(\det g)\bmod2$, where $v_i$ is the normalized valuation. Orient edges from type zero to type one.

Let $\Gamma\leq\prod_iG_i$, where $G_i=\mathrm{SL}_2(L_i)/\{\pm I_2\}$, act freely and cocompactly on $\mathscr X=\prod_iT_i$, and put $X=\Gamma\backslash\mathscr X$. A cell is represented by a product of edges and vertices. Its edge directions form $\type(\sigma)$, and its other directions have endpoint types $b_i(\sigma)\in\{0,1\}$. Write $X(I)$ for cells of type $I$. We require the quotient map to be injective on every radius-$2t$ cubical ball in the product of trees. Here a cubical ball consists of the cubes whose vertices lie within the given radius of its center, measured in the one-skeleton with edges of length one.Then cube closures meet in a face or are disjoint, all upward stars lift, and any choice of incident edges in distinct directions belongs to a unique cube of those directions. We call this $(n_1,\ldots,n_t)$-regularity. The arithmetic quotients used below have these properties \cite{RSV2019}.

For $I\subsetneq[t]$, $j\notin I$, and fixed endpoint types $\mathbf b$ outside $I\cup\{j\}$, form a graph whose vertices are $\sigma\in X(I)$ with these types. Each $(I\cup\{j\})$-cell gives an edge between its opposite type-$I$ facets, counting multiplicities. This \emph{parallel-face graph} is $n_j$-regular and bipartite. Let $M_{I,j,\mathbf b}$ be its normalized adjacency operator on real-valued functions on the vertices, and put $\varepsilon(\sigma)=(-1)^{b_j(\sigma)}$. The counting inner product is $\langle f,g\rangle=\sum_\sigma f(\sigma)g(\sigma)$, and $\mathbf1$ denotes the constant function with value one.

\begin{definition}\label{def:construction-geometric-expansion}
The complex $X$ is \emph{$\lambda$-expanding} if every parallel-face graph is connected and, in the counting inner product,
\begin{equation}\label{eq:construction-geometric-expansion}
 \|M_{I,j,\mathbf b}f\|_2\leq\lambda\|f\|_2
 \quad(f\perp\mathbf1,\varepsilon).
\end{equation}
\end{definition}
The classical Ramanujan condition for adjacency in each direction concerns only the vertex graphs, $I=\varnothing$; here all face types are required. The connectedness and the required bounds for the tower is also well-known, e.g., \cite{RSV2019,GJ2026}:
\begin{equation}\label{eq:construction-ramanujan-bound}
 \lambda=\max_i\frac{2\sqrt{Q_i}}{Q_i+1}
 \leq\frac2{\sqrt{\min_iQ_i}}.
\end{equation}

Fix $t\geq4$, $0<\eta_{\mathrm{prod}}\leq1$, $\Lambda\geq1$, a characteristic-two field $\mathbb F_q$, and local dimensions $0<m_{i,e}<n_i$ for $e=0,1$, with $\max_i n_i/\min_i n_i\leq\Lambda$. Let $(X_\nu,\mathcal F_\nu)$ have these fixed local data and $|X_\nu(0)|\to\infty$. Assume each $X_\nu$ is regular and $\lambda$-expanding, and each sheaf has compatible local codes in the sense of Definition~\ref{def:construction-star-charts}. In each compatible local chart, every nonempty tuple of encoder images, and every such tuple of coordinate duals, must have product expansion at least $\eta_{\mathrm{prod}}$ over every extension of the coefficient field (Definition~\ref{def:construction-product-expansion}). Finally, for some $I_{\mathrm{low}}\subseteq[t]$ of size two and $0<\epsilon_{\mathrm{rate}}\leq(8t2^t)^{-1}$, require at both endpoints
\begin{equation}\label{eq:construction-rate-conditions}
 \frac{m_{i,e}}{n_i}\leq\epsilon_{\mathrm{rate}} (i\in I_{\mathrm{low}}),
 \qquad
 \frac{m_{i,e}}{n_i}\geq1-\epsilon_{\mathrm{rate}}\ (i\notin I_{\mathrm{low}}).
\end{equation}

\begin{theorem}[\cite{GJ2026}] \label{thm:construction-good-qltc}
There is $\lambda_*(t,\eta_{\mathrm{prod}},\Lambda)>0$, independent of the local lengths, such that the family above gives good binary qLTCs in degree two whenever $\lambda<\min\{1,\lambda_*\}$. Their lengths are $N_\nu=(\log_2q)\dim_{\mathbb F_q}C^2(X_\nu,\mathcal F_\nu)$; rate, relative $X$- and $Z$-distance, and both soundness constants are bounded below by positive constants, while check weights and qubit incidences are bounded above, uniformly in $\nu$.
\end{theorem}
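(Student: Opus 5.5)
This theorem is cited from \cite{GJ2026}, so we will not reprove it from scratch. The plan is to check that the hypotheses listed above are exactly those of the Gay--Jeronimo framework, and then to assemble their local-to-global argument in three steps.

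\textbf{Step 1: local coboundary expansion.} The product expansion of every nonempty tuple of encoder images and coordinate duals, over every field extension, with constant $\eta_{\mathrm{prod}}$, implies local acyclicity. More precisely, it gives coboundary expansion of the cochain complex on the upward star of each cell, uniformly in the local lengths; this is the two-way product expansion input. Compatibility of the local charts (Definition~\ref{def:construction-star-charts}) ensures these local complexes are isomorphic to the product model.

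\textbf{Step 2: globalization and distance.} We combine the local expansion with the spectral bound $\lambda$ on every parallel-face graph, via the Dinur--Lin--Vidick local-to-global cocycle expansion theorem for sheaves on cubical complexes. Regularity (injectivity on radius-$2t$ balls) makes the local charts valid. A Garland-type or expander-mixing argument then shows that once $\lambda<\lambda_*(t,\eta_{\mathrm{prod}},\Lambda)$, both degree-two cocycle expansion $\epsilon_\delta(2)$ and degree-two cycle expansion $\epsilon_\partial(2)$ are bounded below by a constant times the normalization. Since $N$, $M_X$ and $M_Z$ are all $\Theta(|X(0)|)$ by sparsity, Eq.~\eqref{eq:prelim-soundness-normalization} gives constant soundness. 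Linear distance then follows from small-set (co)boundary expansion: a low-weight cocycle is driven to a coboundary by the same iterative local correction. The ratio bound $\Lambda$ keeps the constants uniform.

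\textbf{Step 3: rate.} We count dimensions: $\dim H^2\ge \dim C^2-\dim C^1-\dim C^3$. The rate conditions~\eqref{eq:construction-rate-conditions} with $|I_{\mathrm{low}}|=2$ make the stalks on type-$I_{\mathrm{low}}$ faces dominate. The bound $\epsilon_{\mathrm{rate}}\le(8t2^t)^{-1}$ keeps the contributions of other face types, summed over $2^t$ endpoint patterns and $t$ directions, below half of the main term.

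The LDPC property is immediate from boundedness of the $n_i$ and the stalk dimensions. The main obstacle is Step 2, namely making $\lambda_*$ independent of the local lengths. For this we rely directly on the new interpolation/product-expansion theorem of \cite{GJ2026}, which is encoded here in the hypothesis on $\eta_{\mathrm{prod}}$.
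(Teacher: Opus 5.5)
\textbf{Step 3 (rate) has a genuine gap.} Deferring the expansion, soundness and distance parts to \cite{GJ2026} matches the paper, which states this theorem without proof, and your Steps 1--2 are a fair summary of that input. The problem is Step 3: the naive bound $\dim H^2\ge\dim C^2-\dim C^1-\dim C^3$ does not give positive rate under \eqref{eq:construction-rate-conditions}, because it misidentifies which face types are small.

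Write $C^S$ for the cochains on cells of type $S$. Regularity gives $\dim C^S=2^{-t}|X(0)|\prod_{i\in S}n_i\prod_{i\notin S}(m_{i,0}+m_{i,1})$. Take any of the $t-2$ directions $k\notin I_{\mathrm{low}}$. All inactive directions of the $3$-cells of type $I_{\mathrm{low}}\cup\{k\}$ are high-rate, and
$\dim C^{I_{\mathrm{low}}\cup\{k\}}=\frac{n_k}{m_{k,0}+m_{k,1}}\dim C^{I_{\mathrm{low}}}\ge\frac12\dim C^{I_{\mathrm{low}}}$.
So these types are not of relative size $O(\epsilon_{\mathrm{rate}})$. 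On the other hand, $\dim C^2=(1+O(t^2\epsilon_{\mathrm{rate}}))\dim C^{I_{\mathrm{low}}}$. Hence your lower bound is at most $\bigl(2-\tfrac t2+O(t^2\epsilon_{\mathrm{rate}})\bigr)\dim C^{I_{\mathrm{low}}}<0$ for every $t\ge5$. This includes $t=2r_0$ and $t=2r$, the cases the paper actually uses. For $t=4$, with all low rates $\epsilon$ and all high rates $1-\epsilon$, the bound equals exactly $-\tfrac12\epsilon(1-\epsilon)|X(0)|\prod_in_i$. No choice of $\epsilon_{\mathrm{rate}}$ rescues the count.

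\textbf{What a correct rate argument needs.} It must exploit the cancellation in the alternating sum. The Euler characteristic is
$\chi=2^{-t}|X(0)|\prod_i(m_{i,0}+m_{i,1}-n_i)\ge(\tfrac12-\epsilon_{\mathrm{rate}})^t|X(0)|\prod_in_i>0$,
which is positive because $|I_{\mathrm{low}}|=2$. Since $\dim H^2\ge\chi-\sum_{j\ \mathrm{even},\,j\ne2}\dim H^j$, what you need are upper bounds on the other even-degree cohomology groups.

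For $t=4$ this closes quickly. First, $\dim H^0\le\dim C^0\le\epsilon_{\mathrm{rate}}^2|X(0)|\prod_in_i$. Second, a top-degree cycle restricts on each direction-$k$ line ($k\notin I_{\mathrm{low}}$) to a dual local code of dimension at most $\epsilon_{\mathrm{rate}}n_k$, so $\dim H^4=\dim Z_4\le\epsilon_{\mathrm{rate}}|X(4)|$.

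For $t\ge5$ you must also bound the intermediate groups $H^j$ with $j$ even and $4\le j<t$. Their cycle conditions couple several face types at once, so they require a genuine argument. That argument is the rate part of \cite{GJ2026}, the ``directional elimination'' the paper invokes in the proof of Lemma~\ref{subrank:principal-geometry}. As written, Step 3 does not establish the rate claim of the theorem.
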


\subsection{Local codes and compatible sheaves}
\label{subsec:construction-local-codes}

\begin{definition}[Product expansion]\label{def:construction-product-expansion}
For codes $\mathcal C_i\subseteq\mathbb F_q^{A_i}$, $|A_i|=n_i$, let $\mathcal C^{(i)}$ be the space of arrays on $\prod_jA_j$ whose restriction to every direction-$i$ line belongs to $\mathcal C_i$. The tuple is \emph{$\eta_{\mathrm{prod}}$-product-expanding} if every $c\in\sum_i\mathcal C^{(i)}$ admits $c=\sum_i c_i$, with $c_i\in\mathcal C^{(i)}$, such that
\begin{equation}\label{eq:construction-product-expansion}
 |c|\geq\eta_{\mathrm{prod}}\sum_i n_i|c_i|_i,
\end{equation}
where $|c|$ counts nonzero $\mathbb F_q$-coordinates and $|c_i|_i$ counts the direction-$i$ lines on which $c_i$ is not identically zero.
\end{definition}

Take $Q_i$ to be powers of two and $\mathbb F_{Q_i^2}\subseteq\mathbb F_q$. Fix identifications $\mathbb k_i\simeq\mathbb F_{Q_i}$ and choose nonzero representatives $v_x=(X_x,Y_x)$ for $x\in\mathbb P^1(\mathbb F_{Q_i})$. Recall that $n_i=Q_i+1$. The projective Reed--Solomon code is
\begin{equation}\label{eq:construction-projective-code}
 \mathcal C_i(d_i)=\{(f(v_x))_x:f\in\mathcal H_{i,d_i}\},
 \quad \mathcal H_{i,d_i}=\mathbb F_q[X_i,Y_i]_{d_i},
 \quad m_i=d_i+1,\quad 0\leq d_i\leq Q_i,
\end{equation}
where the subscript denotes homogeneous degree. To relate this to $\RS_q(E,m)=\{(f(a))_{a\in E}:\deg f<m\}$, choose $a_i\in\mathbb F_{Q_i^2}\setminus\mathbb F_{Q_i}$ and put
\begin{equation}\label{eq:construction-projective-affine}
 z_x=\frac{X_x-a_iY_x}{X_x-a_i^{Q_i}Y_x},\quad
 b_x=X_x-a_i^{Q_i}Y_x,\quad
 E_i=\{z\in\mathbb F_{Q_i^2}^\times:z^{Q_i+1}=1\}.
\end{equation}
Then $\mathcal C_i(d_i)=\operatorname{diag}(b_x^{d_i})\RS_q(E_i,m_i)$. Its dimension is $m_i$, its distance is $n_i-m_i+1$, and its coordinate dual is a generalized Reed--Solomon code of dimension $n_i-m_i$. Using polynomial interpolation and extendability, Ref.~\cite{LLL2026nontrivial} proves product expansion for randomly punctured Reed--Solomon codes. For the prescribed projective evaluation sets determined by the tree degrees, we use the following product expansion theorem~\cite{GJ2026}:

\begin{theorem} \label{thm:construction-local-expansion} Fix $t\geq1$, $\Lambda\geq1$ and $0<\epsilon\leq1/2$. Suppose the $Q_i$ are distinct powers of two satisfying $\max_iQ_i/\min_iQ_i\leq\Lambda$, and the dimensions satisfy $\epsilon n_i\leq m_i\leq(1-\epsilon)n_i$. Then the codes in Eq.~\eqref{eq:construction-projective-code} have product expansion $\eta_{\mathrm{prod}}(t,\Lambda,\epsilon)>0$. The same constant works for every nonempty subtuple, after independently taking coordinate duals, permuting or rescaling coordinates, and extending the coefficient field.
\end{theorem}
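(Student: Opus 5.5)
This theorem is cited from~\cite{GJ2026}, so the plan is to reduce it to the interpolation and product-expansion machinery developed there. I will not reprove it from scratch.

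\emph{Step 1: affine form and transfer of symmetries.} Using Eq.~\eqref{eq:construction-projective-affine}, each projective code is $\operatorname{diag}(b_x^{d_i})\RS_q(E_i,m_i)$, where $E_i\subseteq\mathbb F_{Q_i^2}^\times$ is the cyclic subgroup of order $Q_i+1$. Product expansion is invariant under coordinate rescaling and permutation, because $|c|$ and the line counts $|c_i|_i$ do not change. So it suffices to treat the tuple of Reed--Solomon codes $\RS_q(E_i,m_i)$ on the multiplicative subgroups $E_i$.

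The coordinate dual of $\RS_q(E_i,m_i)$ is a generalized Reed--Solomon code of dimension $n_i-m_i$, which is again a rescaling of $\RS_q(E_i,n_i-m_i)$. This reduces the ``after taking duals'' clause to the same statement with $m_i$ replaced by $n_i-m_i$. The hypothesis $\epsilon n_i\le m_i\le(1-\epsilon)n_i$ is symmetric under this replacement. For field extension, product expansion is a rank-type statement about arrays. Since the codes are defined over $\mathbb F_{Q_i^2}$, one expands coefficients in a basis over the subfield, and every decomposition of the form Eq.~\eqref{eq:construction-product-expansion} lifts coordinatewise. Subtuples follow because the hypotheses are inherited.

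\emph{Step 2: the main estimate.} For $c=\sum_i c_i$ with $c_i\in\mathcal C^{(i)}$, I would run the standard inductive peeling argument for product expansion of Reed--Solomon tuples.
\begin{itemize}
\item[(a)] Represent $c$ as the evaluation on $\prod_iE_i$ of a polynomial that has degree $<m_i$ in variable $i$ modulo the other summands.
\item[(b)] Define the ``bad'' lines in direction $i$ as those on which $c$ differs from a codeword, and show by counting that few bad lines forces $c$ to agree with a low-degree structure on a large subgrid.
\item[(c)] Use an interpolation lemma on $\prod_i E_i$ to extend that structure globally, and assign each summand $c_i$ to be supported on the bad lines.
\end{itemize}

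The key input is an interpolation theorem for products of the multiplicative groups $E_i$: a function that agrees with $\mathcal C^{(i)}$-codewords on most lines in every direction must agree with an element of $\sum_i\mathcal C^{(i)}$ on a large combinatorial subrectangle. This is where the distinct sizes of the $Q_i$ enter. Because the $Q_i$ are distinct powers of two, the $E_i$ have pairwise coprime-like intersection structure, namely $E_i\cap E_j=\{1\}$. This rules out the algebraic degeneracies that break two-way product expansion when the same evaluation set is used in every direction. The bounded ratio $\Lambda$ lets the Schwartz--Zippel-type losses be absorbed into a constant depending only on $(t,\Lambda,\epsilon)$.

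\emph{Main obstacle.} I expect the interpolation step to be the main difficulty. One must show that, for codes of constant relative rate bounded away from $0$ and $1$ in all directions, there is no low-weight element of $\sum_i\mathcal C^{(i)}$ with large line-support. This must hold simultaneously for the primal and dual tuples, and uniformly in the lengths. I would first attempt induction on $t$: apply the case of $t-1$ directions on each slice, and control the slice-to-slice discrepancy by the distance $n_i-m_i+1\ge\epsilon n_i$. I would then invoke the GJ interpolation theorem to close the induction with a constant $\eta_{\mathrm{prod}}(t,\Lambda,\epsilon)$ that does not depend on the $Q_i$.
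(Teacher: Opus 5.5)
Your plan rests on the same thing the paper does: the statement is quoted from \cite{GJ2026} without proof, and the substantive estimate in your Step~2 is also deferred to the GJ interpolation theorem. Several of your reductions are fine: permutation, rescaling, coordinate duals (the dual of a GRS code is GRS on the same set, and the window $[\epsilon n_i,(1-\epsilon)n_i]$ is symmetric), and subtuples. The field-extension step is not. Lifting decompositions coordinatewise does not keep the constant. Write $c=\sum_l\omega_lc^{(l)}$ in a basis of $\mathbb K/\mathbb F$ and decompose each $c^{(l)}$ separately. Then $|c_i|_i\le\sum_l|c_i^{(l)}|_i$, whereas $|c|$ only dominates $\max_l|c^{(l)}|$, so you only obtain $\eta_{\mathrm{prod}}/[\mathbb K:\mathbb F]$. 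That degree is unbounded, and the hypothesis of Theorem~\ref{thm:construction-good-qltc} needs one constant for every extension. The repair is to avoid lifting altogether. $\mathcal C_i(d_i)\otimes_{\mathbb F_q}\mathbb K$ is again the code of Eq.~\eqref{eq:construction-projective-code} over $\mathbb K\supseteq\mathbb F_{Q_i^2}$. So this clause is an instance of the main estimate, provided that estimate has a constant independent of the coefficient field, as $\eta_{\mathrm{prod}}(t,\Lambda,\epsilon)$ asserts.

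The mechanism you give for the role of distinct $Q_i$, namely $E_i\cap E_j=\{1\}$, is false and beside the point. In a common field $|E_i\cap E_j|=\gcd(Q_i+1,Q_j+1)$, which is $3$ for $(Q_i,Q_j)=(2,8)$. In the paper's own regime $Q_i=Q^{D_i}$ with $D_i$ odd, every $E_i$ contains the subgroup of order $Q+1$. More fundamentally, for $a\ne0$ the code $\RS_q(aE_j,m)$ is a coordinate permutation of $\RS_q(E_j,m)$. Product expansion is invariant under independent permutations in each direction, so no property of how the $E_i$ sit inside a common field can matter.

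Your peeling and slice-induction sketch has two further problems. First, it uses only the distance bound $n_i-m_i+1\ge\epsilon n_i$. That bound holds equally when all directions share one evaluation set, which is exactly the case GJ had to circumvent. Second, the restriction of $c=\sum_ic_i$ to a slice $x_t=a$ need not lie in the corresponding sum of the first $t-1$ directional spaces, because $c_t|_{x_t=a}$ is unconstrained. So the $(t-1)$-direction case does not apply to it directly. Step~2 is therefore not an independent route: all the content sits in the GJ interpolation theorem, which you, like the paper, must cite.
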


\begin{definition}[Compatible local codes]\label{def:construction-star-charts} For each cell $\sigma$ and $i\notin\type(\sigma)$, choose an injective linear encoder $\ev_{\sigma,i}:U_{\sigma,i}\to\mathbb F_q^{A_{\sigma,i}}$, where $\dim U_{\sigma,i}=m_{i,b_i(\sigma)}$ and $|A_{\sigma,i}|=n_i$. The sheaf has compatible local codes if the entire upward star of $\sigma$ admits simultaneous isomorphisms $\mathcal F_\tau\cong\bigotimes_{i\notin\type(\tau)}U_{\sigma,i}$ for $\tau\geq\sigma$, under which every restriction adding the edge $a\in A_{\sigma,i}$ is $\id\otimes\ev_{\sigma,i,a}\otimes\id$. Here $\ev_{\sigma,i,a}(u)=(\ev_{\sigma,i}u)_a$, and factors are ordered by direction~\cite{GJ2026}.
\end{definition}

We recall how the polynomial codes descend to $X$. The standard vertex and edge stabilizers are
\begin{equation}\label{eq:construction-stabilizers}
 s_i=\begin{pmatrix}0&1\\\varpi_i&0\end{pmatrix},\quad
 K_{i,0}=\mathrm{SL}_2(\mathcal O_i)/\{\pm I_2\},\quad
 K_{i,1}=s_iK_{i,0}s_i^{-1},\quad I_i=K_{i,0}\cap K_{i,1}.
\end{equation}
Representations $U_{i,e}$ of $K_{i,e}$ and $W_i$ of $I_i$, with $I_i$-equivariant maps $\lambda_{i,e}:U_{i,e}\to W_i$, define a sheaf on $T_i$. For $g\in G_i$, use $g$ as the frame at the vertex $gK_{i,e}$. An incident edge has the form $ghI_i$ with $h\in K_{i,e}$. In the edge frame $gh$, restriction is $u\mapsto\lambda_{i,e}(h^{-1}u)$. Changes of vertex and edge frames act by the respective stabilizer representations. Equivariance makes these restrictions agree under changes of frame. External tensor product and descent by $\Gamma$ give the sheaf on $X$.

Fix powers of two $u_i$ and determinant-one frames, using them for all source and target sheaves. For $0\leq d_i<Q_i$, evaluate a degree-$d_i$ polynomial at $v_x$ at endpoint zero and at $v_x^{[u_i]}$ at endpoint one, where brackets mean coordinatewise powering. The actions are
\begin{equation}\label{eq:construction-endpoint-actions}
 f\mapsto f\circ g^{-1},\qquad
 f\mapsto f\circ(g^{[u_i]})^{-1},
 \qquad g\in\mathrm{SL}_2(\mathbb F_{Q_i}).
\end{equation}
They are $\mathbb F_q$-linear: coefficients are unchanged. At the common edge, the characters are $a^{-d_i}$ and $a^{u_id_i}$, where $a$ is the first diagonal residue entry in the endpoint-zero frame. They agree precisely when
\begin{equation}\label{eq:construction-endpoint-compatibility}
 (1+u_i)d_i\equiv0\pmod{Q_i-1}.
\end{equation}
Denote this edge representation by $W_{i,d_i}$. The endpoint-one code differs by coordinate permutation and rescaling, so Theorem~\ref{thm:construction-local-expansion} applies at both endpoints. The arithmetic proof verifies the congruence for every degree used.

Let $\mathcal L$ be the one-dimensional local system defined by a character $\Gamma\to\mathbb F_q^\times$, and put $\mathbf d=(d_i)_i$. The polynomial sheaf has
\begin{equation}\label{eq:construction-polynomial-stalk}
 \mathcal P(\mathbf d,\mathcal L)_\sigma
 \cong\mathcal L_\sigma\otimes
 \bigotimes_{i\in\type(\sigma)}W_{i,d_i}\otimes
 \bigotimes_{i\notin\type(\sigma)}\mathcal H_{i,d_i}.
\end{equation}
Changes of frame act on the edge factors as well as the polynomial factors. Evaluation, polynomial multiplication and multiplication of edge characters give a sheaf map
\begin{equation}\label{eq:construction-polynomial-multiplication}
 \mathcal P(\mathbf d,\mathcal L)\otimes\mathcal P(\mathbf d',\mathcal L')
 \longrightarrow
 \mathcal P(\mathbf d+\mathbf d',\mathcal L\otimes\mathcal L')
 \qquad(d_i+d_i'\leq Q_i-1).
\end{equation}
This need not be an isomorphism.

\subsection{Cup products with sheaf coefficients}\label{subsec:qltc-cup}

Cup products with sheaf coefficients are defined on simplicial and regular cell complexes~\cite{Li2025Poincare,LLL2026nontrivial}. This construction is independent of the distance and local testability estimates for the associated codes. We recall the construction. Throughout this subsection, $\mathcal F$ and $\mathcal G$ are sheaves over $\mathbb F_q$ on the finite regular cell complex $X$.

\begin{definition}[Cup product] \label{def:qltc-cup}
	For integers $u,v\geq0$, on an ordered simplicial complex the \emph{cup product} is the bilinear map
	\begin{equation}
		\smile:C^u(X,\mathcal F)\times C^v(X,\mathcal G)
		\longrightarrow C^{u+v}(X,\mathcal F\otimes\mathcal G)
		\label{eq:qltc-cup-type}
	\end{equation}
	defined as follows. For a simplex $\sigma=[v_0,\ldots,v_{u+v}]$, put ${}_u\sigma=[v_0,\ldots,v_u]$ and $\sigma_v=[v_u,\ldots,v_{u+v}]$. Then
	\begin{equation}
		(\alpha\smile\gamma)(\sigma)
		=\mathcal F_{{}_u\sigma,\sigma}(\alpha({}_u\sigma))
		\otimes
		\mathcal G_{\sigma_v,\sigma}(\gamma(\sigma_v)).
		\label{eq:qltc-simplicial-cup}
	\end{equation}
	For a regular cell complex, let $\sd X$ be its barycentric subdivision: a simplex is a chain $\sigma_0<\cdots<\sigma_j$ of cells of $X$, ordered by inclusion. Let $s_X:\sd X\to X$ send this chain to $\sigma_j$, and use the pullback sheaves $s_X^*\mathcal F$ and $s_X^*\mathcal G$. The subdivision construction gives cochain maps
	\begin{equation}
		C^j(X,\mathcal F)
		\xrightarrow{A^\#}C^j(\sd X,s_X^*\mathcal F)
		\xrightarrow{S^\#}C^j(X,\mathcal F),
		\qquad S^\#A^\#=\id.
		\label{eq:qltc-subdivision-maps}
	\end{equation}
	To define these maps, let $S_\#:C_j(X,\mathbb F_q)\to C_j(\sd X,\mathbb F_q)$ be scalar subdivision, which sends a cell to the sum of the simplices subdividing it. Choose a chain map $A_\#:C_j(\sd X,\mathbb F_q)\to C_j(X,\mathbb F_q)$ satisfying $A_\#S_\#=\id$ and sending each simplex $\rho=[\sigma_0<\cdots<\sigma_j]$ to a chain supported on faces of $\sigma_j$. Such a map is called an approximate inverse and exists~\cite{LLL2026nontrivial}. Write $A_\#\rho=\sum_{\sigma\leq\sigma_j,\,\dim\sigma=j}a_{\rho,\sigma}\sigma$.
	The induced map on sheaf cochains is
	\begin{equation}
	(A^\#\alpha)(\rho) = \sum_{\substack{\sigma\leq\sigma_j\\\dim\sigma=j}}
	a_{\rho,\sigma}\mathcal F_{\sigma,\sigma_j}\alpha(\sigma).
	\end{equation}
	The map $S^\#$ sums over the subdividing simplices ending at each cell. Use the same scalar maps for every coefficient sheaf. The cellular cup product is
	\begin{equation}
		\alpha\smile\gamma
		=S^\#\bigl(A^\#\alpha\smile A^\#\gamma\bigr),
		\label{eq:qltc-cellular-cup}
	\end{equation}
	where the product on the right is defined by Eq.~\eqref{eq:qltc-simplicial-cup} on $\sd X$.
\end{definition}

The following properties of cup products are useful.

\begin{proposition} \label{prop:qltc-cup-cohomology}
	The cup product satisfies the Leibniz rule in characteristic two:
	\begin{equation}
		\delta(\alpha\smile\gamma)
		= \delta\alpha\smile\gamma+\alpha\smile\delta\gamma.
		\label{eq:qltc-cup-leibniz}
	\end{equation}
	It induces a product on cohomology independent of the approximate inverse in Eq.~\eqref{eq:qltc-subdivision-maps}. For a sparse complex with bounded stalk dimensions, each input coordinate occurs in a bounded number of terms of the cup product.
\end{proposition}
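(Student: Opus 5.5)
I would split Proposition~\ref{prop:qltc-cup-cohomology} into three claims and deduce each from the simplicial case through the cochain maps $A^\#$ and $S^\#$ of Eq.~\eqref{eq:qltc-subdivision-maps}. \textbf{Leibniz rule.} First I would verify Eq.~\eqref{eq:qltc-cup-leibniz} for the simplicial formula Eq.~\eqref{eq:qltc-simplicial-cup} on the ordered complex $\sd X$. For a $(u+v+1)$-simplex $\sigma=[v_0,\ldots,v_{u+v+1}]$, I would expand $\delta(\alpha\smile\gamma)(\sigma)$ as the sum over faces $\partial_k\sigma$, each transported to $\sigma$ by the restriction maps. Using the composition law $\mathcal F_{\tau,\pi}\mathcal F_{\sigma,\tau}=\mathcal F_{\sigma,\pi}$, every term becomes a tensor of restrictions into $\sigma$. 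The faces with $k\leq u$ reproduce $(\delta\alpha\smile\gamma)(\sigma)$ except for the term $k=u+1$. The faces with $k\geq u$ reproduce $(\alpha\smile\delta\gamma)(\sigma)$ except for the term $k=u$. These two leftover terms both equal $\alpha([v_0..v_u])\otimes\gamma([v_{u+1}..])$ restricted to $\sigma$, up to the shared middle-vertex bookkeeping, and they cancel in characteristic two; this is the standard Alexander--Whitney computation. Since $A^\#$ and $S^\#$ are cochain maps, the cellular version follows:
\[
\delta\bigl(S^\#(A^\#\alpha\smile A^\#\gamma)\bigr)=S^\#\delta(A^\#\alpha\smile A^\#\gamma)=S^\#(A^\#\delta\alpha\smile A^\#\gamma+A^\#\alpha\smile A^\#\delta\gamma).
\]
For this step I need that $A^\#$ commutes with $\delta$ on sheaf cochains. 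This follows from $A_\#$ being a chain map together with the locality condition: $A_\#\rho$ is supported on faces of $\sigma_j$, so restrictions into the top cell $\sigma_j$ are well defined and compose correctly. I would check it directly by dualizing. I would also check that the use of the same scalar maps for every sheaf makes $S^\#$ compatible with the tensor sheaf.

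\textbf{Descent to cohomology and independence.} By the Leibniz rule, cocycle $\smile$ cocycle is a cocycle, and $\delta\beta\smile\gamma=\delta(\beta\smile\gamma)$ when $\gamma$ is a cocycle; the symmetric statement holds on the other side. Hence the product descends to cohomology. For independence, I would show that two approximate inverses $A_\#$ and $A'_\#$ induce chain-homotopic maps. Both satisfy $A_\#S_\#=\id$ and are carried by the acyclic carrier $\rho\mapsto$ the closed cell $\overline{\sigma_j}$, and in $X$ these carriers are contractible since $X$ is regular. The acyclic carrier theorem then gives a homotopy $D$ with $A_\#-A'_\#=\partial D+D\partial$, and $D$ is also carried by faces of $\sigma_j$. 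The dual $D^\#$ therefore acts on sheaf cochains through restriction maps into $\sigma_j$. Then $A^\#\alpha-A'^\#\alpha=\delta D^\#\alpha$ for cocycles $\alpha$, and the difference of the cup products is a coboundary by the previous paragraph.

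\textbf{Main obstacle and sparsity.} I expect the main obstacle to be making the homotopy compatible with sheaf coefficients. The simplex $\rho$ in $\sd X$ carries the stalk $\mathcal F_{\sigma_j}$, and the homotopy must land on faces of $\sigma_j$ so that it can be lifted. I would build $D$ inductively on skeleta while keeping the carrier condition explicit. For the sparsity claim, the input coordinate $\alpha(\sigma)$ enters $A^\#\alpha$ only on simplices whose top cell contains $\sigma$, and there are boundedly many of these. The simplicial product and $S^\#$ each involve boundedly many terms per cell. With bounded stalk dimensions, each coordinate therefore occurs in a bounded number of terms.
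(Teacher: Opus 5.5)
Your argument is correct. The paper gives no proof of its own here; it recalls the statement from the earlier works it cites. Your route is the standard proof behind that citation: Alexander--Whitney Leibniz on $\sd X$ pushed through the cochain maps $A^\#$ and $S^\#$, an acyclic-carrier homotopy between two approximate inverses (supported on faces of $\sigma_j$, so its dual acts on sheaf cochains), and local support for the sparsity count. One cosmetic slip: in the face bookkeeping, the faces of $\sigma$ that reproduce $\alpha\smile\delta\gamma$ are those with $k\geq u+1$, not $k\geq u$.
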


An iterated cochain product is left-associated. This fixes the cochain representative; at the cohomology level we use the usual associativity identifications of tensor product sheaves. For the explicit calculations later, we will specify the cubical formula for this product and its relation to Eq.~\eqref{eq:qltc-cellular-cup}.

The tensor product in Eq.~\eqref{eq:qltc-cup-type} must be distinguished from multiplication of polynomials in the coefficient spaces. For sheaves $\mathcal F_1,\ldots,\mathcal F_r$ and $\mathcal T$ on $X$,
suppose we are given a sheaf morphism
\begin{equation}
	\mu:\bigotimes_{a=1}^r\mathcal F_a\longrightarrow\mathcal T.
	\label{eq:qltc-coefficient-multiplication}
\end{equation}
Its stalk maps must satisfy the compatibility condition in Eq.~\eqref{eq:prelim-morphism}. It induces the cochain map
\begin{equation}
	\mu_*:C^j\left(X,\bigotimes_{a=1}^r\mathcal F_a\right)
	\longrightarrow C^j(X,\mathcal T),
	\qquad (\mu_*\alpha)(\sigma)=\mu_\sigma(\alpha(\sigma)).
	\label{eq:qltc-multiplication-cochains}
\end{equation}
For the polynomial sheaves used below, $\mu$ is polynomial multiplication. The degrees and restriction maps determine its target sheaf $\mathcal T$.

Composing cup product with a sheaf morphism and evaluation on a cycle gives the invariant forms of Refs.~\cite{LSWLL2026Theory,LLL2026nontrivial}.
\begin{proposition}[Cohomological invariant form]
	\label{prop:qltc-cup-invariant}
	Let $j_a\geq0$ for $a\in[r]$, with $j_1+\cdots+j_r=t$, and
	use the morphism in Eq.~\eqref{eq:qltc-coefficient-multiplication}. Let
	$\xi\in C_t(X,\mathcal T)$ satisfy $\partial\xi=0$.
	With the pairing of Eq.~\eqref{eq:prelim-evaluation}, define
	\begin{equation}
		I_\xi(x_1,\ldots,x_r)
		=\left\langle
		\mu_*(x_1\smile\cdots\smile x_r),\xi
		\right\rangle,
		\qquad x_a\in C^{j_a}(X,\mathcal F_a).
		\label{eq:qltc-cup-invariant}
	\end{equation}
	Then $I_\xi$ satisfies Eq.~\eqref{eq:prelim-invariant} and induces an
	$r$-linear form on the full product
	$\prod_{a=1}^rH^{j_a}(X,\mathcal F_a)$.
\end{proposition}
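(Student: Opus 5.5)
We need to show two things about $I_\xi$: it is $\mathbb F_q$-multilinear, and its value is unchanged when any argument, assumed to be a cocycle, is shifted by a coboundary. The plan is to expand the cup product and reduce the coboundary shift to the boundary of $\xi$, which vanishes by hypothesis.

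Multilinearity is immediate. The cellular cup product of Definition~\ref{def:qltc-cup} is bilinear, and the iterated product is left-associated, so $x_1\smile\cdots\smile x_r$ is $r$-linear. The maps $\mu_*$ and $\langle\cdot,\xi\rangle$ are linear.

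For invariance, fix cocycles $x_a\in Z^{j_a}(X,\mathcal F_a)$ and coboundaries $u_a=\delta w_a$. By multilinearity it suffices to treat one argument at a time. We must show that $\mu_*(y_1\smile\cdots\smile\delta w_a\smile\cdots\smile y_r)$ pairs to zero with $\xi$ whenever every $y_b$ with $b\neq a$ is a cocycle.

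Iterating the Leibniz rule of Proposition~\ref{prop:qltc-cup-cohomology} gives
\begin{equation*}
\delta\bigl(y_1\smile\cdots\smile w_a\smile\cdots\smile y_r\bigr)
=\sum_{b}y_1\smile\cdots\smile\delta(\cdot)_b\smile\cdots\smile y_r .
\end{equation*}
In characteristic two no signs appear. All terms with $b\ne a$ vanish because $\delta y_b=0$, so
\begin{equation*}
y_1\smile\cdots\smile\delta w_a\smile\cdots\smile y_r=\delta(y_1\smile\cdots\smile w_a\smile\cdots\smile y_r).
\end{equation*}
This identity holds in $C^{t}(X,\bigotimes_a\mathcal F_a)$, using the associativity identifications of tensor product sheaves fixed after Proposition~\ref{prop:qltc-cup-cohomology}. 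Those identifications are sheaf isomorphisms and therefore commute with $\delta$.

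Next, $\mu$ is a sheaf morphism, so it satisfies Eq.~\eqref{eq:prelim-morphism}. Comparing with Eq.~\eqref{eq:prelim-sheaf-delta} termwise shows $\mu_*\delta=\delta\mu_*$. Then Eq.~\eqref{eq:prelim-evaluation} gives
\begin{equation*}
\langle\mu_*\delta\gamma,\xi\rangle=\langle\delta\mu_*\gamma,\xi\rangle=\langle\mu_*\gamma,\partial\xi\rangle=0 .
\end{equation*}
Expanding $I_\xi(x_1+u_1,\ldots,x_r+u_r)$ multilinearly, every term containing at least one $u_a$ has this form, with the remaining entries either cocycles or coboundaries, and coboundaries are in particular cocycles. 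Hence only $I_\xi(x_1,\ldots,x_r)$ survives. This proves Eq.~\eqref{eq:prelim-invariant}, and the form descends to $\prod_aH^{j_a}(X,\mathcal F_a)$.

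The main obstacle is bookkeeping rather than substance. The Leibniz rule is stated for pairs, so we apply it inductively along the left-associated product. We must also make sure the target sheaf $\mathcal T$ receives the product consistently from these associativity identifications, and that the top degree $t$ has no $C^{t+1}$ issue, which is automatic since we only use $\partial\xi$ in degree $t$.
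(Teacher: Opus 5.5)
Your proof is correct and matches the paper's approach: the paper states this proposition by citing its earlier works rather than proving it, and the underlying argument has exactly your three ingredients. These are the iterated Leibniz rule of Proposition~\ref{prop:qltc-cup-cohomology}, which turns a product with one coboundary entry into a coboundary; the identity $\mu_*\delta=\delta\mu_*$ from Eq.~\eqref{eq:prelim-morphism}; and the adjunction $\langle\delta\gamma,\xi\rangle=\langle\gamma,\partial\xi\rangle=0$. You also handle the bookkeeping correctly, namely the induction along the left-associated product and the observation that every other entry in each expanded term is a cocycle.
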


In our application, $j_a=2$ for all $a$ and $t=2r$. Once a cycle $\xi$ and cocycles giving a nonzero value of the form in Eq.~\eqref{eq:qltc-cup-invariant} have been constructed, the normalization and logical action are those of Eqs.~\eqref{eq:prelim-physical-gate}--\eqref{eq:prelim-logical-action}. The proposition applies to every cocycle in each block; the later formulas select particular cocycles only to prove that the form is nonzero.

We will also use the adjoint of cup product with a fixed cocycle. Let $\gamma\in C^v(X,\mathcal G)$ satisfy $\delta\gamma=0$, and let $\mu:\mathcal E\otimes\mathcal G\to\mathcal T$ be a sheaf morphism. For each $u\geq0$, define
\begin{align} \label{eq:qltc-cup-adjoint}
    L_\gamma^u &:C^u(X,\mathcal E)\longrightarrow
	C^{u+v}(X,\mathcal T),
	\qquad L_\gamma^u(\alpha) = \mu_*(\alpha\smile\gamma),\\
	(L_\gamma^u)^* &:C_{u+v}(X,\mathcal T)\longrightarrow C_u(X,\mathcal E),
	\qquad \langle\alpha,(L_\gamma^u)^*\xi\rangle
	=\langle L_\gamma^u(\alpha),\xi\rangle.
\end{align}
By the Leibniz rule, $\delta_{\mathcal T}^{u+v}L_\gamma^u=L_\gamma^{u+1}\delta_{\mathcal E}^u$. Taking adjoints in degree $u-1$ gives $\partial_u(L_\gamma^u)^*=(L_\gamma^{u-1})^*\partial_{u+v}$ for $u\geq1$. Thus $(L_\gamma^u)^*$ sends cycles to cycles; for $u=0$ this holds because every zero-chain is a cycle. This is the adjoint construction underlying the cap products in Ref.~\cite{LLL2026nontrivial}. The formula fixes the argument order: $\gamma$ is the last cup product input. The later reduction in the number of directions uses this adjoint together with a restriction of the complex and of its coefficient sheaves.

\subsection{Covering spaces and nonzero pairings}\label{subsec:qltc-covering}

Let $P:\widetilde X\to X$ be an $\ell$-sheeted cellular covering: it maps each cell isomorphically onto a cell and is locally an isomorphism of cell complexes, and every cell of $X$ has $\ell$ lifts. In particular, at each lifted cell $\widetilde\sigma$, the maps
\begin{align}
    P:\widetilde X_{\leq\widetilde\sigma}\longrightarrow X_{\leq P\widetilde\sigma},
\qquad
P:\widetilde X_{\geq\widetilde\sigma}\longrightarrow X_{\geq P\widetilde\sigma}
\end{align}
are isomorphisms of posets. We use the pullback sheaf $P^*\mathcal F$ from the preliminaries. Thus the stalk and the restriction maps at lifted cells are those at their images.The cochain pullback, chain pushforward, and chain transfer are
	\begin{align}
	    P^\# &:C^j(X,\mathcal F)\longrightarrow
	C^j(\widetilde X,P^*\mathcal F), \\
	P_\# &:C_j(\widetilde X,P^*\mathcal F)\longrightarrow
	C_j(X,\mathcal F), \\
	T_\# &:C_j(X,\mathcal F)\longrightarrow
	C_j(\widetilde X,P^*\mathcal F). 
	\end{align}
Both $P^\#$ and $T_\#$ copy the value at a cell to each of its lifts, whereas $P_\#$ sums the values over those lifts (see also \cite{LLL2026nontrivial} for more details). We use the following identities and compatibility relations. 

\begin{proposition}[Cup products under coverings] \label{prop:qltc-covering}
	The above maps commute with the corresponding boundary or coboundary operators, and
	\begin{equation}
		P_\#T_\#=\ell\,\id,
		\qquad
		\langle P^\#\alpha,T_\#\xi\rangle
		=\ell\langle\alpha,\xi\rangle.
		\label{eq:qltc-transfer-pairing}
	\end{equation}
	The approximate inverses in Eq.~\eqref{eq:qltc-cellular-cup} can be
	chosen compatibly with $P$, so that
	\begin{equation}
		P^\#(\alpha\smile\gamma)
		=P^\#\alpha\smile P^\#\gamma.
		\label{eq:qltc-cover-cup}
	\end{equation}
	These are the transfer identities and compatibility results of
	Ref.~\cite{LLL2026nontrivial}. Consequently, with the pulled-back morphism $P^*\mu$,
	\begin{equation}
		I_{T_\#\xi}(P^\#x_1,\ldots,P^\#x_r)
		=\ell I_\xi(x_1,\ldots,x_r).
		\label{eq:qltc-cover-invariant}
	\end{equation}
	In characteristic two an odd $\ell$ equals one in $\mathbb F_q$.
	Hence an odd-degree covering preserves a nonzero value of the form.
\end{proposition}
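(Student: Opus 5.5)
We prove the three assertions of Proposition~\ref{prop:qltc-covering} in order: the chain-map properties and the transfer identity, then compatibility of the cup product with pullback, and finally the invariant-form identity.

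\emph{Chain maps and transfer.} Because $P$ restricts to poset isomorphisms $\widetilde X_{\geq\widetilde\sigma}\to X_{\geq P\widetilde\sigma}$ and $\widetilde X_{\leq\widetilde\sigma}\to X_{\leq P\widetilde\sigma}$, the cofaces of a lift $\widetilde\tau$ and the faces of a lift $\widetilde\sigma$ correspond bijectively to those of $P\widetilde\tau$ and $P\widetilde\sigma$. The restriction maps of $P^*\mathcal F$ are literally those of $\mathcal F$. Substituting into Eq.~\eqref{eq:prelim-sheaf-delta}, this gives
\begin{equation*}
(\delta P^\#\alpha)(\widetilde\tau)=(\delta\alpha)(P\widetilde\tau).
\end{equation*}
The same face bijection shows that $T_\#$ commutes with $\partial$. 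For $P_\#$, the summation over lifts can be exchanged with the coface sum: the lifts of all cofaces of $\sigma$ are exactly the cofaces of all lifts of $\sigma$, each counted once. Since each cell has $\ell$ lifts, summing $\ell$ equal copies gives $P_\#T_\#=\ell\,\id$. By the same count,
\begin{equation*}
\langle P^\#\alpha,T_\#\xi\rangle=\sum_\sigma \ell\,\xi(\sigma)(\alpha(\sigma)).
\end{equation*}

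\emph{Cup products.} The covering induces a covering $\sd\widetilde X\to\sd X$ of barycentric subdivisions. A simplex $[\widetilde\sigma_0<\cdots<\widetilde\sigma_j]$ lies in the closure of $\widetilde\sigma_j$, which maps isomorphically onto the closure of $\sigma_j$. Choose an approximate inverse $A_\#$ on $X$, and define $\widetilde A_\#$ by transporting the coefficients $a_{\rho,\sigma}$ through these closure isomorphisms. The resulting map is again a chain map and satisfies $\widetilde A_\#\widetilde S_\#=\id$, since both conditions are local to cell closures. With this choice, $A^\#$ and $S^\#$ commute with the pullback maps. The simplicial cup product in Eq.~\eqref{eq:qltc-simplicial-cup} also commutes with pullback, because it uses only restriction maps inside a single simplex. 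Eq.~\eqref{eq:qltc-cover-cup} follows. By Proposition~\ref{prop:qltc-cup-cohomology}, this particular choice of approximate inverse is harmless at the level of cohomology.

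\emph{Invariant forms.} Pullback is compatible with sheaf morphisms, $(P^*\mu)_*P^\#=P^\#\mu_*$, because both sides act stalkwise. Combining this with Eq.~\eqref{eq:qltc-cover-cup} and the pairing identity gives Eq.~\eqref{eq:qltc-cover-invariant}. In characteristic two, $\ell\equiv1$ for odd $\ell$, so the form keeps its value. The only step needing real care is the compatible choice of approximate inverse. The subtle point there is that the transported coefficients must be well defined when distinct lifts share faces. This holds because each coefficient depends only on $\widetilde\sigma_j$ and its closure, on which $P$ is injective.
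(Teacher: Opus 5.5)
Your proposal is correct and agrees with the paper. The paper gives no separate argument for the transfer identities or for the compatible choice of approximate inverse; it defers both to Ref.~\cite{LLL2026nontrivial} and then derives Eq.~\eqref{eq:qltc-cover-invariant} exactly as you do, by combining Eq.~\eqref{eq:qltc-cover-cup}, naturality of $\mu_*$ under pullback, and the pairing identity. Your construction, which transports $A_\#$ through the closure isomorphisms $\widetilde X_{\leq\widetilde\sigma_j}\to X_{\leq\sigma_j}$, is a sound self-contained verification of the compatibility that the paper only cites.
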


The covering space argument applies to general cell complexes~\cite{LLL2026nontrivial}. For the explicit cubical representatives used here, the covers preserve direction labels and endpoint types, so we use the same local formula for the cup product on each cube and its lifts. Proposition~\ref{prop:qltc-covering} does not require an abelian covering group. For the good qLTCs, we construct arithmetic coverings with fixed coefficient sheaves and odd covering degrees, within the tree-quotient setting of Subsection~\ref{subsec:construction-geometry}. 

\section{Nonzero cup products and transversal logical gates} \label{sec:sources}

We prove Theorem~\ref{thm:main} by explicitly constructing cocycles and a cycle on a fixed arithmetic quotient and showing that their cup product pairing is nonzero.

We recall the intuition behind the construction. A cochain supported in one direction is a cocycle when its values on opposite edges have equal restrictions to each square. We use the same idea with pairs of directions, choosing polynomial values with equal restrictions from opposite two-dimensional faces. A complementary polynomial section and the projective Reed--Solomon summation identity give a cycle. Directional support simplifies their cup product, but the pairing still sums over all top cells. We compute this sum and prove that it is nonzero.

\subsection{Construction of the cocycles}
\label{geom:arithmetic-data}\label{local:degrees}\label{source:opposite-faces}

Fix a multiple $r_0$ of four with $r_0\ge r$, where $r$ is the number in Theorem~\ref{thm:main}. Put $t_0=2r_0$, $I=[t_0]$ and $B_b=\{2b-1,2b\}$. Let $s=2^{h_s}\geq6r_0-8$, put $Q=s^2$, and choose distinct monic irreducibles $P_i\in\mathbb F_Q[z]$ of distinct odd degrees $D_i\geq3$. We will specify the degrees and polynomials below. Write 
\begin{equation}\label{geom:residue-data}
F=\mathbb F_Q(z),\quad \mathbb k_i=\mathbb F_Q[z]/(P_i),\quad
Q_i=Q^{D_i},\quad \Sigma_i=\frac{Q_i-1}{Q-1},\quad
e_i=(s-1)\Sigma_i,\quad u_i=Q_i/s.
\end{equation}

Use the quaternion construction of Ref.~\cite{RSV2019}. Let $\mathbb K=\mathbb F_{Q^2}$, with conjugation $a\mapsto a^Q$, and let $\mathscr A$ be generated over $F$ by $\mathbb K$ and $\vartheta$, with $\vartheta a=a^Q\vartheta$ and $\vartheta^2=z$. It is ramified at zero and infinity. For $W\subseteq I$, invert $z$ and the $P_i$, $i\in W$, in the skew-polynomial order $\mathscr R$ generated by $\mathbb K$ and $\vartheta$. Let $\Gamma_W^*$ be the subgroup of projective units acting trivially on $\mathfrak m_0/\mathfrak m_0^2$, where $\mathfrak m_0$ is the maximal ideal of the division order at zero. Let $\Gamma_W$ be its subgroup with even reduced-norm valuations at all $P_i$, $i\in W$. The standard neighbor construction gives a simply transitive action of $\Gamma_W^*$ on product vertices. Indeed, reduction modulo $P_i$ identifies the order with $M_2(\mathbb k_i)$, the preimage of a residue line is generated by a skew polynomial of reduced norm proportional to $P_i$, and the cotangent condition removes the vertex stabilizer. Consequently,
\begin{equation}\label{geom:exact-intersection} X_W=\Gamma_W\backslash\prod_{i\in W}T_i,\qquad |X_W(0)|=2^{|W|},\qquad X_0=X_I.
\end{equation}
Thus $X_W$ has one vertex of each endpoint type. Elements of $\Gamma_W$ have unique norm-one representatives: the norm has even valuation everywhere, and all nonzero constants in $\mathbb F_Q$ are squares. Their leading coefficients at infinity define a character $\chi:\Gamma_W\to\{a\in\mathbb K^\times:a^{Q+1}=1\}$.

Choose $\mathbb F_q$ containing all $\mathcal V_i=\ker P_i(\operatorname{Frob}_{Q^2})$ and an involution $\iota$ fixing every $\mathbb k_i$ and conjugating $\mathbb K$. For example, take $h_{\mathrm{coef}}=\operatorname{lcm}_i(D_i,Q_i-1)$, $q=Q^{2h_{\mathrm{coef}}}$, and $\iota=\operatorname{Frob}_Q^{h_{\mathrm{coef}}}$. Write $\mathcal P(\mathbf d,\chi^a)$ for the polynomial sheaf of Eq.~\eqref{eq:construction-polynomial-stalk} with character line $\chi^a$. All sheaf degrees below are multiples of $e_i$ and satisfy Eq.~\eqref{eq:construction-endpoint-compatibility}, since
$(1+u_i)e_i$ is divisible by $Q_i-1$.

The residue scalar $f(z)\bmod P_i$ acts on $\mathcal V_i$ as $f(\operatorname{Frob}_{Q^2})$, making it two-dimensional over $\mathbb k_i$. Fix an identification with $\mathbb k_i^2$ intertwining the arithmetic splitting. Each projective edge label then selects a line in $\mathcal V_i$. For $S\subseteq I$, concatenate fixed $\mathbb F_Q$-bases of these lines and take their Moore determinant, with rows the consecutive $Q$-powers. Denote its polynomial representative by $D_S$, and put $C_S=\iota_{\mathrm{coeff}}D_S$, with $D_\varnothing=C_\varnothing=1$. They have degree $\Sigma_i$ in direction $i\in S$ and are nonzero on every residue projective tuple.

Here is a useful coordinate description. Put $\theta_i=z\bmod P_i$, choose $\zeta_i^2=\theta_i$, and let $\zeta_{i,a}=\zeta_i^{Q^a}$ for $0\leq a<D_i$. For coordinate pairs $(U_{i,a},V_{i,a})$, set
\begin{equation}\label{scalar:determinants}
d_S=\det\bigl(\zeta_{i,a}^{m}
 (U_{i,a}\text{ if }m\text{ is even, else }V_{i,a})\bigr)_{
 0\leq m<\sum_{i\in S}D_i,\ (i,a):i\in S},
\qquad c_S(U,V)=d_S(V,U).
\end{equation}
Up to nonzero scalars, $D_S,C_S$ are these determinants after invertible changes in each coordinate pair and the substitution $(X_{i,a},Y_{i,a})=(X_i^{Q^a},Y_i^{Q^a})$. Indeed, identify $\mathcal V_i$ with $\mathbb K\otimes_{\mathbb F_Q}\mathbb k_i$, where $Q$-Frobenius acts as $v\mapsto\zeta_i\bar v$. Expansion in the $D_i$ residue embeddings factors the Moore matrix into the displayed matrix and invertible blocks. This proves the degree assertion; independence of the distinct primary spaces proves nonvanishing.

Put $x=s-1$. For a face of two-element direction type $B$, let $O\subseteq I\setminus B$ be its inactive endpoint-one set and put $Z=I\setminus(B\cup O)$. Choose $2\leq c\leq s$ and $a_{\mathrm D},a_{\mathrm C}\geq0$ with $a_{\mathrm D}+a_{\mathrm C}=2x$. In its polynomial chart, define
\begin{equation}\label{source:pair-formula}
\Psi_{B,O}=D_{B\cup Z}^{a_{\mathrm D}}
 C_{B\cup Z}^{a_{\mathrm C}}D_Z^{(c-2)x}
 \bigl(C_{B\cup O}^{x}C_O^{(c-1)x}\bigr)^{[\mathrm{coeff}\,1/s]}.
\end{equation}
The superscript applies inverse $s$-Frobenius only to coefficients. At an active label $v_i$, evaluate the bracket at $v_i^{[u_i]}$ and the other factors at $v_i$. The inactive degree is $ce_i$; active scaling has degree $(2+u_i)e_i\equiv e_i\pmod{Q_i-1}$.

\begin{lemma}\label{source:pair-cocycle}
Nonzero scalar normalizations of $\Psi_{B,O}$, equal to one at $O=\varnothing$, define a cocycle supported on faces of type $B$ in $\mathcal P((d_i),\chi^{a_{\mathrm C}+c})$, where $d_i=e_i$ on $B$ and $d_i=ce_i$ otherwise.
\end{lemma}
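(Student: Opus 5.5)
To prove Lemma~\ref{source:pair-cocycle}, I would check three things in order: that each $\Psi_{B,O}$ is a well-defined section of the stalk of $\mathcal P((d_i),\chi^{a_{\mathrm C}+c})$ at the type-$B$ face with inactive endpoint-one set $O$; that the scalar normalizations can be chosen coherently; and that the coboundary vanishes on every three-dimensional face. Since $X_0$ has exactly one vertex of each endpoint type, there is exactly one face for each pair $(B,O)$ with $O\subseteq I\setminus B$. The cochain is therefore a finite list of stalk elements, and the cocycle condition is a finite list of identities.

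\textbf{Well-definedness.} First I would compare degrees.
\begin{itemize}
\item In a direction $i\in Z$, the factors $D_{B\cup Z}^{a_{\mathrm D}}C_{B\cup Z}^{a_{\mathrm C}}$ have degree $2x\Sigma_i$ and $D_Z^{(c-2)x}$ has degree $(c-2)x\Sigma_i$. The total is $cx\Sigma_i=ce_i$.
\item In a direction $i\in O$, the bracket contributes $(x+(c-1)x)\Sigma_i=ce_i$. Inverse $s$-Frobenius on coefficients does not change degrees.
\item In an active direction $i\in B$, the non-bracket factors give $2x\Sigma_i=2e_i$ evaluated at $v_i$. The bracket contributes $x\Sigma_i$ evaluated at $v_i^{[u_i]}$, which scales like degree $u_ie_i$. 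The total scaling is $(2+u_i)e_i\equiv e_i$ modulo $Q_i-1$, as stated, so the active factor lies in the edge character $W_{i,e_i}$.
\end{itemize}

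Next I would check frame changes. Changing the frame by $g\in\mathrm{SL}_2(\mathbb F_{Q_i})$ multiplies each Moore determinant by $\det g=1$, because a Moore determinant of a basis is $\mathrm{GL}$-semi-invariant with the determinant character. The conjugation $\iota$ used in $C_S$ and the inverse Frobenius are what turn the global frame ambiguity into the character $\chi$. I would track this through the leading coefficient at infinity: the $D$-factors are invariant, and each $C$-factor, with its twisted coefficients, contributes $\chi$. The count of $C$-type factors, $a_{\mathrm C}$ from $C_{B\cup Z}$ together with $c$ from the bracket through the $1/s$ twist, is consistent with the line $\chi^{a_{\mathrm C}+c}$. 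This accounting is delicate, and I expect it to need care.

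\textbf{Cocycle condition.} I would then reduce $\delta\Psi=0$ to local identities. The cochain is supported on type $B$, so $\delta\Psi$ can be nonzero only on faces of type $B\cup\{j\}$ with $j\notin B$. Such a face has exactly two type-$B$ facets, namely $j\in Z$ and $j\in O$. The condition to check is
\begin{equation*}
\mathcal F_{(B,O),\tau}\Psi_{B,O}=\mathcal F_{(B,O\cup\{j\}),\tau}\Psi_{B,O\cup\{j\}}.
\end{equation*}
Here the restriction evaluates direction $j$ at $v_j$ on the zero side and at $v_j^{[u_j]}$ on the one side.

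Moving $j$ from $Z$ to $O$ changes three factors:
\begin{itemize}
\item $D_{B\cup Z}$ loses the direction $j$. Evaluating it at $v_j$ gives $D_{B\cup Z\setminus j}$ times a Moore-determinant factor. I would use the block structure from Eq.~\eqref{scalar:determinants}: inserting a fixed residue line collapses the corresponding rows.
\item $D_Z^{(c-2)x}$ likewise drops $j$.
\item The bracket gains $j$. Its evaluation at $v_j^{[u_j]}$ with inverse $s$-Frobenius on coefficients produces the matching factor, because $u_j=Q_j/s$ and $Q_j$-powering fixes residue values.
\end{itemize}

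The key identity is therefore that $D_S(v_j)^{s}$ equals $C_S(v_j^{[u_j]})^{[\mathrm{coeff}\,1/s]}$-type factors up to the scalar $D_{S\setminus j}$, using $\iota$ and the Frobenius relation on $\mathcal V_j$. The exponents also have to balance: the losses on the $D$-side are $(a_{\mathrm D}+a_{\mathrm C}+(c-2)x)$ copies, and the gains in the bracket are $x+(c-1)x$. I expect this Moore-determinant specialization identity, with the Frobenius twists, to be the main obstacle.

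\textbf{Normalization.} Each square identity above holds up to a nonzero scalar $\kappa_{B,O,j}$. I would define the normalization by induction on $|O|$, starting from value one at $O=\varnothing$ and setting the normalization at $O\cup\{j\}$ to be that at $O$ times $\kappa_{B,O,j}$. For this to be well defined, the scalars must satisfy the cocycle relation $\kappa_{O,j}\kappa_{O\cup j,k}=\kappa_{O,k}\kappa_{O\cup k,j}$. This holds because each $\kappa$ is a ratio of determinants of fixed residue lines evaluated direction by direction, so the order in which directions are added does not matter.
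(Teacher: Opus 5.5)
Your outer structure matches the paper's. This includes the degree count (with $(2+u_i)e_i\equiv e_i\pmod{Q_i-1}$), the reduction of $\delta\Psi=0$ to equality of the restrictions from the two type-$B$ facets of each $(B\cup\{j\})$-cell, and normalization by propagation from $O=\varnothing$. The gap is in the middle step, which is where the lemma's content lies.

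First, $X_0$ has one vertex of each type but not one face per $(B,O)$. By regularity, the type-$B$ faces with zero corner at the type-$O$ vertex are indexed by their labels in $\prod_{i\in B}\mathbb P^1(\mathbb k_i)$, and likewise for the $(B\cup\{j\})$-cells. A scalar $\kappa_{B,O,j}$ therefore only makes sense once you show that the ratio of opposite restrictions is independent of every label, and your proposal assumes this.

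Second, crossing a $j$-edge is not evaluation in direction $j$ alone. The two restrictions are computed in different frames, and the edge carries a different label as seen from its endpoint-one vertex. The frames are related by the forward neighbor $g_j$. This is a global skew polynomial of reduced norm $P_j$, a scalar multiple of $L_V(T)=\prod_{v\in V}(T-v)$ for the zero-endpoint line $V$. It acts on every primary space $\mathcal V_i$, so it also changes coordinates in the other directions.

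The missing tool is Moore elimination, $\Delta(U\oplus V)=\Delta(U)\,\Delta(L_UV)$. Apply it to all four factors of $\Psi_{B,O}$; your list omits $C_{B\cup Z}^{a_{\mathrm C}}$. It shows that each determinant containing the $j$-line splits off a single-line factor and loses $j$ in the neighbor's frame, while the complementary determinants in the bracket gain $j$. The single-line factors cancel the edge-evaluation factors, and this cancellation is what gives label independence. What remains is a power of the leading coefficient $\rho_j$ and the label-independent resultants $\operatorname{Norm}_{\mathbb k_i/\mathbb F_Q}P_j(\theta_i)$. Your ``key identity'' is not formulated, and its purely local form fails. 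Unless the other lines are transported by $L_V$, the quotient $\Delta(U\oplus V)/\Delta(U)=\Delta(L_UV)$ depends on $U$, not only on $V$.

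Your character step also rests on a false claim: Moore determinants are not invariant under $\mathrm{SL}_2(\mathbb F_{Q_i})$ frame changes. For $i\in S$, $D_S$ is a nonzero form in $(X_i,Y_i)$ of degree $\Sigma_i$ with $0<\Sigma_i<Q_i+1$. But $\mathrm{SL}_2(\mathbb F_{Q_i})$ has no nonconstant invariants below degree $Q_i+1$. Determinant semi-invariance describes a change of $\mathbb F_Q$-basis within a fixed span, whereas a frame change moves the lines.

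In the paper, descent comes instead from the transition between chosen representatives $h_O$, each of reduced norm $\prod_{i\in O}P_i$ and leading coefficient $A_O$. Ore multiplication gives $\chi(h_Og_jh_{O'}^{-1})=\frac{A_O}{A_{O'}}\rho_j^{(-1)^{|O|}}$. The leftover exponent $(-1)^{|O|}(a_{\mathrm C}+c)$ modulo $Q+1$ is computed from $u_js=Q_j$, $1/s\equiv-s\pmod{Q+1}$ and $x(s+1)=Q-1$, not by counting $C$-factors. One then sets $\eta_{B,O'}/\eta_{B,O}=\lambda_{B,O,j}(A_{O'}/A_O)^{a_{\mathrm C}+c}$.

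Path independence of this normalization around a square uses three facts: the two restriction paths agree, the two Moore elimination orders give the same result, and the resultants are symmetric in $i,j$. The remark that ``the order does not matter'' does not prove the cocycle relation you need. Finally, these determinant identities hold only on residue projective tuples. To get equality in the full polynomial stalk, you also need injectivity of projective evaluation in degree $<Q_i+1$.
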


\begin{proof}
We check propagation across an inactive direction. For a primary line $V$, put $L_V(T)=\prod_{v\in V}(T-v)$. Moore elimination gives $\Delta(U\oplus V)=\Delta(U)\Delta(L_UV)$ for based $\mathbb F_Q$-spaces. A normalized neighbor in direction $j$ is a scalar multiple of $L_V$, with reduced norm $P_j$. Crossing that edge therefore deletes $j$ from the determinants containing its zero-endpoint line and inserts it in the complementary determinants.

Apply this identity to the four factors of Eq.~\eqref{source:pair-formula}. The single-line factors cancel the full edge-evaluation factors. The remaining power of the neighbor's leading coefficient is $(-1)^{|O|}(a_{\mathrm C}+c)$ modulo $Q+1$, which is its transition in $\chi^{a_{\mathrm C}+c}$; here $u_js=Q_j$, $1/s\equiv-s\pmod{Q+1}$, and $x(s+1)=Q-1$. After accounting for this character, the ratio of opposite restrictions depends only on $(B,O,j)$. The other factors are the label-independent resultants $\operatorname{Norm}_{\mathbb k_i/\mathbb F_Q}P_j(\theta_i)$, symmetric in $i,j$.

The two restriction paths around a square agree. Thus these ratios have equal products along the two increasing paths from $O$ to $O\cup\{i,j\}$. Propagation from the empty endpoint set gives path-independent scalar normalizations, with the stated character providing descent as follows. Choose a representative $h_O$ of each endpoint type with reduced norm $\prod_{i\in O}P_i$ and constant coefficient $\prod_{i\in O}\sqrt{P_i(0)}$, and let $A_O$ be its leading coefficient at infinity. For a positive $j$-edge, put $O'=O\cup\{j\}$ and let $g_j$ be its forward neighbor of reduced norm $P_j$, with leading coefficient $\rho_j$. Since $D_j$ is odd, $\rho_j$ is also the leading coefficient of its normalized annihilator. The discrepancy belongs to $\Gamma_I$, and Ore multiplication gives $\chi(h_Og_jh_{O'}^{-1}) =\frac{A_O}{A_{O'}}\,\rho_j^{(-1)^{|O|}}$. Let $\lambda_{B,O,j}\in\mathbb F_q^\times$ denote the full fixed factor in the preceding source-to-target comparison, including the determinant normalizations and resultants. The remaining factor is $\rho_j^{(-1)^{|O|}(a_{\mathrm C}+c)}$. The two Moore elimination orders and resultant symmetry give equal products of the fixed factors around each type square. Hence the scalar normalization $\eta_{B,\varnothing}=1$ extends independently of the path by $\frac{\eta_{B,O'}}{\eta_{B,O}} = \lambda_{B,O,j} \left(\frac{A_{O'}}{A_O}\right)^{a_{\mathrm C}+c}$. The source-to-target restriction ratio of $\eta_{B,O}\Psi_{B,O}$ is therefore exactly $\chi(h_Og_jh_{O'}^{-1})^{a_{\mathrm C}+c}$, as required by the character line. Reverse edges give the inverse equality. This proves descent with every fixed scalar retained.

The opposite restrictions now agree on all residue projective tuples; their degrees are less than $Q_i+1$, so projective evaluation is injective and equality holds in the full stalk. A three-cell containing $B$ has exactly two supported facets, which cancel in characteristic two; other three-cells have none.
\end{proof}

Partition $[r_0]=\mathcal D\sqcup\mathcal C$ into equal parts, with $1\in\mathcal D$. Put $J_b=I\setminus B_b$, $c_1=s-4r_0+7$ and $c_b=3$ for $b>1$. For $b\in\mathcal D$, apply the lemma with $(a_{\mathrm D},a_{\mathrm C})=(0,2x)$; for $b\in\mathcal C$, conjugate the cocycle obtained with $(2x,0)$. This gives $\alpha_b\in Z^2(X_0,\mathcal F_b)$, with 
\begin{equation}\label{source:selected-table}
\mathcal F_b=\mathcal P((d_{b,i}),\chi^{a_b}),\qquad
d_{b,i}=\begin{cases}e_i&i\in B_b,\\c_be_i&i\notin B_b,\end{cases}
\qquad
a_b=\begin{cases}2x+c_b&b\in\mathcal D,\\-c_b&b\in\mathcal C.\end{cases}
\end{equation}
Let $\mathcal T_j=\mathcal P((\sum_{b\leq j}d_{b,i}),
\chi^{\sum_{b\leq j}a_b})$. Polynomial multiplication defines $\mu_j:\bigotimes_{b\leq j}\mathcal F_b\to\mathcal T_j$.

\subsection{Construction of the cycle} \label{weight:construction}\label{weight:cycle} We next construct a complementary section whose product with $\mathcal T_{r_0}$ has degrees $Q_i-1$ and trivial character. Put $J=J_1$, $a=6r_0-9$, $b=s-6r_0+8$ and $\ell_0=3r_0-7$. At endpoint set $O$, use the following $r_0-1$ polynomials:
\begin{align}\label{weight:words}
W_{0,O}&=D_{I\triangle O}^{a-2}C_{I\triangle O}^{2}
D_{J\triangle O}^{b}C_{B_1\triangle O}^{sb}C_O^{2s}D_O^{s(a-2)},\\
W_{1,O}&=D_{I\triangle O}^{s-4r_0+1}C_{I\triangle O}^{4r_0-4}
D_{J\triangle O}^{2}C_{B_1\triangle O}^{2s}
C_O^{s\ell_0}D_O^{s(s-3-\ell_0)},\\
W_{j,O}&=D_{I\triangle O}^{s-3}D_{J\triangle O}^{2}
C_{B_1\triangle O}^{2s}D_O^{s(s-3)}
\qquad(2\leq j\leq r_0-2).
\end{align}
In direction $i\in O$, replace a coordinate exponent $h$ by $s(h\bmod u_i)+\lfloor h/u_i\rfloor$. The determinant exponents have disjoint binary supports, so this replacement gives the desired homogeneous degree and agrees on residue vectors with $s$-Frobenius substitution. Normalize each polynomial by propagation from $O=\varnothing$. The determinant identities used in Lemma~\ref{source:pair-cocycle} give label-independent restriction ratios with character exponents $3+s(b+2)$, $4r_0-3+s(2+\ell_0)$, and $1+2s$, respectively. Square compatibility again makes the propagation consistent.

Their product defines a section $w\in H^0(X_0,\mathcal H)$ whose restriction to each top cell is nonzero, with
\begin{equation}\label{weight:exponent}
\mathcal H=\mathcal P((h_i),\chi^{E_w}),\qquad
h_i=\begin{cases}
(s-3r_0+3)e_i&i\in B_1,\\
(r_0-1)e_i&i\notin B_1,
\end{cases}
\qquad E_w=Q-(r_0+1)s+5r_0-3.
\end{equation}
The degrees satisfy $h_i+\sum_bd_{b,i}=Q_i-1$ and the characters sum to $Q+1$. Multiplication therefore gives a sheaf morphism
\begin{equation}\label{eq:receiving-multiplication}
\nu:\mathcal T_{r_0}\otimes\mathcal H
\longrightarrow\omega=\mathcal P((Q_i-1)_i,1).
\end{equation}

For homogeneous $L$ of degree $Q_i-1$, finite-field power sums give
\begin{equation}\label{weight:projective-sum}
\sum_{a\in\mathbb k_i}L(1,a)+L(0,1)=0.
\end{equation}
Thus the constant top-cell covectors of $\omega$ form a cycle $\Xi$: the displayed sum is its boundary on every full facet stalk. Other frames use the corresponding dual covectors. Define
\begin{equation}\label{weight:cycle-definition}
\xi_{r_0}(\tau)(u)=\Xi(\tau)\bigl(\nu_\tau(u\otimes w_\tau)\bigr),\qquad
\xi_{r_0}\in C_{t_0}(X_0,\mathcal T_{r_0}).
\end{equation}
Naturality of multiplication gives $\partial\xi_{r_0}=0$. Composing with $\mu_{r_0}$ also gives a cycle for the source tensor-product sheaf.

\subsection{Nonzero cup product pairing}
\label{source:complete-cup}\label{scalar:section}

Use the usual cubical representative of the cup product: a partition of directions chooses the initial face for the first input and the terminal face for the second. It represents Definition~\ref{def:qltc-cup} by the standard subdivision comparison \cite{LLL2026nontrivial}. The direction supports leave only the ordered partition $(B_1,\ldots,B_{r_0})$. Opposite-face equalities move the selected faces to the all-zero corner. Every top cube has exactly one such corner, so the complete pairing is a sum over $\prod_i\mathbb P^1(\mathbb k_i)$.

In the row coordinates of Eq.~\eqref{scalar:determinants}, put
\begin{equation}\label{scalar:polynomial-factors}
A=d_I,\quad B=c_I,\quad
L=\prod_{b\in\mathcal D}d_{J_b}\prod_{b\in\mathcal C}c_{J_b},\quad
\mathscr C=d_Jc_{B_1},\quad
\mathscr D=\prod_{b\in\mathcal D}d_{B_b}\prod_{b\in\mathcal C}c_{B_b},
\end{equation}
and define
\begin{equation}\label{scalar:word}
\mathscr R_s=A^{s-r_0-1}B^{2r_0-2}L^{s-1}
\bigl(A^{r_0-3}B^{2r_0}\mathscr C^{s-4r_0+4}
\mathscr D^{s-1}\bigr)^s.
\end{equation}
Let $\operatorname{CT}_h$ extract $\prod_{i,a}U_{i,a}^hV_{i,a}^h$. Multiplying these polynomials and the cocycle values at the all-zero corner gives $\mathscr R_s$, up to nonzero scalars and the invertible coordinate changes above. Each coordinate pair has degree $2(Q-1)$. At degree $2(Q_i-1)$, the projective sum extracts $X_i^{Q_i-1}Y_i^{Q_i-1}$; the residue embedding substitution then extracts exponent $Q-1$ in each embedding. Under an invertible two-variable substitution $T$, this middle coefficient changes by $\det(T)^{Q-1}$, as follows from diagonal matrices and elementary transvections. Therefore
\begin{equation}\label{source:complete-coefficient}
I_{\xi_{r_0}}(\alpha_1,\ldots,\alpha_{r_0})
=c_*\operatorname{CT}_{Q-1}(\mathscr R_s),
\qquad c_*\in\mathbb F_q^\times.
\end{equation}

Choose the consecutive odd degrees so that
\begin{equation}\label{scalar:balance}
\sum_{b\in\mathcal D}(D_{2b-1}+D_{2b})
=\sum_{b\in\mathcal C}(D_{2b-1}+D_{2b}).
\end{equation}
Group consecutive pairs in quartets and assign the outer pairs to one part and the inner pairs to the other; this gives the equality. Put the smallest degree in $B_1$.

\begin{lemma}\label{scalar:nonzero-cup}
For these fixed degrees and sufficiently large powers of two $s$, there are irreducibles $P_i$ for which
$\operatorname{CT}_{Q-1}(\mathscr R_s)\ne0$.
\end{lemma}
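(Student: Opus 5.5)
We have to show that a specific coefficient of a product of Moore-type determinants is nonzero. The abstract tells us how: a bipartite multigraph specialization. The plan is to treat $\operatorname{CT}_{Q-1}(\mathscr R_s)$ as a polynomial identity in the auxiliary quantities $\zeta_{i,a}$, which depend on the choice of $P_i$. It then suffices to show that this coefficient, viewed as a polynomial in formal variables $\zeta_{i,a}$, is not identically zero. After that, a counting argument produces irreducibles $P_i$ of the prescribed odd degrees $D_i$ whose roots avoid its zero set. Here we use that the number of monic irreducibles of degree $D_i$ is about $Q^{D_i}/D_i$, while the nonvanishing condition is a polynomial of bounded degree in the conjugates, hence in the coefficients of $P_i$.

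\emph{Step 1: expand the determinants.} Each $d_S$ in Eq.~\eqref{scalar:determinants} is a sum over bijections between the rows $m$ and the columns $(i,a)$. Column $(i,a)$ receives $U_{i,a}$ or $V_{i,a}$ according to the parity of its row, together with the factor $\zeta_{i,a}^m$. The reflection $c_S$ swaps $U$ and $V$. A monomial of $\mathscr R_s$ in the coordinate pairs is therefore recorded by how many times each column is matched to even and to odd rows. The extraction $\operatorname{CT}_{Q-1}$ demands that every column $(i,a)$ get exactly $Q-1$ factors of $U$ and $Q-1$ of $V$. Group the factors through the base-$s$ expansion: in Eq.~\eqref{scalar:word} the outer $s$-th power has no carries with the inner part. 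This splits the condition into two digit-level conditions, one for the low digit and one for the $s$-digit, each requiring $s-1$ factors of $U$ and $s-1$ of $V$ per column. The point of separating digits is that each condition then involves only boundedly many determinant factors relative to $s$.

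\emph{Step 2: specialize to a multigraph.} Regard rows and columns as the two sides of a bipartite graph. Each determinant factor contributes a perfect matching, or a partial one for $d_{J_b}$, $d_{B_b}$, $d_J$ and similar factors. Color an edge by the parity of its row, together with the $U/V$ flip for $c$-factors. The surviving coefficient becomes a signed sum, over tuples of matchings, of the products $\prod\zeta_{i,a}^{\text{row sum}}$, subject to the balanced-color degree constraint at each column. The balance condition Eq.~\eqref{scalar:balance} and the choice of exponents make the total even/odd counts match the number of columns.

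We then specialize to a single extremal monomial in the $\zeta$'s. Order the columns lexicographically by $(D_i,a)$, recalling that the smallest degree is placed in $B_1$, and pick a lexicographically maximal weight. The claim to prove is that this monomial is realized by exactly one decomposition into matchings modulo permutations of identical factors. The multiplicity of that decomposition is a product of multinomial coefficients coming from repeated factors such as $A^{s-r_0-1}$ and $L^{s-1}$. Since $s$ is a power of two, these coefficients can be evaluated with Lucas's theorem, and they are odd. We would first try the greedy choice: at each step match the largest remaining row to the largest allowed column.

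\emph{Main obstacle.} The hardest step is proving that the extremal monomial is uniquely realized and that its multinomial multiplicity is odd in characteristic two. Everything else is bookkeeping plus a standard Chebotarev/Weil-type count for choosing the $P_i$. For this step we would argue by an exchange lemma on the bipartite multigraph: any two decompositions with the same degree sequence differ by an alternating cycle, and along such a cycle the lexicographic weight strictly drops unless the cycle is trivial. Taking $s$ large absorbs the boundedly many exceptional low-degree factors.
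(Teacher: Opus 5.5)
Your proposal has two genuine gaps: the counting step relies on a false degree bound, and the core nonvanishing step is only conjectured.

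\textbf{The counting step.} $\operatorname{CT}_{Q-1}(\mathscr R_s)$ does not have bounded degree in the roots. $\mathscr R_s$ is a product of $\Theta(s^2)$ determinants, so its root degree is $\Theta(s^2)=\Theta(Q)$. A Schwartz--Zippel count against the roughly $Q^{\sum_iD_i}/\prod_i(2D_i)$ irreducible tuples therefore gives nothing.

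Some degree reduction is indispensable, and your base-$s$ splitting in Step~1 does not supply one, because it is wrong. Each column lies in $r_0-1$ of the sets $J_b$. So the low part $A^{s-r_0-1}B^{2r_0-2}L^{s-1}$ has degree $sr_0-2>2(s-1)$ in every coordinate pair. The $U$-exponent can then split as $(s-1+ks)+s(s-1-k)$ for any $0\le k\le r_0-2$, so the low digit carries and the condition does not decouple.

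The paper instead splits off only $(\mathscr C\mathscr D)^{Q-s\theta}$, with $\theta$ a fixed power of two exceeding $4r_0-5$. Since $\mathscr C\mathscr D$ has degree exactly two per pair and the remaining bracket has degree exactly $2(s\theta-1)$, Eq.~\eqref{scalar:binary-identity} gives Eq.~\eqref{scalar:prefix-identity}: $\operatorname{CT}_{Q-1}(\mathscr R_s)=Z_sK_0^{Q-s\theta}$. Here $Z_sK_0$ has root degree $O(s)=O(\sqrt Q)$. Only this makes the count of irreducibles work. Two further steps are needed: use symmetry within each root set to pass to coefficients, and square coefficients at the end, since the $\zeta_{i,a}$ are square roots of the roots of $P_i$.

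\textbf{The nonvanishing step.} Formal nonvanishing, which is the real content of the lemma, is only conjectured in your Step~2.
\begin{itemize}
\item Nothing shows that a lexicographically extremal $\zeta$-monomial is realized by a unique tuple of $\Theta(s^2)$ parity-constrained matchings. The alternating-cycle exchange claim is unjustified once many matchings interact through the per-column $U/V$ balance.
\item Nothing shows that the multiplicities are odd. By Lucas' theorem, a multinomial coefficient of $s-r_0-1$, whose binary expansion has gaps, is odd only when the parts have disjoint binary supports.
\end{itemize}

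You have also misread the ``bipartite multigraph.'' In the paper it is not the row--column graph of a determinant expansion. It is a multigraph on the blocks $\mathcal D\sqcup\mathcal C$ with degree $D_{2b-1}+D_{2b}$ at $b$, and it is realizable precisely because of Eq.~\eqref{scalar:balance}. Along each edge, one root of a $\mathcal D$-block and one root of a $\mathcal C$-block are identified.

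This specialization lets column elimination factor out $x_e(U_eV'_e+V_eU'_e)$ and pair $d$-determinants with $c$-determinants. The coefficient becomes the explicit product $c_GH_G^{Q-1}\Delta_G^{s(s-4r_0+4)}$ of Eq.~\eqref{scalar:full-coefficient-formula}, with $H_G\neq0$ by induction starting from a doubled perfect matching. Without this specialization or a completed substitute, your argument does not show that the coefficient is a nonzero polynomial.
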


\begin{proof}
First regard the roots $\zeta_{i,a}$ as independent variables. If each $F_j$ has degree at most two in each variable, successive comparison of exponents modulo two gives
\begin{equation}\label{scalar:binary-identity}
\operatorname{CT}_{2^{h_s}-1}
\left(\prod_{j=0}^{h_s-1}F_j^{2^j}\right)
=\prod_{j=0}^{h_s-1}\operatorname{CT}_1(F_j)^{2^j}.
\end{equation}
Fix a power of two $\theta>4r_0-5$ with $s\geq2\theta$, and set
\begin{equation}\label{scalar:prefix-polynomial}
K_0=\operatorname{CT}_1(\mathscr C\mathscr D),\qquad
Z_s=\operatorname{CT}_{s\theta-1}\left[
A^{s-r_0-1}B^{2r_0-2}L^{s-1}
\bigl(A^{r_0-3}B^{2r_0}\mathscr C^{\theta-4r_0+4}
\mathscr D^{\theta-1}\bigr)^s\right].
\end{equation}
The bracket has degree $2s\theta-2$ in each pair. Splitting off the remaining binary digits gives
\begin{equation}\label{scalar:prefix-identity}
\operatorname{CT}_{Q-1}(\mathscr R_s)=Z_sK_0^{Q-s\theta}.
\end{equation}

Form a bipartite multigraph on $\mathcal D,\mathcal C$ with degree $D_{2b-1}+D_{2b}$ at vertex $b$. Start with a doubled perfect matching and add parallel pairs; Eq.~\eqref{scalar:balance} permits the required degrees. Associate the roots in $B_b$ with its incident edges, and specialize the two roots on edge $e$ to the same independent variable $x_e$, keeping its endpoint coordinate pairs separate as $(U_e,V_e)$ and $(U'_e,V'_e)$. To define $H_G$, first identify the two pairs. For its incident set $\mathcal E_b$, write $d[\mathcal E_b]$ for Eq.~\eqref{scalar:determinants} with roots $x_e$, and set
\begin{equation}\label{scalar:graph-coefficient}
H_G=\left[\prod_eU_eV_e\right] \prod_{b\in\mathcal D}d[\mathcal E_b] \prod_{b\in\mathcal C}c[\mathcal E_b].
\end{equation}
A pair of parallel matching edges contributes $x_a^2+x_b^2$. Adding a parallel pair preserves nonvanishing: successive highest powers of the new root variables select the last odd and even rows at its endpoints and leave the previous coefficient. Hence $H_G\ne0$.

Eliminating two determinant columns with root $x_e$ factors out $x_e(U_eV'_e+V_eU'_e)$; each remaining column contributes its root square minus $x_e^2$. Apply these eliminations to the factors in $\mathscr R_s$, followed by Eq.~\eqref{scalar:binary-identity}. The result is
\begin{equation}\label{scalar:full-coefficient-formula}
\operatorname{sp}\bigl(\operatorname{CT}_{Q-1}(\mathscr R_s)\bigr)
=c_GH_G^{Q-1}\Delta_G^{s(s-4r_0+4)}\ne0,
\end{equation}
where $\operatorname{sp}$ is this specialization, $\Delta_G=\prod_{e<f\in\mathcal E_1}(x_e^2+x_f^2)$, and $c_G$ is a nonzero product of root variables and pairwise squared differences. The coefficient contractions use $[UVU'V'](UV'+VU')F=[UV]F(U,V,U,V)$ and $[\prod_eU_eV_e]d(U,V)d(V,U)=d(1,1)^2$. The additional contraction at vertex $1$ gives $\Delta_G$. Eq.~\eqref{scalar:prefix-identity} implies $K_0Z_s\ne0$.

For fixed degrees, $K_0Z_s$ has root degree $O(s)$ and is symmetric within each root set. By the fundamental theorem of symmetric polynomials~\cite{Macdonald1995}, it is a nonzero polynomial of degree $O(s)$ in the coefficients of the monic root polynomials. There are at least $Q^{D_i}/(2D_i)$ monic irreducibles of each fixed odd degree $D_i$. Schwartz--Zippel~\cite{Schwartz1980} bounds the number of zeros by $O(s)Q^{\sum_iD_i-1}$, while the irreducible tuples number at least $Q^{\sum_iD_i}/(2^{t_0}\prod_iD_i)$. Since $Q=s^2$, a nonzero irreducible tuple exists for large $s$. Square the coefficients of each selected polynomial to obtain $P_i$; its roots are the squares of those in Eq.~\eqref{scalar:determinants}. This preserves irreducibility and proves the claim.
\end{proof}

In the preceding construction we assumed that $r_0$ is a multiple of four. Given an arbitrary $r\geq2$, choose such an $r_0\geq r$. We now obtain a nontrivial $r$-fold cup product pairing from the nontrivial $r_0$-fold pairing as follows.

Suppose $r\leq j\leq r_0$, let $Y_j$ have directions $i>2j$ fixed at endpoint zero. The stabilizer description gives $Y_j=X_{[2j]}$. Restrict the sheaves retaining every coefficient factor. The omitted polynomial spaces remain as $V_a^{(j)}=\bigotimes_{i>2j}U_{a,i,0}$, where $U_{a,i,0}$ is the endpoint-zero polynomial space of $\mathcal F_a$ in direction $i$; their transport need not be trivial.

Let $\mu_{j-1,j}:\mathcal T_{j-1}\otimes\mathcal F_j\to\mathcal T_j$ be polynomial multiplication. Cup product with $\alpha_j|_{Y_j}$ followed by this multiplication is a degree-two cochain map $J_ju=\mu_{j-1,j}(u\smile\alpha_j|_{Y_j})$ into $\mathcal T_j$. Its adjoint sends $\xi_j$ to a cycle. Support in $B_j$ forces this cycle onto $Y_{j-1}$, so it is the extension by zero of a cycle $\xi_{j-1}$. Induction yields
\begin{equation}\label{eq:cap-pairing-preserved}
\left\langle \mu_r(\alpha_1|_{Y_r}\smile\cdots\smile\alpha_r|_{Y_r}),\xi_r \right\rangle = I_{\xi_{r_0}}(\alpha_1,\ldots,\alpha_{r_0})\ne0.
\end{equation}
The adjoints use the full polynomial targets; no cochain or primitive on $Y_r$ is required to extend to $X_0$. Proposition~\ref{prop:qltc-cup-invariant} therefore gives an invariant on the full product of the $r$ cohomology groups.

Now, we record the resulting code parameters. By Eq.~\eqref{source:selected-table}, the local code in direction $i$ of the $a$-th block has length $Q_i+1$ and dimension $d_{a,i}+1$. For the first block, its local rate is $(e_i+1)/(Q_i+1)$ when $i\in B_1$ and $((s-4r_0+7)e_i+1)/(Q_i+1)$ otherwise. Thus the two directions in $B_1$ have local rate at most $(4r_0-5)/(s+1)$, while all remaining directions have local rate at least $1-(4r_0-5)/(s+1)$. For $2\leq a\leq r$, the local rate of the $a$-th block is $(e_i+1)/(Q_i+1)$ when $i\in B_a$ and $(3e_i+1)/(Q_i+1)$ otherwise. The factors $V_a^{(r)}$ associated with the fixed directions remain in the coefficient spaces and affect only constant factors.

Choose $s$ sufficiently large that $(4r_0-5)/(s+1)\leq(8(2r)2^{2r})^{-1}$, and then choose the degrees $D_i$ sufficiently large for the spectral threshold in Theorem~\ref{thm:construction-good-qltc}. All the local codes and their coordinate duals have rates at least $1/(2(s+1))$, so Theorem~\ref{thm:construction-local-expansion} applies. The first global code block therefore has constant rate, linear distance, and constant soundness. The remaining $r-1$ blocks have linear distance and constant soundness, while Eq.~\eqref{eq:cap-pairing-preserved} shows that each has nonzero logical dimension. Since the number of blocks is fixed and their lengths are comparable, their direct sum is an asymptotically good qLTC.

Finally, choose the congruence tower so that each covering $P:X\to Y_r$ has an odd number $\ell$ of sheets. Pull back the cocycles by $P^\#$ and transfer $\xi_r$ by $T_\#$. By Proposition~\ref{prop:qltc-covering}, the resulting pairing is $\ell$ times the nonzero pairing in Eq.~\eqref{eq:cap-pairing-preserved}, and is therefore still nonzero in characteristic two. Taking sufficiently deep covers gives the asymptotic family with the code parameters established above.

By Subsection~\ref{subsec:prelim-logical-gates}, this nonzero cohomological invariant form induces a nontrivial logical $\mCZ$ circuit. The invariant form is sparse, so the circuit has constant depth. The constant-depth-to-transversal conversion of \cite{Nguyen2025CCZ,Golowich_Lin2024} gives a transversal implementation with only constant-factor changes in the code parameters. This finishes the proof of Theorem~\ref{thm:main}.

\section{Polynomial subrank lower bound for the logical tensors}
\label{sec:subrank}

The preceding analysis establishes the existence of nontrivial logical multi-controlled-$Z$ actions. We now strengthen this result by proving that the logical tensor subrank grows at least polynomially with the blocklength, yielding polynomially many independent logical gates on suitable logical subspaces while the local codes and coefficient field remain fixed. From the perspective of fault-tolerant computation, it suffices to analyze the $\CCZ$ case, since $\CCZ$ together with Clifford operations forms a universal gate set~\cite{Nguyen2025CCZ}. We therefore  state the result for general multi-controlled-$Z$ gates and only give the detailed proof for $\CCZ$. The extension to general multi-controlled-$Z$ gates is straightforward.

We use ordinary tensor subrank~\cite{Lin2024transversal,Golowich_Lin2024,LSWLL2026Theory}. For an $r$-linear form $T:V_1\times\cdots\times V_r\to\mathbb F$ on finite-dimensional vector spaces over a field $\mathbb F$, its subrank $\operatorname{subrank}_{\mathbb F}(T)$ is the largest nonnegative integer $R$ for which there are linear injections $\iota_a:\mathbb F^R\to V_a$, $a\in[r]$, satisfying
\begin{equation}\label{eq:subrank-definition}
T(\iota_1x_1,\ldots,\iota_rx_r)=\sum_{i=1}^R\prod_{a=1}^r x_{a,i}
\qquad\bigl(x_a=(x_{a,i})_{i=1}^R\in\mathbb F^R\bigr).
\end{equation}
For the binary logical phase tensor, this restriction gives $R$ independent logical $\mCZ$ gates on the selected logical subspaces.

For a fixed $r\geq3$, let $Y_r=X_{[2r]}$ and retain the coefficient factors from the remaining directions, as in Section~\ref{sec:sources}. We suppress dependence on the degree parameter $D$ except where needed to distinguish fields or cycles or to compare different degrees. For a cover $P:X\to Y_r$, continue to write $\mathcal F_a$, $a\in[r]$, for the pulled-back sheaves. Let $\xi_{r,D}$ be the transfer of the capped cycle $\xi_r$, and put
\begin{align}
H_a&=H^2(X,\mathcal F_a),&
T&=I_{\xi_{r,D}},&
\tau&=\tr_{\mathbb F_q/\mathbb F_2}\circ T.
\label{eq:subrank-logical-spaces}
\end{align}
Here $T$ includes the original polynomial multiplication $\mu_r$ as in Eq.~\eqref{eq:cap-pairing-preserved}. Write
\begin{equation}
    N_a=(\log_2q)\dim_{\mathbb F_q}C^2(X,\mathcal F_a),
\qquad
k_a=(\log_2q)\dim_{\mathbb F_q}H_a,
\qquad
N=\sum_{a=1}^rN_a.
\end{equation}
Thus $N$ is the total binary blocklength.

\begin{theorem}\label{thm:subrank-polynomial}
For every fixed integer $r\geq3$, there is a fixed choice of the parameters in Section~\ref{sec:sources} and an infinite family of principal congruence covers $P:X\to Y_r$, indexed by an unbounded sequence of degrees $D$, with $N\to\infty$ such that
\begin{align}
\operatorname{subrank}_{\mathbb F_2}(\tau)
=\Omega\!\left(N^{1/[6(r-2)]}\right),\quad
\label{eq:subrank-polynomial-bound}
k_a&=\Omega\!\left(N^{1/[4(r-2)]}\right)
\qquad(2\leq a\leq r).
\end{align}
All local codes and the field $\mathbb F_q$ are independent of $D$. The $r$ blocks have comparable lengths, linear distance, and constant soundness. Their combined code has constant rate and bounded-weight checks, and the invariant has a transversal implementation with constant overhead.
\end{theorem}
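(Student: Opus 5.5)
The plan is to strengthen the single nonzero value of Eq.~\eqref{eq:cap-pairing-preserved} to a large irreducible family using the deck symmetries of principal congruence covers. Fix the parameters of Section~\ref{sec:sources} with $r$ blocks, and let $P:X\to Y_r$ be a principal congruence cover with deck group $G$, a finite quotient of $\Gamma_{[2r]}$ of odd order $\ell=|G|$. I expect $G$ to contain a product of groups of the form $\mathrm{PSL}_2$ over a residue field of size $\asymp Q^D$, where $D$ is the degree of the congruence prime. The first step identifies $C^\bullet(X,P^*\mathcal F_a)$ with $C^\bullet(Y_r,\mathcal F_a\otimes\mathbb F_q[G])$, i.e.\ with coefficients in the regular local system. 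This makes $G$ act on each $H_a=H^2(X,\mathcal F_a)$, and the form $T=I_{\xi_{r,D}}$ is $G$-invariant because $\xi_{r,D}=T_\#\xi_r$ is a transfer.

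For each block I will pick an irreducible $\mathbb F_q[G]$-constituent $\rho_a$ of the regular representation and set $V_a=\{P^\#$-twisted cocycles $\alpha_a\otimes v:\ v\in\rho_a\}\subseteq Z^2(X,\mathcal F_a)$. Here $\alpha_a\otimes v$ means the cochain on $Y_r$ with values in the $\rho_a$-isotypic part, and it is a cocycle by Lemma~\ref{source:pair-cocycle}, since the coefficient twist is locally constant. The key survival step is as follows. The image of $V_a$ in $H_a$ is a $G$-subrepresentation of an irreducible quotient, so it is either $0$ or injective. To rule out $0$, I exhibit one vector and partners in the other blocks whose pairing under $T$ is nonzero. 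This reduces, via Proposition~\ref{prop:qltc-covering} and Eq.~\eqref{eq:qltc-cover-invariant}, to the untwisted pairing times a representation-theoretic contraction. The pairing is nonzero by Lemma~\ref{scalar:nonzero-cup}, and the contraction is nonzero once the $\rho_a$ are chosen so that $\rho_1\otimes\cdots\otimes\rho_r$ has a nonzero invariant. Hence $\dim_{\mathbb F_q}H_a\ge\dim\rho_a$, which gives the $k_a$ bound.

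For subrank in the $\CCZ$ case I will take the $\rho$-isotypic block $\operatorname{End}(\rho)\cong\rho\otimes\rho^*$ in each of the three blocks, with $G\times G$ acting. By Schur's lemma the restricted invariant form is, up to a nonzero scalar (the value from Lemma~\ref{scalar:nonzero-cup}), the trace form $(A,B,C)\mapsto\tr(ABC)$ or a variant of it. This scalar may require passing to a field over which $\rho$ is absolutely irreducible, with one coefficient-field extension fixed independently of $D$. Restricting $A,B,C$ to diagonal matrices gives the unit tensor of size $d=\dim\rho$, so $\operatorname{subrank}_{\mathbb F_q}(T)\ge d$. Passing through $\tr_{\mathbb F_q/\mathbb F_2}$ and a trace-dual basis gives $\operatorname{subrank}_{\mathbb F_2}(\tau)\ge d$, since the $\mathbb F_q$-unit tensor restricts to an $\mathbb F_2$-unit tensor after choosing one coordinate per slot.

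For $r>3$, I will chain $r-2$ matrix factors, $\tr(A_1A_2\cdots A_r)$ restricted to diagonals, using the inductive cap construction of Eq.~\eqref{eq:cap-pairing-preserved} to place the extra blocks. The exponent is then bookkeeping. We have $N\asymp|G|$ with fixed local data, while irreducibles of $\mathrm{PSL}_2$ over a field of size $\asymp Q^D$ have dimension $\asymp Q^{D}\asymp|G|^{1/3}$. Distributing the symmetry among the $r-2$ chained factors, together with the square-root loss from using $\operatorname{End}(\rho)$ and diagonals, should give $d=\Omega(N^{1/[6(r-2)]})$ and $k_a=\Omega(N^{1/[4(r-2)]})$. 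I would confirm the precise constants only after fixing which subgroup acts on which factor.

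The code-parameter claims are inherited verbatim from Section~\ref{sec:sources} and Theorem~\ref{thm:construction-good-qltc}, since the local codes and $\mathbb F_q$ do not depend on $D$. The main obstacle I expect is the Schur step: showing that the $G$-invariant form on $\rho_1\otimes\cdots\otimes\rho_r$ is exactly a matrix-multiplication tensor, with a nonzero scalar in characteristic two, where modular representation theory of $\mathrm{PSL}_2$ can fail to be semisimple. To handle this, I would first choose $D$ so that the relevant principal and Steinberg-type irreducibles are projective, hence occur as direct summands of $\mathbb F_q[G]$. This is where the odd order of $\ell$ and the congruence level must be matched carefully.
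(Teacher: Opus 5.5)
\textbf{There is a genuine gap: the twisted cocycles are never constructed.} The cochains $\alpha_a\otimes v$ are not cocycles. Under $C^\bullet(X,P^*\mathcal F_a)\cong C^\bullet(Y_r,\mathcal F_a\otimes\mathbb F_q[G])$, the coefficients form a local system with holonomy $\gamma\mapsto\rho(\gamma)$. The two supported opposite facets of a three-cell therefore transport a fixed $v$ by different group elements, and their restrictions no longer cancel. Only $G$-invariant $v$ works, and that just returns $P^\#\alpha_a$. Lemma~\ref{source:pair-cocycle} proves cancellation only for the scalar polynomial sheaf with the character line $\chi^{a_{\mathrm C}+c}$, so it yields nothing twisted.

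Your $\operatorname{End}(\rho)$ picture hides the same problem. Of the two copies of $G$ acting on $\rho\otimes\rho^*$, one is holonomy, not a symmetry of $T$. So the $\rho$-part of $H_a$ is $H^2(Y_r,\mathcal F_a\otimes\mathcal L_\rho)\otimes\rho^*$. Schur's lemma says nothing about this multiplicity space or about the cup product on it, and these are exactly what must be constructed.

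The paper's missing idea is to build the coefficient dependence into the polynomials themselves. Choose fresh primes $\mathsf F,\mathsf G$ of degree $D$ and insert their primary lines into selected Moore determinants of the three sources. The same Moore elimination identities still give cocycles, now polynomial in a fresh vector of $\mathbb F_{q_D}^2$. This yields equivariant maps $\Phi_a:\mathsf R_a^\vee\to E\otimes H_a$, with $\mathsf R_a$ built from $\mathsf A_\pm$ and $\mathsf S\cong\mathsf A_-\otimes\mathsf A_+$. The assignment of $\mathsf A_-,\mathsf S,\mathsf A_+$ to the blocks at each prime makes the invariant one-dimensional: at each prime it is the matrix-evaluation tensor $\sum_{i,j}c_i\otimes b_{ij}\otimes a_j$, not $\tr(ABC)$.

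\textbf{The nonvanishing step also fails.} A principal congruence cover at a characteristic-two place has deck group containing $\mathrm{SL}_2(\mathbb F_{q_D})$, whose order $q_D(q_D^2-1)$ is even. Your assumption that $\ell$ is odd is therefore impossible, and Proposition~\ref{prop:qltc-covering} would multiply the old value by zero. The paper says this explicitly. In any case, Eq.~\eqref{eq:qltc-cover-invariant} concerns only pulled-back cocycles and cannot evaluate twisted ones.

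The paper proceeds differently. It passes to the unipotent intermediate cover $X^U$. It proves $(\Phi_1,\Phi_2,\Phi_3)^*T=c_U(D)\mathcal K$ using orbit sums that never divide by group orders; a point stabilizer has order $0$ in $E$. It then needs a new Lemma~\ref{subrank:nonzero-family}, which has three ingredients:
\begin{itemize}
\item it shows $c_U(D)$ is a nonzero multiple of $\Lambda_D$;
\item under root scaling, the top coefficient of $\Lambda_D$ is $a_U^{s-1}a_V^{s(s-1)}\operatorname{CT}_{Q-1}(\mathscr R_s)$, with $a_V\ne0$ proved by an interpolation and nullspace argument;
\item a transfer-matrix periodicity argument, using binomial primes $z^D+\beta_i$, keeps this value nonzero along an unbounded sequence of $D$ over a fixed field.
\end{itemize}

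Your exponents are reverse-engineered rather than derived. In the paper they come from $\dim\mathsf A_\pm=q_D^{1/2}$: four fresh primes give diagonal size $q_D^2$ and $\dim H_2,\dim H_3\ge q_D^3$, against $N=\Theta(q_D^{12})$. Finally, the regularity and Ramanujan bounds of these new covers, including the nontrivially transported passive factors, must be reproved rather than inherited.
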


For general $r$, choose $r_0$ large enough that the partition in Eq.~\eqref{source:selected-table} has $1,2\in\mathcal D$ and $3,\ldots,r\in\mathcal C$. Use one four-prime batch for each triple of inputs $(1,2,a)$, $3\leq a\leq r$, and restrict the diagonal indices from all batches to coincide. Since these triples share the first two inputs, this gives an $r$-linear diagonal restriction. The source equations, nonzero-seed argument, and fixed-field descent below apply simultaneously to these batches. The same determinant calculation uses a fixed number of quotient summands, so the associated transfer matrix has dimension bounded independently of $D$. There are $r-2$ batches, which gives the exponents in the theorem.

For the remainder of the section, we give the proof for $r=3$. Set $r_0=4$ and choose $\mathcal D=\{1,2\}$, $\mathcal C=\{3,4\}$. Thus the base is $Y_3=X_{[6]}$, and the sheaves retain the coefficient factors from directions $7,8$. In this case, the theorem gives subrank $\Omega(N^{1/6})$ and $k_2,k_3=\Omega(N^{1/4})$.

These covers differ from the odd-sheet family used to prove Theorem~\ref{thm:main}. A principal cover can have even degree, so pulling back the three previously chosen scalar cocycles does not preserve their nonzero pairing. Instead, we keep all their coefficients and identify a large diagonal restriction of the resulting tensor. The theorem is an existence statement for a fixed choice of local parameters; it does not assert positive individual rates for the two auxiliary blocks.

\paragraph{Choice of fixed parameters.}
The order in which these data are chosen matters. The degrees $D_i$ can be fixed before choosing a sufficiently large $s$. To see this quantitatively, put $D_{\min}=\min_{1\leq i\leq8}D_i$ and $\Delta=\max_{1\leq i\leq8}D_i-D_{\min}$. With $\epsilon=1/(2(s+1))$ and $\Lambda=s^{2\Delta}$, the explicit product expansion bound of Ref.~\cite[Eq.~(11.2)]{GJ2026}, specialized to six factors, gives
\begin{equation}\label{eq:subrank-fixed-degree-choice}
 \eta_{\mathrm{prod}}\geq c\epsilon^{102}\Lambda^{-2310},
 \qquad
 \mathscr L\leq C(\Lambda/\eta_{\mathrm{prod}})^{20},
 \qquad
 \lambda\mathscr L\leq C' s^{-D_{\min}+2040+92440\Delta}.
\end{equation}
Here $c,C,C'>0$ are constants independent of $s$ and the common shift of the degrees, and $\mathscr L$ is the factor in the sufficient spectral condition $\lambda\mathscr L<1$ of Ref.~\cite[Eqs.~(4.1) and~(6.9)]{GJ2026}. The second inequality follows by substituting its local filling constants in that condition; the relevant binomial coefficients are at most $\binom63=20$. Thus a common even shift of a balanced pattern of distinct odd degrees makes $D_{\min}>2040+92440\Delta$. This preserves Eq.~\eqref{scalar:balance} and makes the spectral condition hold for all sufficiently large $s$. The nonvanishing condition of Lemma~\ref{scalar:nonzero-cup} can therefore be imposed afterwards. We choose $s$ once and then fix the old polynomials and the coefficient field. The resulting code constants may depend on this final choice of $s$; they are uniform along the cover family.

\subsection{Amplifying the number of cocycles by symmetry}
\label{subsec:subrank-coefficients}

For an odd positive integer $D$, let $\mathsf F,\mathsf G$ be disjoint pairs of new primes of degree $D$, none equal to the original primes or the ramified prime, and put $\mathsf H=\mathsf F\cup\mathsf G$. Write $q_D=Q^D, m=s^D,
 e=\frac{q_D-1}{s+1}, u=\frac{q_D}{s}.$ In particular, $q_D$ is a residue field size and is distinct from the fixed coefficient-field size $q$.

We first work over a finite extension $E$ of $\mathbb F_q$ containing the new residue fields and primary spaces. Extend the determinant notation $D_S,C_S$ to the new primary lines by the same Moore construction preceding Eq.~\eqref{scalar:determinants}. The coefficient involution can be extended to fix the new residue fields, since $D$ is odd. Modify the three sources in Eq.~\eqref{source:pair-formula} as follows. In source $1$, insert $\mathsf H$ in the determinant $C_{B_1\cup O}$ in the bracket. In source $2$, insert $\mathsf H$ in $D_Z$ and $\mathsf F$ in $C_{B_2\cup O}$. Make the same changes with $\mathsf G$ in place of $\mathsf F$ in source $3$ before conjugating it. Thus the new $D_Z$ factors in sources $2,3$ have exponent $x=s-1$. All other factors are unchanged. In a bracket, the new vector is evaluated after coordinatewise $u$-powering.

These expressions satisfy the same old face restrictions as their original sources, with the new variables retained. Indeed, each changed determinant acquires two or four odd-dimensional primary spaces. Its dimension parity, and hence its power of the norm-one leading coefficient in Lemma~\ref{source:pair-cocycle}, is unchanged. The single-line cancellation at an old walking place is also unchanged. The additional reverse-edge factors are the nonzero constants $\operatorname{Norm}_{\mathbb F_{q_D}/\mathbb F_Q}P_i(\theta_w)$, where $\theta_w=z\bmod w$. Their products agree around every type square. The normalization argument of Lemma~\ref{source:pair-cocycle} therefore applies with the same original sheaves.

Let $\mathbf G=\prod_{w\in\mathsf H}\mathrm{SL}_2(\mathbb F_{q_D})$, and let $X\to Y_3$ be the principal cover defined by simultaneous reduction at $\mathsf H$. The arithmetic assertions that this reduction is onto and that these covers have the required code parameters are established below. Fix a nonzero vector at each fresh prime, and let $\mathsf U\leq \mathbf G$ be the product of its unipotent stabilizers. Write $X^U=\mathsf U\backslash X$. Evaluating the three modified sources at these vectors gives cocycles $\alpha_a^{U}$ on $X^U$. Let $\xi_{3,D}^{U}$ be the transfer of the original capped cycle $\xi_3$, and define
\begin{equation}\label{eq:subrank-unipotent-value}
 c_U(D)=
 \left\langle
 \mu_3(\alpha_1^{U}\smile\alpha_2^{U}
                   \smile\alpha_3^{U}),\xi_{3,D}^{U}
 \right\rangle.
\end{equation}
The value $c_U(D)$ is computed in $E$. Lemma~\ref{subrank:nonzero-family} supplies an unbounded sequence of degrees for which $c_U(D)\ne0$. Our amplification uses one batch of four primes for each member of the code family.

For integers $a,b\geq0$, write $a\preceq b$ if every nonzero binary digit
of $a$ is also a nonzero binary digit of $b$. Define
\begin{align}
    \mathsf A_+=\operatorname{span}_E
       \{X^{e-a}Y^a:a\preceq e\},\qquad
 \mathsf A_-=\operatorname{span}_E
       \{X^{se-a}Y^a:a\preceq se\}.
\end{align}
Both spaces have dimension $m$. To see their role, retain one fresh vector $(X,Y)$ in a whole-primary Moore determinant. Its monomials have one of $X^{Q^j},Y^{Q^j}$ for each $0\leq j<D$. Raising to $x=s-1$ therefore has only the binary exponents allowed in $\mathsf A_+$. Inverse $s$-Frobenius on coefficients leaves this support unchanged. Substitution by $(X^{u},Y^{u})$ rotates the binary exponent positions by $h_s$ on $\mathbb F_{q_D}^2$, sending the support of $e$ to its complement, the support of $se$. This identity also holds at zero coordinates: a zero exponent stays zero, and a nonempty proper digit set stays nonempty and proper. Thus an ordinary factor belongs to $\mathsf A_+$ and a bracket factor belongs to $\mathsf A_-$, as polynomial functions of the fresh vector.

The two digit sets partition the digits of $q_D-1$. Multiplication therefore gives an equivariant monomial-basis isomorphism
\begin{equation}\label{eq:subrank-digit-product}
 \mathsf A_-\otimes\mathsf A_+
 \xrightarrow{\ \mathrm{mult}\ }\operatorname{Sym}^{q_D-1}(E^2)
 \xrightarrow{\ \mathrm{ev}\ }\mathsf S,\qquad
 \mathsf S=\left\{f:\mathbb P^1(\mathbb F_{q_D})\to E:
                          \sum_p f(p)=0\right\}.
\end{equation}
Here the action is $f(v)\mapsto f(g^{-1}v)$. Evaluation is injective by the degree bound, and its image has sum zero by Eq.~\eqref{weight:projective-sum}; both spaces have dimension $q_D$. Here $\mathrm{mult}$ is polynomial multiplication; denote the composite isomorphism by $\mu$.

Consequently, the fresh coefficient spaces of the three sources are
\begin{equation}
    \begin{array}{c|ccc}
 &\mathsf R_{1,w}&\mathsf R_{2,w}&\mathsf R_{3,w}\\ \hline
 w\in\mathsf F&\mathsf A_-&\mathsf S&\mathsf A_+\\
 w\in\mathsf G&\mathsf A_-&\mathsf A_+&\mathsf S .
\end{array}
\qquad
 \mathsf R_a=\bigotimes_{w\in\mathsf H}\mathsf R_{a,w}.
\end{equation}
In particular,
\begin{equation}\label{eq:subrank-source-dimensions}
 (\dim\mathsf R_1,\dim\mathsf R_2,\dim\mathsf R_3)
       =(q_D^2,q_D^3,q_D^3).
\end{equation}
On $X$ these fresh local systems are trivial. Extracting their coefficients gives $\mathbf G$-equivariant maps
\begin{equation}\label{eq:subrank-coefficient-maps}
 \Phi_a:\mathsf R_a^\vee\longrightarrow
 E\otimes_{\mathbb F_q}H_a.
\end{equation}
The sheaves in this expression are the pullbacks of the original sheaves on $Y_3$. The spaces $\mathsf R_a$ parameterize cocycles; they do not enlarge any physical stalk.

We next choose the new primes so that their scalar pairing is nonzero. The local codes and the coefficient field will be fixed throughout this choice of covers. The old degrees satisfy Eq.~\eqref{scalar:balance}. For a new prime label $w$, set $D_w=D$. For a prime-index set $S$, let $\widehat S=\{(i,a):i\in S,\ 0\le a<D_i\}$ be its set of row labels. For an arbitrary row-label set $R$, let $d[R]$ be the determinant in Eq.~\eqref{scalar:determinants} with precisely these columns, and put $c[R](U,V)=d[R](V,U)$. Thus $d_S=d[\widehat S]$ and $c_S=c[\widehat S]$ for prime-index sets. Let $\operatorname{CT}_{h,R}$ extract $\prod_{\rho\in R}U_\rho^hV_\rho^h$, leaving other variables unchanged. Define
\begin{align}
U_D&=\operatorname{CT}_{1,\widehat{\mathsf H}}
       \bigl(d_{J_2\cup\mathsf H}c_{J_3\cup\mathsf H}\bigr),&
V_D&=\operatorname{CT}_{1,\widehat{\mathsf H}}
       \bigl(d_{B_1\cup\mathsf H}d_{B_2\cup\mathsf F}
                     c_{B_3\cup\mathsf G}\bigr),\label{eq:subrank-partial-coefficients}\\
\widetilde L&=d_{J_1}c_{J_4}U_D,&
\widetilde{\mathscr D}&=c_{B_4}V_D.
\end{align}
The letters $A,B,\mathscr C$ retain their meanings in Eq.~\eqref{scalar:polynomial-factors}. Set 
\begin{equation}\label{eq:subrank-scalar-functional}
\Lambda_D=\operatorname{CT}_{Q-1,\widehat I}\left[
 A^{s-5}B^6\widetilde L^{s-1}
 \left(AB^8\mathscr C^{s-12}
                 \widetilde{\mathscr D}^{s-1}\right)^s
 \right].
\end{equation}
Here the subscript $\widehat I$ means that only the old coordinate pairs are extracted.

The scalar pairing introduced above satisfies
\begin{equation}\label{eq:subrank-scalar-comparison}
c_U(D)=\kappa\Lambda_D,\qquad \kappa\ne0.
\end{equation}
Here $\kappa$ may depend on $D$. We explain this reduction since the order of the coordinate operations matters. For a based sum of primary lines, $Q^2$-Frobenius has a fixed determinant independent of the line labels. Its action on the Moore determinant shows that ratios of its nonzero residue vector values lie in $\mathbb F_{Q^2}$. Coefficient conjugation acts there as $Q$-Frobenius; consequently Moore determinants satisfy $C_S(v)=\kappa_S D_S(v)^Q$, with $\kappa_S\ne0$ depending only on the based primary spaces. Thus inverse $s$-Frobenius on a bracket value changes $C_S$ into a fixed scalar times $D_S^s$, and conversely for $D_S$. Apply this identity before changing to the row coordinates. In each new primary, the ordinary factors contribute $(d_{J_2\cup\mathsf H}c_{J_3\cup\mathsf H})^{s-1}$ and the converted brackets contribute $(d_{B_1\cup\mathsf H}d_{B_2\cup\mathsf F}c_{B_3\cup\mathsf G})^{s(s-1)}$. Their total degree is $2(Q-1)$ in each new row-coordinate pair, or $2(q_D-1)$ in each new physical coordinate pair. The projective sum and the middle-coefficient covariance used in Eq.~\eqref{source:complete-coefficient}, followed by Eq.~\eqref{scalar:binary-identity}, give $U_D^{s-1}V_D^{s(s-1)}$. All remaining factors are old factors and give Eq.~\eqref{eq:subrank-scalar-functional}. The fourth source and the complementary section have no new variables, so the cap construction uses the same cycle.

\begin{lemma}\label{subrank:nonzero-family}
There is a choice of the fixed data in Section~\ref{sec:sources} and an unbounded sequence of odd degrees $D$ for which four distinct new primes of degree $D$, grouped as $\mathsf F,\mathsf G$, satisfy $c_U(D)\ne0$. All old primes, $s$, the local codes, and $\mathbb F_q$ are chosen once. The new primes can be chosen to have the form
\begin{equation}
    z^D+\beta_i,\qquad 1\leq i\leq4,
\end{equation}
where the four constants $\beta_i\in\mathbb F_Q^\times$ are fixed along the sequence.
\end{lemma}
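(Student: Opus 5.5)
By Eq.~\eqref{eq:subrank-scalar-comparison} it suffices to show $\Lambda_D\ne0$ for the chosen primes. I would mirror the proof of Lemma~\ref{scalar:nonzero-cup}: first treat all roots as independent variables and prove that a suitable specialization of $\Lambda_D$ is a nonzero polynomial, and only then choose actual primes. The old primes, $s$, the local codes and $\mathbb F_q$ are fixed first, exactly as in Section~\ref{sec:sources} and the parameter-choice paragraph. The new roots are the extra variables, and the main point is that the constants $\beta_i$ can be chosen once for all $D$.

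\textbf{Step 1: reduce to a product of digitwise coefficients.} Apply Eq.~\eqref{scalar:binary-identity} to the old coordinate pairs in $\Lambda_D$, as in Eq.~\eqref{scalar:prefix-identity}. The factors $U_D,V_D$ already have their new coordinates contracted, so they behave as scalars with respect to $\widehat I$. This should give
\begin{equation*}
\Lambda_D=\widetilde Z_s\,\widetilde K_0^{\,Q-s\theta},
\qquad
\widetilde K_0=\operatorname{CT}_{1,\widehat I}\bigl(\mathscr C\widetilde{\mathscr D}\bigr),
\end{equation*}
where $\widetilde Z_s$ is the analogue of $Z_s$. Nonvanishing then reduces to nonvanishing of $\widetilde K_0$ and $\widetilde Z_s$.

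\textbf{Step 2: show these are nonzero after a multigraph specialization.} Extend the bipartite multigraph on $\mathcal D=\{1,2\}$, $\mathcal C=\{3,4\}$. Each new prime set contributes $D$ roots to the determinants of $B_1,B_2$ (for $\mathsf F$) or $B_1,B_3$ (for $\mathsf G$) through $V_D$. Similarly, $U_D$ couples $J_2$ and $J_3$. The degree balance required in Eq.~\eqref{scalar:balance} is preserved: $\mathsf F$ adds $D$ roots to the $\mathcal D$-vertex~$2$, and $\mathsf G$ adds $D$ roots to the $\mathcal C$-vertex~$3$. The common $\mathsf H$ in $B_1$ is compensated by $\mathsf H$ appearing in both $d_{J_2\cup\mathsf H}$ and $c_{J_3\cup\mathsf H}$.

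So I would add, for each new root, edges joining vertex~$1$ to vertices~$3$ and~$4$ appropriately, using parallel pairs. The inductive step in the proof of Lemma~\ref{scalar:nonzero-cup} says that adding a parallel pair preserves nonvanishing, via highest powers of new root variables. That yields a specialization of the form $c_G H_G^{\bullet}\Delta_G^{\bullet}\ne0$, which shows $\widetilde K_0\widetilde Z_s$ is a nonzero polynomial in the old and new roots.

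\textbf{Step 3: realize the new roots by binomial primes $z^D+\beta$.} This is the main obstacle. Schwartz--Zippel over all monic irreducibles of degree $D$ no longer suffices: the roots are constrained to be the $D$-th roots of $-\beta$ times powers of a primitive root of unity, and the degree of the polynomial in the new variables grows like $q_D$.

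First, $z^D+\beta$ is irreducible over $\mathbb F_Q$ when $\operatorname{ord}(-\beta)$ has suitable prime factors all dividing $D$, the usual criterion for binomials. So I would restrict $D$ to an unbounded sequence such as powers of a fixed odd prime $p\mid Q-1$, and pick four elements $-\beta_i$ of order divisible by $p$ with distinct cosets.

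Second, the roots of $z^D+\beta_i$ are $\gamma_i\omega^k$ with $\omega$ a primitive $D$-th root of unity. Symmetric within each root set, the specialized expression becomes a polynomial in $\gamma_1,\dots,\gamma_4$, the old coefficients, and fixed cyclotomic data. Here I would exploit the Moore-determinant structure: a Moore determinant of a geometric-progression basis factors as a Vandermonde-type product in $\omega$ times monomials in $\gamma_i$. The contractions $\operatorname{CT}_1$ then reduce to products of such factors together with cross resultants $\operatorname{Norm}P_j(\theta_w)$ and $\prod(\gamma_i^{Q^a}\omega^{\cdot}+\cdots)$.

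These are nonzero as long as the four $\beta_i$ are distinct and differ from the old residue data, and this holds uniformly in $D$. Hence I would aim to prove that the leading term in the multigraph specialization survives under this substitution, via distinctness of the new roots from each other and from the old roots. If a direct computation fails, the fallback is to treat $\beta$ as a variable, note that the resulting polynomial in $\beta_1,\dots,\beta_4$ has degree bounded independently of $D$ along the chosen sequence, and pick the $\beta_i$ once by Schwartz--Zippel in $\mathbb F_Q$. This requires $Q=s^2$ large, and we are free to enlarge $s$ before fixing everything else.
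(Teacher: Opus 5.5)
Your Step 1, and your choice of $D$ among powers of a fixed odd prime with binomial new primes, agree with the paper. There is, however, a genuine gap exactly where you place the main obstacle: nothing in the proposal gives $c_U(D)\ne0$ for infinitely many $D$ while $s$, $Q$ and the $\beta_i$ stay fixed.

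Your main line for Step 3 claims that on the binomial locus the contractions reduce to products of Vandermonde-type factors and resultants. They do not. The new coordinate pairs are contracted across products of two or three determinants sharing columns, and by Cauchy--Binet this gives sums over even/odd row splits; the paper writes them as bordered determinants. What does factor is only the top coefficient under a common scaling $\lambda$ of the new roots, namely $a_U^{s-1}a_V^{s(s-1)}\operatorname{CT}_{Q-1}(\mathscr R_s)$. Even $a_V\ne0$ needs a binomial-specific argument: a moment matrix and the coprimality of $t^D+\sqrt{\beta_1\beta_2}$ with $\sqrt{\beta_3+\beta_4}\,t^{(D-1)/2}$. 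A nonzero polynomial in $\lambda$ may still vanish at the point you need.

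Your fallback also fails. The degree in $\lambda$, equivalently in the $\beta_i$, grows with $D$: already $d[R\cup\widehat{\mathsf H}]$ has $\lambda$-degree $|R|h+\binom{h}{2}$ with $h=4D$. You give no reason for a $D$-independent bound. You also cannot enlarge $s$ afresh for each $D$, since $s$ fixes the local codes.

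Step 2 does not repair this. The new roots are already contracted inside $U_D,V_D$. An $\mathsf F$-root occurs in $d_{B_1\cup\mathsf H}$ and $d_{B_2\cup\mathsf F}$, both at $\mathcal D$-vertices, so it is not an edge of the bipartite multigraph; the slot counts are $6D$ on the $\mathcal D$ side versus $2D$ on the $\mathcal C$ side. In any case, nonvanishing for generic roots says nothing about the very special binomial locus.

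The missing idea is a recurrence in $D$ over a fixed field. The paper writes $U_D,V_D$ as bordered determinants over $\mathbb E[t]/(p)$, with $\mathbb E=\mathbb F_{Q^{L_{\mathrm{old}}}}$. In the basis $Y^jt^\rho$ with $Y=t^D$, and after reflecting the bottom rows, the multiplication maps, the Gram matrices and the derivative terms become local with a stationary block. Summing perfect matchings with a transfer matrix $\mathsf T$ of size at most $2^{100(n_{\mathrm{old}}+1)}$, independent of $D$, then gives $(U_{2\ell+1},V_{2\ell+1})=\mathcal L(\mathsf T^{\ell-2W})$. Over the fixed finite field this forces eventual periodicity in $\ell$, with an explicit period $P_{\mathrm{per}}$.

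A single seed $D_0=3^{n_0}$, fixed before $s$, is then handled by exactly your kind of argument. There $\widetilde Z_s\widetilde K_0$ has $\lambda$-degree $O(s)$, so some $\lambda\in\mathbb F_Q^\times$ works once $Q=s^2$ is large. Every $D^{(j)}=3^{n_0+a_{\mathrm{per}}j}$ satisfies $(D^{(j)}-1)/2\equiv(D_0-1)/2\pmod{P_{\mathrm{per}}}$, and so inherits $\Lambda_{D^{(j)}}=\Lambda_{D_0}\ne0$.

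A smaller point: for $z^{p^n}+\beta$ to be irreducible for all $n$, the order of $\beta$ must contain the full $p$-part of $Q-1$, not merely be divisible by $p$. The paper arranges $v_3(Q-1)=1$ by taking $3\nmid h_s$, and uses noncube constants.
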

\begin{proof}
There are two steps. First we find a nonzero specialization for one sufficiently large degree. Then an exact recurrence in $D$ repeats that specialization over the same finite field.

Write $t=\zeta^2$ for a squared row root, and consider
\begin{equation}\label{eq:subrank-binomial-polynomials}
p_{\mathsf F}(t)=(t^D+\beta_1)(t^D+\beta_2),\qquad p_{\mathsf G}(t)=(t^D+\beta_3)(t^D+\beta_4),
\end{equation}
where $D$ is odd and the $\beta_i$ are distinct and nonzero. Temporarily regard the roots as formal variables. If $R\subseteq\widehat I$ has even cardinality and the new row roots are $\lambda r_\rho$, expansion of a determinant by its new columns gives
\begin{equation}\label{eq:subrank-leading-determinant}
\left(\lambda^{|R|h+\binom{h}{2}}\right)
d[R\cup\widehat{\mathsf H}]
=
\left(\prod_{\rho\in\widehat{\mathsf H}}r_\rho\right)^{|R|}
d[R]d_{\mathsf H}(r).
\end{equation}
Here $h=|\widehat{\mathsf H}|=4D$. Indeed, the highest power assigns precisely the last $h$ rows to the new columns. The even--odd row pattern is unchanged because $|R|$ is even. The same formula holds for $c$.

Consequently, the coefficients of the highest powers of $\lambda$vin $U_D$ and $V_D$ are
\begin{equation}
    \left(\lambda^{\deg_\lambda U_D}\right)U_D
=a_U\,d_{J_2}c_{J_3},
\qquad
\left(\lambda^{\deg_\lambda V_D}\right)V_D
=a_V\,d_{B_1}d_{B_2}c_{B_3}.
\end{equation}
Both $a_U$ and $a_V$ are nonzero. For $a_U$, the new contraction is
$\operatorname{CT}_{1,\widehat{\mathsf H}}(d_{\mathsf H}c_{\mathsf H}) = d_{\mathsf H}(1,1)^2$, a nonzero Vandermonde square. For $a_V$, it remains to check $\operatorname{CT}_{1,\widehat{\mathsf H}}(d_{\mathsf H}d_{\mathsf F}c_{\mathsf G})\ne0$. Here is a direct check. Define $E_{\mathsf F}$ and $H_{\mathsf G}$ by $E_{\mathsf F}^2=p_{\mathsf F}+tp_{\mathsf F}'$ and $H_{\mathsf G}^2=p_{\mathsf G}'$. Expanding along even rows and using Lagrange interpolation gives a Frobenius-semilinear bijection from the left nullspace of this contraction's moment matrix to
\begin{equation}
    \left\{(g_{\mathsf F},g_{\mathsf G}):
\deg g_{\mathsf F},\deg g_{\mathsf G}<D,\quad
g_{\mathsf F}H_{\mathsf G}+g_{\mathsf G}E_{\mathsf F}=0\right\}.
\end{equation}
For completeness, the moment entries are $\sum_{p_{\mathsf b}(t)=0}t^{2r+2j-\epsilon_{\mathsf b}}/ (p_{\mathsf H}'(t)p_{\mathsf b}'(t))$, where $p_{\mathsf H}=p_{\mathsf F}p_{\mathsf G}$, $0\le r<2D$, $0\le j<D$, and $(\epsilon_{\mathsf F},\epsilon_{\mathsf G})=(0,1)$, with $\mathsf b\in\{\mathsf F,\mathsf G\}$. For a left-null vector $(\eta_r)$, put $w(t)=\sum_{r=0}^{2D-1}\sqrt{\eta_r}\,t^r$. Lagrange interpolation, after taking square roots, gives unique polynomials $g_{\mathsf b}$ of degree less than $D$ such that $w^2\equiv t^{\epsilon_{\mathsf b}}g_{\mathsf b}^2p_{\mathsf H}/p_{\mathsf b}\pmod{p_{\mathsf b}}$, with $\deg w<2D$. The Chinese remainder theorem and differentiation give the displayed relation, and reversing these steps gives the converse. For Eq.~\eqref{eq:subrank-binomial-polynomials},
\begin{equation}
    E_{\mathsf F}=t^D+\sqrt{\beta_1\beta_2},\qquad
H_{\mathsf G}=\sqrt{\beta_3+\beta_4}\,t^{(D-1)/2}.
\end{equation}
They are coprime. The relation forces $E_{\mathsf F}\mid g_{\mathsf F}$; since $\deg g_{\mathsf F}<D=\deg E_{\mathsf F}$, both $g_{\mathsf F}$ and $g_{\mathsf G}$ vanish. The factors removed when passing from the contraction to the moment matrix are products of nonzero roots and Vandermonde determinants.

It follows that the highest coefficient of $\Lambda_D$ is 
\begin{equation}\label{eq:subrank-leading-pairing}
a_U^{s-1}a_V^{s(s-1)}
             \operatorname{CT}_{Q-1}(\mathscr R_s).
\end{equation}
Thus $\Lambda_D$ is not the zero polynomial whenever the old pairing is nonzero. This holds on the binomial locus: scaling the row roots replaces each $\beta_i$ by $\lambda^{2D}\beta_i$.

We next prove a uniform recurrence for the entire polynomials $U_D,V_D$. This uniformity is needed before choosing $Q$. Put $n_{\mathrm{old}}=\sum_{i\in I}D_i$, the number of old row roots. The following calculation shows that there are integers $W,S_*$ depending only on $n_{\mathrm{old}}$, a matrix $\mathsf T$ of size at most $S_*$, and a linear map $\mathcal L$ such that
\begin{equation}\label{eq:subrank-simultaneous-transfer}
(U_{2\ell+1},V_{2\ell+1})
       =\mathcal L\bigl(\mathsf T^{\ell-2W}\bigr)
       \qquad(\ell\ge4W).
\end{equation}
The output of $\mathcal L$ is a pair of polynomials in the old coordinate variables. Its field of definition is
$\mathbb E=\mathbb F_{Q^{L_{\mathrm{old}}}}$, where $L_{\mathrm{old}}=\operatorname{lcm}_{i\in I}D_i$. In particular, no new splitting field is needed.

We give exact matrices for this assertion. For a row-label set $S$, write $t_\rho=\zeta_\rho^2$ and put $f_S(t)=\prod_{\rho\in S}(t+\zeta_\rho^2)$, $\Delta(S)=\prod_{\rho<\sigma\in S}(\zeta_\rho^2+\zeta_\sigma^2)$, and $\varpi(S)=\prod_{\rho\in S}\zeta_\rho$. The ordering in $\Delta$ is immaterial in characteristic two. For a monic polynomial $p$ of degree $b$, let $M_p(f)$ denote multiplication by $f$ in $\mathbb E[t]/(p)$, in its monomial basis, and let
\begin{align}
G_p[i,j]=[t^{b-1}](t^{i+j}\bmod p),\qquad 0\le i,j<b.
\end{align}
At distinct roots, this is the Gram matrix of $\langle f,g\rangle=\sum_{p(t)=0}f(t)g(t)/p'(t)$.

For $U_D$, set $S_2=\widehat{J_2}$, $S_3=\widehat{J_3}$. Choose $A_2\subseteq S_2$, $A_3\subseteq S_3$ in the even-row expansions of its two determinants. The contribution is zero unless $|S_2|/2-|A_2|=|S_3|/2-|A_3|=\delta$. Set $r_U=2D+\delta$, $R=f_{A_2}f_{A_3}$, $H=f_{S_2\setminus A_2}f_{S_3\setminus A_3}$, and $p=p_{\mathsf H}$. If $\mathsf E$ has columns $1,t^2,\ldots,t^{2(r_U-1)}$ in $\mathbb E[t]/(p)$, the contribution to the old monomial $\prod_{\rho\in A_2}U_\rho\prod_{\rho\in S_2\setminus A_2}V_\rho  \prod_{\rho\in A_3}V_\rho\prod_{\rho\in S_3\setminus A_3}U_\rho$ is
\begin{equation}\label{eq:subrank-U-border}
\Delta(A_2)\Delta(S_2\setminus A_2)\Delta(A_3)\Delta(S_3\setminus A_3)
\varpi(S_2\setminus A_2)\varpi(S_3\setminus A_3)
\det\begin{pmatrix}
M_p(tHp')&\mathsf E\\ \mathsf E^{\mathsf T}G_pM_p(R)&0
\end{pmatrix}.
\end{equation}
Contributions to the same monomial are added. To verify the formula, expand along even rows. Its new-root sum is
\begin{equation}
    \sum_{\substack{J\subseteq\widehat{\mathsf H}\\|J|=r_U}}
\Delta(J)^2\Delta(\widehat{\mathsf H}\setminus J)^2
\prod_{\rho\in J}R(t_\rho)\prod_{\rho\in\widehat{\mathsf H}\setminus J}t_\rho H(t_\rho).
\end{equation}
Cauchy--Binet and a Schur complement give Eq.~\eqref{eq:subrank-U-border}, since $\prod_{p(t)=0}p'(t)=\Delta(\widehat{\mathsf H})^2$. This identity is polynomial and therefore also holds when an intermediate inverse does not exist.

For $V_D$, choose the old $U$-variable sets $\mathcal P_i\subseteq\widehat{B_i}$ for $i=1,2,3$, and set $\mathcal Q_i=\widehat{B_i}\setminus \mathcal P_i$ and $\nu_i=|\widehat{B_i}|/2$. A contribution is zero unless
$\sum_i|\mathcal P_i|=\sum_i \nu_i$. Put
\begin{align}
    & r_{\mathsf F}=D+|\mathcal P_2|-\nu_2,\quad
r_{\mathsf G}=D+|\mathcal P_3|-\nu_3,\quad r_V=r_{\mathsf F}+r_{\mathsf G},\\
& \mathcal R_{\mathsf F}=f_{\mathcal P_1}f_{\mathcal Q_2},\quad
\mathcal H_{\mathsf F}=f_{\mathcal Q_1}f_{\mathcal P_2},\qquad
\mathcal R_{\mathsf G}=f_{\mathcal P_1}f_{\mathcal Q_3},\quad
\mathcal H_{\mathsf G}=f_{\mathcal Q_1}f_{\mathcal P_3}.
\end{align}
Work in
$\mathbb E[t]/(p_{\mathsf F})\oplus\mathbb E[t]/(p_{\mathsf G})$.
Let $G=G_{p_{\mathsf F}}\oplus G_{p_{\mathsf G}}$, let
\begin{equation}
    \mathcal S_{\mathsf b}=M_{p_{\mathsf b}}
 (t^{\epsilon_{\mathsf b}}\mathcal H_{\mathsf b}p_{\mathsf b}'p_{\mathsf H}/p_{\mathsf b}),
\qquad
R_*=M_{p_{\mathsf F}}(\mathcal R_{\mathsf F})
       \oplus M_{p_{\mathsf G}}(\mathcal R_{\mathsf G}),\qquad
\mathcal S_* =\mathcal S_{\mathsf F}\oplus\mathcal S_{\mathsf G}.
\end{equation}
Let $\mathsf E$ have columns $t^{2j}$, $0\le j<r_V$, evaluated in both summands. For each $\mathsf b\in\{\mathsf F,\mathsf G\}$, let $W_0$ have the columns $t^{2j}$, $0\le j<r_{\mathsf b}$, supported in that summand. Write $\operatorname{Res}(f,g)$ for the polynomial resultant of $f$ and $g$. The coefficient of $\prod_{i=1}^3\prod_{\rho\in\mathcal P_i}U_\rho\prod_{\rho\in\mathcal Q_i}V_\rho$ is
\begin{equation}\label{eq:subrank-V-border}
\left(\prod_{i=1}^3\Delta(\mathcal P_i)\Delta(\mathcal Q_i)\right)
\varpi(\mathcal Q_1\cup \mathcal Q_2\cup \mathcal P_3)\,
\frac{\varpi(\widehat{\mathsf F})}{\operatorname{Res}(p_{\mathsf F},p_{\mathsf G})}
\det\begin{pmatrix}\mathcal S_*&W_0\\\mathsf E^{\mathsf T}GR_*&0\end{pmatrix}.
\end{equation}
Indeed, even-row expansion gives a moment matrix with rows $0\le r<r_V$ and columns $(\mathsf b,j)$, $0\le j<r_{\mathsf b}$, whose entries are
\begin{equation}
    \sum_{p_{\mathsf b}(t)=0}
\frac{t^{2r+2j}\mathcal R_{\mathsf b}(t)}
 {t^{\epsilon_{\mathsf b}}\mathcal H_{\mathsf b}(t)
       p_{\mathsf H}'(t)p_{\mathsf b}'(t)}.
\end{equation}
After removing the old-root factors in Eq.~\eqref{eq:subrank-V-border}, Cauchy--Binet gives the determinant of this matrix times
\begin{equation}
\varpi(\widehat{\mathsf F})(\prod_{\rho\in\widehat{\mathsf G}}t_\rho)
 \Delta(\widehat{\mathsf H})\Delta(\widehat{\mathsf F})\Delta(\widehat{\mathsf G})
 \prod_{\mathsf b\in\{\mathsf F,\mathsf G\}}\prod_{p_{\mathsf b}(t)=0}\mathcal H_{\mathsf b}(t).
\end{equation}
The determinant of $\mathcal S_*$ cancels the last product and the derivative factors, leaving exactly Eq.~\eqref{eq:subrank-V-border}. In particular the only degree-dependent scalar outside the bordered determinant is
\begin{equation}\label{eq:subrank-geometric-scalar}
\frac{\varpi(\widehat{\mathsf F})}{\operatorname{Res}(p_{\mathsf F},p_{\mathsf G})}
=\sqrt{\beta_1\beta_2}\,R_0^{-D},\qquad
R_0=\prod_{i=1}^2\prod_{j=3}^4(\beta_i+\beta_j).
\end{equation}

Here is the fixed-size recurrence for these bordered determinants. Write $Y=t^D$. The bases $Y^jt^\rho$, $0\le\rho<D$, have four $Y$-coordinates for $p_{\mathsf H}$, and two for each of $p_{\mathsf F},p_{\mathsf G}$. Multiplication by $Y$ is a fixed companion matrix. Multiplication by an old polynomial has bounded range in $\rho$; wraparounds only join fixed strips at the two ends. If $p(t)=\widehat p(t^D)$, then
\begin{equation}
    G_p[(j,\rho),(j',\rho')]
=\boldsymbol1_{\rho+\rho'=D-1}G_{\widehat p}[j,j'].
\end{equation}
Reflecting the bottom rows by $\rho\mapsto D-1-\rho$ therefore makes the lower-left block local. For the first identity below let $\widehat p=\widehat p_{\mathsf F}\widehat p_{\mathsf G}$. The large derivative terms are local as well:
\begin{equation}
    tHp'=Y\widehat p'(Y)H(t),\qquad
t^{\epsilon_{\mathsf b}}\mathcal H_{\mathsf b}p_{\mathsf b}'p_{\mathsf H}/p_{\mathsf b}
=Y\widehat p_{\mathsf b}'(Y)\widehat p_{-\mathsf b}(Y)
                    t^{\epsilon_{\mathsf b}-1}\mathcal H_{\mathsf b}(t),
\end{equation}
where $\widehat p_{\mathsf F}(Y)=(Y+\beta_1)(Y+\beta_2)$, $\widehat p_{\mathsf G}(Y)=(Y+\beta_3)(Y+\beta_4)$, and $\widehat p_{-\mathsf b}$ is the other quadratic. In the bulk, the even monomial columns select $Y$-coordinates according only to the parity of $\rho$. There are six rows and six columns per residue. Grouping two consecutive residues produces a stationary $12$ by $12$ block matrix. Changing $r_U,r_{\mathsf F},r_{\mathsf G}$ from their central values only adds or removes columns and reflected rows within fixed end strips. The last unpaired residue and all wraparounds also lie in these strips. One may take the block range to be $[-(2n_{\mathrm{old}}+4),2n_{\mathrm{old}}+4]$ and an end-strip width $W=8n_{\mathrm{old}}+32$.

For clarity, the transfer step uses only the determinant expansion. In characteristic two a determinant is the weighted sum of perfect matchings of its row--column graph. Sweep its stationary blocks from left to right. Record which row and column vertices within the fixed range of the cut have already been matched. These finitely many states give one fixed transfer matrix. Sum separately over the finitely many matchings of the end strips, including edges between the two ends; the result is a linear functional of a power of that matrix. There are at most $2^{12(4n_{\mathrm{old}}+8)}$ states per old-variable choice and at most $2^{2n_{\mathrm{old}}}+2^{n_{\mathrm{old}}}$ choices. The common factor $R_0^{-2\ell-1}$ in Eq.~\eqref{eq:subrank-geometric-scalar} is absorbed by scaling the second transfer matrix. Taking their direct sum proves Eq.~\eqref{eq:subrank-simultaneous-transfer}, with the uniform bound
\begin{equation}\label{eq:subrank-transfer-bound}
\dim\mathsf T\le S_*=2^{100(n_{\mathrm{old}}+1)}.
\end{equation}

We can now choose the parameters in the required order. Fix the old degrees first, with a sufficiently large common even shift that the expansion hypotheses hold for all sufficiently large $s$. The justification for this choice is the polynomial dependence of the local expansion and global spectral thresholds on the inverse local rate and the local length ratio, as discussed in the parameter choice above. The integers $n_{\mathrm{old}},W,S_*,L_{\mathrm{old}}$ are now fixed. Choose $n_0$ so that
\begin{equation}
    D_0=3^{n_0}>\max_iD_i,\qquad
(3^{n_0}-1)/2\ge4W+S_*,\qquad
n_0\ge1+v_3(L_{\mathrm{old}})+\lfloor\log_3S_*\rfloor.
\end{equation}
Here $v_3$ denotes the exponent of $3$ in a positive integer.

Choose $s=2^{h_s}$ sufficiently large with $3\nmid h_s$. Lemma~\ref{scalar:nonzero-cup} supplies an old irreducible tuple with nonzero pairing. Choose four distinct noncubes $\gamma_i\in\mathbb F_Q^\times$ and use row-root polynomials $z^{D_0}+\gamma_i$. By Eq.~\eqref{eq:subrank-leading-pairing}, the pairing under common root scaling is a nonzero polynomial in $\lambda$. To bound its exceptional set, fix a power of two $\theta>11$ and define $\widetilde K_0,\widetilde Z_s$ by Eq.~\eqref{scalar:prefix-polynomial}, replacing $L,\mathscr D$ by $\widetilde L,\widetilde{\mathscr D}$ evaluated at $D=D_0$. The same binary-digit argument gives
\begin{equation}
    \Lambda_{D_0}=\widetilde Z_s\widetilde K_0^{Q-s\theta}.
\end{equation}
The nonzero polynomial $\widetilde Z_s\widetilde K_0$ has degree $O(s)$ in $\lambda$, because all old degrees and $D_0$ were fixed before $s$. Thus it does not vanish at some $\lambda\in\mathbb F_Q^\times$ for sufficiently large $s$, since $Q=s^2$. Replace $\gamma_i$ by $\lambda^{D_0}\gamma_i$ and put $\beta_i=\gamma_i^2$. The $\gamma_i$ remain distinct noncubes because $3\mid D_0$. All four seed polynomials are irreducible: $v_3(Q-1)=1$, and a root of $z^{3^n}-\beta_i$ has multiplicative order with $3$-part $3^{n+1}$, whose degree over $\mathbb F_Q$ is divisible by $3^n$. Its defining polynomial has that degree. This proves irreducibility for every $n\ge1$. Freeze the old tuple, $s$, and the four $\beta_i$.

The image of $\mathsf T$ stabilizes by its $S_*$th power. On that image its order divides
\begin{equation}
    P_{\mathrm{per}}=2^{\lceil\log_2S_*\rceil}
       \operatorname{lcm}_{1\le j\le S_*}(Q^{L_{\mathrm{old}}j}-1).
\end{equation}
The power of two bounds the unipotent part, and the remaining factors bound the orders of eigenvalues. Thus all coefficients in Eq.~\eqref{eq:subrank-simultaneous-transfer} are periodic with period $P_{\mathrm{per}}$ after the chosen seed. Moreover
\begin{equation}
b:=v_3(P_{\mathrm{per}})=1+v_3(L_{\mathrm{old}})+\lfloor\log_3S_*\rfloor.
\end{equation}
Let $a_{\mathrm{per}}$ be the multiplicative order of $3$ modulo $2P_{\mathrm{per}}/3^b$ and put $D^{(j)}=3^{n_0+a_{\mathrm{per}}j}$ for $j\ge0$. Then $(D^{(j)}-1)/2\equiv(D_0-1)/2\pmod{P_{\mathrm{per}}}$. Consequently $U_{D^{(j)}},V_{D^{(j)}}$ equal their seed polynomials, and $\Lambda_{D^{(j)}}=\Lambda_{D_0}\ne0$. The four binomials remain irreducible and distinct, and their degrees exceed the old degrees. Equation~\eqref{eq:subrank-scalar-comparison} gives $c_U(D^{(j)})\ne0$, as required.
\end{proof}

\subsection{Extracting a diagonal restriction.}

Let $\beta(f,g)=\sum_p f(p)g(p)$ on $\mathsf S$. It is nondegenerate: the orthogonal complement of $\mathsf S$ in the permutation space is the constant line, which meets $\mathsf S$ trivially because $q_D+1=1$ in characteristic two. Let $\Omega\in\mathsf S\otimes\mathsf S$ be the tensor corresponding to the inverse of this form. In projective-point coordinates, $\Omega$ has diagonal entries zero and off-diagonal entries one. For $f,h\in\mathsf S$ and $G=\mathrm{SL}_2(\mathbb F_{q_D})$,
\begin{equation}\label{eq:subrank-full-norm}
 \sum_{g\in G}gf\otimes gh=\beta(f,h)\Omega.
\end{equation}
Indeed, a one-point stabilizer has size $q_D(q_D-1)=0$ in $E$, whereas an ordered-distinct-pair stabilizer has size $q_D-1=1$. Thus each off-diagonal coordinate of the sum is $\sum_{p\ne p'}f(p)h(p')=\beta(f,h)$.

Fix a projective point $p_0$ and put $z_{p_0}=\mathbf1+\delta_{p_0}\in\mathsf S$, where $\mathbf1$ is the constant-one function and $\delta_{p_0}$ is the indicator of $p_0$. Then $\beta(z_{p_0},f)=f(p_0)$. Let $U\leq G$ be the full stabilizer of a nonzero vector spanning $p_0$. Each $z_p$ occurs $q_D-1$ times in its $G/U$ orbit, and hence
\begin{equation}\label{eq:subrank-coset-norm}
 \sum_{gU\in G/U}gz_{p_0}\otimes gz_{p_0}=\Omega.
\end{equation}
Equivalently, the orbit sum of the two evaluation functionals is the contraction $\beta$. These identities involve no division by a group order.

Merge the $\mathsf A_-$ and $\mathsf A_+$ factors at each prime using Eq.~\eqref{eq:subrank-digit-product}, and reorder the factors. This gives an equivariant isomorphism
\begin{align}
     \Theta:\mathsf R_1\otimes\mathsf R_2\otimes\mathsf R_3
           \longrightarrow
       \bigotimes_{w\in\mathsf H}(\mathsf S_w\otimes\mathsf S_w).
\end{align}
Subscripts $w$ denote copies at the indicated prime. Put $\mathcal K=\Theta^{-1}  \left(\bigotimes_{w\in\mathsf H}\Omega_w\right).$ We apply these identities to the original invariant $T$ in Eq.~\eqref{eq:subrank-logical-spaces}.

\begin{lemma}\label{lem:subrank-norm-bridge}
With coherent source normalizations, the coefficient maps satisfy
\begin{align}
    (\Phi_1,\Phi_2,\Phi_3)^*(T\otimes E)=c_U(D)\mathcal K.
\end{align}
\end{lemma}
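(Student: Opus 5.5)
The plan is to view $(\Phi_1,\Phi_2,\Phi_3)^*(T\otimes E)$ as a tensor $\mathcal T_D\in\mathsf R_1\otimes\mathsf R_2\otimes\mathsf R_3$, and to use $\mathbf G$-equivariance to pin it down as a multiple of $\mathcal K$. Two facts do the work. First, $\mathcal T_D$ is $\mathbf G$-invariant. Second, its value on the unipotent seed vectors equals $c_U(D)$. The evaluation at the seed is then transported around the orbit by the coset-norm identity, Eq.~\eqref{eq:subrank-coset-norm}.

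\textbf{Step 1: invariance.} The cycle $\xi_{3,D}$ is the transfer $T_\#\xi_3$, so it is fixed by every deck transformation of $X\to Y_3$. The pulled-back sheaves carry the deck action, and polynomial multiplication $\mu_3$ commutes with pullback. By Eq.~\eqref{eq:qltc-cover-cup}, cup products commute with deck transformations once the approximate inverses are chosen compatibly. Since the maps $\Phi_a$ are $\mathbf G$-equivariant by construction, it follows that $\mathcal T_D$ is a $\mathbf G$-invariant vector of $\mathsf R_1\otimes\mathsf R_2\otimes\mathsf R_3$. Equivalently, $\Theta(\mathcal T_D)$ is invariant in $\bigotimes_w(\mathsf S_w\otimes\mathsf S_w)$.

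\textbf{Step 2: reduction to the seed.} Write each modified source cocycle on $X$ as the image under $\Phi_a$ of an element of $\mathsf R_a^\vee$, namely the evaluation of the fresh polynomial factors at a fresh vector. Evaluating at the chosen nonzero vectors $v_w$ and pushing down along $X\to X^U$ gives $\alpha_a^U$. The transfer for the intermediate cover $X\to X^U$ turns pairings on $X^U$ into sums over $\mathbf G/\mathsf U$ of pairings on $X$. Projection-formula reasoning, in the form of Proposition~\ref{prop:qltc-covering} applied with the cover $X\to X^U$, then gives
\begin{equation*}
c_U(D)=\sum_{g\mathsf U\in\mathbf G/\mathsf U}\mathcal T_D\bigl(g\cdot(\epsilon_1\otimes\epsilon_2\otimes\epsilon_3)\bigr),
\end{equation*}
where $\epsilon_a\in\mathsf R_a^\vee$ are the seed evaluation functionals.

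Under $\Theta$ and the digit isomorphism $\mu$ of Eq.~\eqref{eq:subrank-digit-product}, the merged seed functionals at each prime $w$ are evaluation at $p_w$, that is, pairing with $z_{p_w}$ under $\beta$. Applying Eq.~\eqref{eq:subrank-coset-norm} factor by factor, the orbit sum of the seed tensors equals $\bigotimes_w\Omega_w$. Since $\mathcal T_D$ is invariant, this identifies its contraction against the orbit sum with the coefficient of $\bigotimes_w\Omega_w$.

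\textbf{Step 3: uniqueness of the invariant.} It remains to show that the invariants of $\bigotimes_w(\mathsf S_w\otimes\mathsf S_w)$ that are relevant here form the line spanned by $\bigotimes_w\Omega_w$. The plan is to use that each $\mathsf S_w$ carries the nondegenerate invariant form $\beta$. This identifies $\mathsf S_w\otimes\mathsf S_w$ with $\operatorname{End}(\mathsf S_w)$, and a tensor product of $\mathbf G$-invariants over the independent factors $\mathrm{SL}_2(\mathbb F_{q_D})$ then corresponds to an endomorphism commuting with each factor. The hard point is that $\mathsf S_w$ is a modular permutation-type representation, so absolute irreducibility is not automatic. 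I would handle this with Steinberg's tensor product theorem: $\mathsf A_+$ and $\mathsf A_-$ are twisted tensor products of Frobenius twists of the natural module, hence irreducible, and their product $\mathsf S$ is the Steinberg module, which is absolutely irreducible of dimension $q_D$. Schur's lemma then gives the one-dimensional commutant.

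With this, the contraction identity from Step 2 forces $\mathcal T_D=c_U(D)\mathcal K$. The coherent source normalizations are exactly what keep the scalar from picking up extra factors when the seed is moved. I expect Step 3 together with the orbit bookkeeping in characteristic two, where one must never divide by $|\mathsf U|$, to be the main obstacle.
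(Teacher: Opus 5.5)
There is a genuine gap in how you identify the scalar, in Step~2 and in your final sentence. The displayed identity $c_U(D)=\sum_{g\mathsf U\in\mathbf G/\mathsf U}\mathcal T_D\bigl(g\cdot(\epsilon_1\otimes\epsilon_2\otimes\epsilon_3)\bigr)$ is false.

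By your Step~1, every term equals $\mathcal T_D(\epsilon_1\otimes\epsilon_2\otimes\epsilon_3)$. That is the pairing on $X$ of the pullbacks of the $\alpha_a^U$ against the transfer of $\xi_{3,D}^U$. Proposition~\ref{prop:qltc-covering}, applied to the $|\mathsf U|$-sheeted cover $X\to X^U$, gives this as $|\mathsf U|\,c_U(D)$, which vanishes because $|\mathsf U|=q_D^4$. Your right-hand side is therefore $|\mathbf G|\,c_U(D)=0$.

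The same obstruction appears on the representation side. In point coordinates $\Omega_w$ has zero diagonal, so $\beta_w(\Omega_w)=\dim\mathsf S_w=q_D=0$ in $E$. Hence the contraction of $\mathcal K$ with the seed orbit sum $\bigotimes_w\beta_w$ is $q_D^4=0$, not $1$. More generally, Eq.~\eqref{eq:subrank-full-norm} shows $\mathcal K$ is a full $\mathbf G$-norm, so every $\mathbf G$-invariant functional annihilates it.

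Your Steps~1 and~3 are correct; the Steinberg-module argument does show the invariants form the line $E\,\mathcal K$. But invariance plus uniqueness cannot recover the scalar here. Read literally, your argument proves $c_U(D)=0$, contradicting Lemma~\ref{subrank:nonzero-family}. This is exactly the pitfall the paper points out right after the lemma.

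The missing idea is to compute with a non-invariant fundamental-domain contribution instead of the invariant tensor. Choose one lift of each cell of $Y_3$ with compatible coefficient frames. Let $h\in\mathsf R_1\otimes\mathsf R_2\otimes\mathsf R_3$ be the sum of their cup contributions, including the cycle covectors.

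All lifts together give $(\Phi_1,\Phi_2,\Phi_3)^*(T\otimes E)=\sum_{g\in\mathbf G}gh$. Eq.~\eqref{eq:subrank-full-norm}, applied prime by prime after $\Theta$, evaluates this norm directly as $\bigl(\bigotimes_w\beta_w\bigr)(\Theta h)\,\mathcal K$. No Schur-type uniqueness is needed.

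On $X^U$, the cells over a given cell of $Y_3$ are indexed by $\mathsf U$-cosets. So $c_U(D)$ is the coset sum of the seed functional evaluated on the translates of $h$. Under $\Theta$ the seed is $\mathrm{ev}_{p_0}\otimes\mathrm{ev}_{p_0}$ at each prime, because the product of the two complementary-digit evaluations is the evaluation of the product. Eq.~\eqref{eq:subrank-coset-norm} then turns its coset sum into $\beta_w$.

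Hence $c_U(D)=\bigl(\bigotimes_w\beta_w\bigr)(\Theta h)$, which is exactly the scalar in the previous formula. At no point is a group order divided out or used as a normalizing factor.
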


\begin{proof}
Choose one lift and compatible coefficient frames for each cell of $Y_3$. Sum their cup contributions, including the original cycle covectors, to obtain $h\in\mathsf R_1\otimes\mathsf R_2\otimes\mathsf R_3$. All lifts of these cells on $X$ give the group sum $\sum_{g\in \mathbf G}gh$. By  Eq.~\eqref{eq:subrank-full-norm}, it equals $\left(\bigotimes_{w\in\mathsf H}\beta_w\right)(\Theta h)\,\mathcal K$. On the intermediate cover $X^U$, the same contributions are evaluated and summed over the unipotent cosets. Multiplication of the two complementary-digit evaluations is evaluation of their product, so under $\Theta$ the local functional is $\mathrm{ev}_{p_0}\otimes\mathrm{ev}_{p_0}$. Equation~\eqref{eq:subrank-coset-norm} shows that its coset sum is $\beta_w$. The scalar in the preceding display is therefore exactly Eq.~\eqref{eq:subrank-unipotent-value}. The original $w,\alpha_4$, and the adjoint defining $\xi_3$ have no fresh variables; thus the same computation includes the capped cycle and uses Eq.~\eqref{eq:cap-pairing-preserved}.
\end{proof}

The full coefficient tensor is essential here. Pulling back only the three evaluated cocycles from $X^U$ multiplies their pairing by the even number $|\mathsf U|$ and gives zero. The lemma instead retains their different coefficients.

We show that $\mathcal K$ and a corresponding nonzero logical restriction are defined over the fixed physical field $\mathbb F_q$. First work over $\mathbb F_{q_D}$ and put $W=\operatorname{Sym}^{s-1}(\mathbb F_{q_D}^2)$. Superscripts in parentheses below mean Frobenius twists of the matrices of the group action. The digit descriptions give
\begin{align}
     \mathsf A_+\cong\bigotimes_{j=0}^{D-1}W^{(Q^j)},\qquad
 \mathsf A_-\cong\bigotimes_{j=0}^{D-1}W^{(sQ^j)}.
\end{align}
Combine coefficient $Q$-Frobenius with the cyclic permutation moving the last tensor factor to the first, and denote the resulting semilinear operator by $\sigma$. It commutes with $\mathrm{SL}_2(\mathbb F_{q_D})$, since $g^{[Q^D]}=g$, and $\sigma^D=1$. Its fixed spaces are $\mathbb F_Q$-forms of the two modules. Explicitly, on a tensor-basis orbit of length $d\mid D$, the fixed vectors have coefficients $a,a^Q,\ldots,a^{Q^{d-1}}$ with $a\in\mathbb F_{Q^d}$. A normal basis shows that these vectors span the orbit space after extension to $\mathbb F_{q_D}$.

At every vector of $\mathbb F_{q_D}^2$, applying $\sigma$ to either module raises its evaluated value to the $Q$th power. Thus $\mu$ descends exactly to these $\mathbb F_Q$-forms and to the $\mathbb F_Q$-valued permutation space $\mathsf S$. Extend these forms to $\mathbb F_q$. Henceforth $\mathsf A_+,\mathsf A_-,\mathsf R_a,\mathsf S,\mathcal K$ denote these fixed-field forms. Neither the forms nor the following restriction require $\mathbb F_{q_D}\subseteq\mathbb F_q$.

The invariant line is also defined over $\mathbb F_q$. The action on $\mathbb P^1(\mathbb F_{q_D})$ is doubly transitive, so the endomorphism algebra of its permutation representation has dimension two. That representation splits as $\mathbb F_q\mathbf1\oplus\mathsf S$. There are no homomorphisms between the two summands, since $\mathsf S$ and its dual have no invariant vectors. Consequently $\operatorname{End}_G(\mathsf S)=\mathbb F_q\mathrm{id}$ and $ (\mathsf R_1\otimes\mathsf R_2\otimes\mathsf R_3)^{\mathbf G} = \mathbb F_q\mathcal K$. For the product group, this follows by taking invariants separately in the four factors.

The maps $\Phi_a$ may initially require $E$. Nevertheless,
\begin{equation}
    E\otimes_{\mathbb F_q}
 \operatorname{Hom}_{\mathbb F_q\mathbf G}(\mathsf R_a^\vee,H_a)
 \cong
 \operatorname{Hom}_{E\mathbf G}
       (E\otimes\mathsf R_a^\vee,E\otimes H_a).
\end{equation}
equivariance is a finite system of linear equations over $\mathbb F_q$. Expand each $\Phi_a$ in a fixed-field basis of this Hom space. The pullback by each triple of basis maps is an $\mathbb F_q$-multiple of $\mathcal K$. Lemma~\ref{lem:subrank-norm-bridge} and Lemma~\ref{subrank:nonzero-family} show that at least one such multiple is nonzero. Rescaling one map gives fixed-field maps $f_a:\mathsf R_a^\vee\to H_a$ with
\begin{equation}\label{eq:subrank-rational-restriction}
 (f_1,f_2,f_3)^*T=\mathcal K.
\end{equation}

Choose $\mathbb F_q$-bases $(c_i)_{i=1}^{m}$ of $\mathsf A_-$ and $(a_j)_{j=1}^{m}$ of $\mathsf A_+$. The inverse form $\Omega$ and $\mu$ give a basis $(b_{ij})_{i,j=1}^{m}$ of $\mathsf S$ for which the local coefficient tensor at a prime in $\mathsf F$ is $ \sum_{i,j=1}^{m}c_i\otimes b_{ij}\otimes a_j $. Restricting its three dual spaces to $c_i^\vee,b_{ii}^\vee,a_i^\vee$ gives the diagonal tensor of size $m$. At a prime in $\mathsf G$, interchange the last two factors. Tensoring these four explicit restrictions gives a diagonal of size $m^4=q_D^2$. Since the first factor has dimension $q_D^2$, this is the exact subrank of $\mathcal K$. Equation~\eqref{eq:subrank-rational-restriction} therefore implies
\begin{equation}
     \operatorname{subrank}_{\mathbb F_q}(T)\geq q_D^2.
\end{equation}

Each displayed local tensor has injective contraction in every factor, and this remains true under tensor products. Thus Eq.~\eqref{eq:subrank-rational-restriction} forces all three $f_a$ to be injective. Together with
Eq.~\eqref{eq:subrank-source-dimensions}, this proves
\begin{equation}
     \dim_{\mathbb F_q}H_1\geq q_D^2,\qquad
 \dim_{\mathbb F_q}H_2,\dim_{\mathbb F_q}H_3\geq q_D^3.
\end{equation}
Finally choose $\eta\in\mathbb F_q$ with $\operatorname{Tr}_{\mathbb F_q/\mathbb F_2}(\eta)=1$. Multiplying the first vectors of the diagonal restriction by $\eta$ gives the same size diagonal for the binary trace tensor:
\begin{equation}
     \operatorname{subrank}_{\mathbb F_2}
   \bigl(\operatorname{Tr}_{\mathbb F_q/\mathbb F_2}T\bigr)
 \geq q_D^2.
\end{equation}

Write the four selected primes as $\mathsf H=\{w_1,\ldots,w_4\}$, where $w_\nu\in\mathbb F_Q[z]$ are distinct monic irreducibles of degree $D$, different from $z$ and the $P_i$. Recall that $q_D=Q^D$. The residual size $q_D$ grows, whereas the physical coefficient size $q$ remains fixed. Using the norm-one representatives of $\Gamma_{[6]}$, define
\begin{equation}\label{eq:subrank-principal-cover}
 \Gamma=\ker\left(\Gamma_{[6]}\longrightarrow
             \prod_{\nu=1}^4\mathrm{SL}_2(\mathbb F_{q_D})\right),
 \qquad
 X=\Gamma\backslash\prod_{i=1}^6T_i.
\end{equation}

\begin{lemma}\label{subrank:principal-geometry}
For all sufficiently large $D$, the complex $X$ is regular and satisfies Definition~\ref{def:construction-geometric-expansion} with the fixed bound in Eq.~\eqref{eq:construction-ramanujan-bound}. Moreover, 
\begin{equation}\label{eq:subrank-principal-index}
 [\Gamma_{[6]}:\Gamma]=[q_D(q_D^2-1)]^4,
 \qquad
 N_a=\bigl((\log_2q)\dim_{\mathbb F_q}C^2(Y_3,\mathcal F_a)\bigr)
             [q_D(q_D^2-1)]^4,
\end{equation}
where $N_a$ is the binary length of block $a$. In particular, $N=\sum_{a=1}^3N_a=\Theta(q_D^{12})$. The first block has positive rate, and all three blocks have linear distance, constant soundness, and bounded check weights and incidences.
\end{lemma}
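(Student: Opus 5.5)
The plan has three parts: the index, the local geometry and expansion, and then the code parameters via Theorem~\ref{thm:construction-good-qltc}.

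\textbf{Index.} We first show that simultaneous reduction $\Gamma_{[6]}\to\prod_{\nu=1}^4\mathrm{SL}_2(\mathbb F_{q_D})$ is surjective. The norm-one representatives form an $S$-arithmetic group in the norm-one group of the quaternion algebra $\mathscr A$. This group is simply connected, and the set $S$ of places made invertible (the six $P_i$, $i\in[6]$, together with $z$ and infinity) contains a place where $\mathscr A$ splits. Strong approximation then applies: the closure of $\Gamma_{[6]}$ in $\prod_{\nu}\mathrm{SL}_1(\mathscr A\otimes\mathcal O_{w_\nu})$ is the full product. The four $w_\nu$ are unramified, distinct from the $P_i$, and lie outside $S$. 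Each local factor is therefore $\mathrm{SL}_2(\mathcal O_{w_\nu})$, which reduces onto $\mathrm{SL}_2(\mathbb F_{q_D})$.

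Two further points must be checked. First, the finite-index conditions defining $\Gamma_{[6]}$ (trivial cotangent action at $0$ and even valuations) could in principle cut down the image. They do not, because they impose conditions only at places not among the $w_\nu$; approximation at the $w_\nu$ is independent of them. Second, $\{\pm I_2\}$ causes no loss, since $-1=1$ in characteristic two. Hence the index is $|\mathrm{SL}_2(\mathbb F_{q_D})|^4=[q_D(q_D^2-1)]^4$.

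\textbf{Geometry.} The cover $X\to Y_3$ is then a Galois cellular covering with this many sheets. Every cell of $Y_3=X_{[6]}$ has exactly that many lifts, which gives the formula for $N_a$ and $N=\Theta(q_D^{12})$. It remains to establish regularity and expansion.

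For regularity, we need $\Gamma$ to be injective on radius-$2t$ cubical balls. A nontrivial $\gamma\in\Gamma$ is congruent to $1$ modulo $\prod_\nu w_\nu$, which has degree $4D$. A height or degree argument on the skew-polynomial order then shows that $\gamma$ moves every vertex by distance growing linearly in $D$. The argument runs as follows. If $\gamma$ moved a vertex by at most $4t$, its entries would have $P_i$-adic denominators of bounded total degree. Combined with $\gamma\equiv1 \pmod{\prod_\nu w_\nu}$, this forces $\gamma=1$ once $D$ is large compared with $t\max_i D_i$. The same displacement bound gives freeness and the lifting of upward stars.

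For $\lambda$-expansion, every parallel-face graph of $X$ is a quotient of a product of trees by a congruence subgroup of an arithmetic lattice in an anisotropic quaternion algebra. The Ramanujan bound in Eq.~\eqref{eq:construction-ramanujan-bound} holds for all congruence quotients by the Drinfeld/Lafforgue Ramanujan property as used in \cite{RSV2019,GJ2026}. Connectivity follows from strong approximation again: each parallel-face graph is an orbit of the group generated by one-direction moves, and its closure contains the full congruence quotient.

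\textbf{Code parameters.} With regularity and expansion in hand, the local data are exactly those fixed in Section~\ref{sec:sources} and the parameter choice preceding Subsection~\ref{subsec:subrank-coefficients}. The sheaves are pullbacks, so the charts of Definition~\ref{def:construction-star-charts} are unchanged, and product expansion from Theorem~\ref{thm:construction-local-expansion} is unchanged. The spectral condition $\lambda\mathscr L<1$ was already arranged by fixing the old degrees. Theorem~\ref{thm:construction-good-qltc} then gives linear distance, constant soundness, and bounded check weights and incidences for each block. The rate conditions in Eq.~\eqref{eq:construction-rate-conditions} hold for block $1$, so block $1$ has positive rate. The retained coefficient factors from directions $7,8$ change only constants.

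\textbf{Main obstacle.} I expect the hardest step to be the surjectivity and index computation in the presence of the auxiliary conditions defining $\Gamma_{[6]}$, together with the uniform regularity in $D$. I would handle both through strong approximation for the norm-one group and an explicit displacement estimate for elements congruent to $1$ modulo $\prod_\nu w_\nu$.
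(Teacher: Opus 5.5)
Your outline matches the paper's proof in all four parts: strong approximation for $\mathrm{SL}_1(\mathscr A)$ (approximating at the four fresh places while keeping the old integrality and cotangent conditions elsewhere) gives the index and length formulas, a zero-versus-pole degree count gives the displacement bound, the Ramanujan bound for congruence quotients plus strong approximation gives expansion and connectivity of the parallel-face graphs, and the Gay--Jeronimo estimates give the code parameters. The paper differs in three details. It runs the degree count on $f=\operatorname{trd}(\gamma)$, which is conjugation-invariant, nonzero for $\gamma\neq1$, and divisible by each $w_\nu$ because $\operatorname{trd}(1)=0$ in characteristic two; your matrix-entry bound holds only in the frame of the moved vertex, so you would first need normality of $\Gamma$ in $\Gamma_{[6]}$ to reduce to base vertices. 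It also proves two points you only assert: that $\Gamma$ is a projective congruence subgroup in characteristic two, via an auxiliary place $w_0$ at which all $df_{\boldsymbol b}/f_{\boldsymbol b}$ are nonzero, and that the passive direction-$7,8$ factors, a flat local system with possibly nontrivial transport, are tolerated by the GJ local filling and transpose arguments.
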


\begin{proof}
Strong approximation for the simply connected group $\mathrm{SL}_1(\mathscr A)$, with one split active place omitted, gives simultaneous surjectivity at the four fresh places~\cite{Prasad1977}. All old integrality and cotangent conditions can be imposed at the other places. This proves the index formula. Every old cell has that many lifts and its coefficient space is unchanged, proving the length formula.

For regularity, let $\gamma\ne1$ belong to $\Gamma$ and put $f=\operatorname{trd}(\gamma)$. The reduced trace $f$ is nonzero: otherwise the norm-one condition and Cayley--Hamilton give $(\gamma-1)^2=0$, impossible in a division algebra. Each $w_\nu$ divides $f$, so its zero divisor has degree at least $4D$. Its poles occur only at $P_1,\ldots,P_6$. At a vertex $v=(v_i)_{i=1}^6$ of the product of trees, the Cartan decomposition bounds the pole order at $P_i$ by $\tfrac12\operatorname{dist}(v_i,\gamma v_i)$. Equality of the degrees of the zero and pole divisors gives
\begin{equation}
    \operatorname{dist}(v,\gamma v)\geq
 \frac{8D}{\max_{1\leq i\leq6}D_i}.
\end{equation}
Thus all prescribed finite-radius balls lift for large $D$.

We give the spectral justification also in characteristic two. Put $\Lambda_0=\Gamma_{[6]}^*$. Its norm squareclasses are represented by $f_{\boldsymbol b}=\prod_{i=1}^6P_i^{b_i}$, with $\boldsymbol b=(b_i)_{i=1}^6\in\{0,1\}^6$. For $\boldsymbol b\ne0$, the differential $df_{\boldsymbol b}/f_{\boldsymbol b}$ is nonzero. Choose one auxiliary place $w_0$, outside the old places, at which all these differentials are nonzero. A projective element congruent to the identity modulo $w_0^2$ has square reduced norm modulo $w_0^2$, hence zero first-order differential. It follows that the principal subgroup $\Lambda_0(w_0^2)$ lies in $\Gamma_{[6]}$. Consequently $\Gamma_{[6]}$ is a projective congruence subgroup. A scalar determinant-one residue matrix is the identity in characteristic two, so $\Gamma$ is also projective congruence. The auxiliary place only proves this assertion; no additional level is imposed on $X$.

For a parallel-face graph with moving direction $j$, its adjacency is the spherical Hecke operator at $P_j$. At the other active places one takes edge or vertex stabilizers according to the face type. This realizes the face operator in the automorphic spectrum of $\mathscr A^\times$ with trivial central character. Global Jacquet--Langlands preserves the split local factors~\cite[Theorem~3.2]{BR}. A cuspidal transfer has tempered $P_j$-factor by Ref.~\cite[Theorem~VI.10(i)]{Lafforgue2002}, giving adjacency eigenvalues of absolute value at most $2\sqrt{Q_j}$. The noncuspidal discrete spectrum in rank two consists of determinant characters~\cite{MW}; trivial central character leaves only the eigenvalues $\pm(Q_j+1)$. These inputs hold in characteristic two. Finally, strong approximation omitting $P_j$, now prescribing the transverse vertices or edges and the four principal congruences, identifies the connected components with exactly the fixed endpoint types outside the face directions and $j$. Each component is therefore connected and bipartite, so its two extremal eigenvalues are simple. This proves the required normalized bound on every parallel-face graph.

It remains to account for the two fixed-direction factors. Their transport defines a flat local system $\mathcal E_a$ of fixed rank $\rho_a=\dim_{\mathbb F_q}(U_{a,7,0}\otimes U_{a,8,0})$. On every upward star, trivializing $\mathcal E_a$ makes each restriction the identity on a common $\rho_a$-dimensional factor tensored with the usual local encoder. The local filling theorem already permits arbitrary passive coefficient spaces~\cite[Theorem~4.0.1]{GJ2026}. The global cochain estimates count nonzero face blocks and are unchanged by invertible transport. For the transpose argument, the passive factor becomes $\mathcal E_a^\vee$: if its original transport is $\varphi$, the dual transport is $\varphi^{-T}$. The local tensor-evaluation complex remains exact with this passive dual factor~\cite[Lemma~4.2.1]{GJ2026}, and the kernel construction and support comparison in Ref.~\cite[Section~7]{GJ2026} consequently have the same block-support bounds. The directional elimination used for the first-block rate multiplies every dimension by $\rho_a$, leaving the rate bound unchanged. Passing from face-block weight to binary coordinate weight loses at most fixed factors depending on $q$ and $\rho_a$~\cite[Proposition~8.1.1]{GJ2026}. The chosen local parameters thus give the asserted coding bounds, without trivializing the flat local systems or imposing another cover.
\end{proof}
\begin{proof}[Proof of Theorem~\ref{thm:subrank-polynomial} for $r=3$]
Choose the fixed data and the unbounded sequence of degrees supplied by Lemma~\ref{subrank:nonzero-family}. For each degree $D$, use the principal cover at that batch of four primes. The coefficient restriction and its descent give subrank at least $q_D^2$ over both $\mathbb F_q$ and for the binary trace tensor, and give $\dim_{\mathbb F_q}H_2,\dim_{\mathbb F_q}H_3\ge q_D^3$. Lemma~\ref{subrank:principal-geometry} gives $N=\Theta(q_D^{12})$ and the stated code parameters. Thus $q_D^2=\Theta(N^{1/6})$ and $q_D^3=\Theta(N^{1/4})$. The fixed field-to-binary conversion and the constant-overhead transversal conversion preserve these bounds.
\end{proof}

\subsection{Applications to magic state distillation}
Finally, these codes can be applied to $\CCZ$ magic-state distillation in the standard ideal-Clifford setting. Theorem~\ref{thm:subrank-polynomial} provides $k_{\CCZ}=\Omega(N^{1/6})$ independent logical $\CCZ$ gates, implemented using $M=O(N)$ physical $\CCZ$ gates, while the minimum distance of the code blocks is $d=\Theta(N)$. The standard postselected distillation protocol therefore has average overhead exponent~\cite{BravyiHaah2012,HastingsHaah2018}
\begin{equation}\label{eq:msd-overhead-exponent}
\gamma
=\frac{\log(M/k_{\CCZ})}{\log d}
\leq \frac{5}{6}+o(1),\qquad N\to\infty.
\end{equation}
Thus, for any $\eta>0$, choosing a sufficiently large code in the family and concatenating yields magic states with error at most $\epsilon$ using $O(\log^{5/6+\eta}(1/\epsilon))$ raw magic states per output, for sufficiently small input noise. This sublogarithmic overhead is achieved while the underlying code blocks simultaneously retain bounded-weight checks, linear distance, and constant soundness.

\section*{Acknowledgments} 
This work is supported in part by NSFC under Grant No.~12475023, Dushi Program, and a startup funding from YMSC.

\section*{AI disclosure}
The authors formulated the problem and overall strategies. Generative AI tools substantially assisted in developing and completing the proof details and drafting the manuscript under extensive instructions from the authors. The authors have reviewed and verified all arguments and take responsibility for the results.


\printbibliography[heading=bibintoc,title=References]

@misc{Panteleev2024,
  title         = {Maximally Extendable Sheaf Codes},
  author        = {Pavel Panteleev and Gleb Kalachev},
  year          = {2024},
  eprint        = {2403.03651},
  archivePrefix = {arXiv},
  primaryClass  = {cs.IT},
  note          = {arXiv:2403.03651}
}

@misc{KP2025Extendable,
  title         = {Maximally Extendable Product Codes are Good Coboundary Expanders},
  author        = {Gleb Kalachev and Pavel Panteleev},
  year          = {2025},
  eprint        = {2501.01411},
  archivePrefix = {arXiv},
  primaryClass  = {cs.IT},
  note          = {arXiv:2501.01411}
}

@inproceedings{QuantumTanner2022,
  title     = {Quantum Tanner codes},
  author    = {Leverrier, Anthony and Z{\'e}mor, Gilles},
  booktitle = {2022 IEEE 63rd Annual Symposium on Foundations of Computer Science (FOCS)},
  pages     = {872--883},
  year      = {2022},
  doi       = {10.1109/FOCS54457.2022.00117}
}

@article{Burton2022,
  title   = {Limitations on Transversal Gates for Hypergraph Product Codes},
  author  = {Burton, Simon and Browne, Dan},
  journal = {IEEE Trans. Inf. Theory},
  volume  = {68},
  number  = {3},
  pages   = {1772--1781},
  year    = {2022},
  doi     = {10.1109/TIT.2021.3131043}
}

@inproceedings{Golowich_Lin2024,
  title     = {Quantum {LDPC} codes with transversal non-{Clifford} gates via products of algebraic codes},
  author    = {Golowich, Louis and Lin, Ting-Chun},
  booktitle = {Proceedings of the 57th Annual {ACM} Symposium on Theory of Computing},
  pages     = {689--696},
  year      = {2025},
  doi       = {10.1145/3717823.3718139}
}

@inproceedings{Golowich_Guruswami2025A,
  title     = {Asymptotically Good Quantum Codes with Transversal Non-Clifford Gates},
  author    = {Golowich, Louis and Guruswami, Venkatesan},
  booktitle = {Proceedings of the 57th Annual ACM Symposium on Theory of Computing},
  series    = {STOC '25},
  pages     = {707--717},
  year      = {2025},
  doi       = {10.1145/3717823.3718234}
}

@inproceedings{Golowich_Guruswami2025B,
  title     = {Near-Asymptotically-Good Quantum Codes with Transversal CCZ Gates and Sublinear-Weight Parity-Checks},
  author    = {Golowich, Louis and Guruswami, Venkatesan},
  booktitle = {2025 IEEE 66th Annual Symposium on Foundations of Computer Science (FOCS)},
  pages     = {1561--1569},
  year      = {2025},
  doi       = {10.1109/FOCS63196.2025.00082}
}

@misc{Fu2025nogo,
  title         = {No-go theorems for logical gates on product quantum codes},
  author        = {Esther Xiaozhen Fu and Han Zheng and Zimu Li and Zi-Wen Liu},
  year          = {2025},
  eprint        = {2507.16797},
  archivePrefix = {arXiv},
  primaryClass  = {quant-ph},
  note          = {arXiv:2507.16797}
}

@inproceedings{PK2022Good,
  title     = {Asymptotically good Quantum and locally testable classical {LDPC} codes},
  author    = {Panteleev, Pavel and Kalachev, Gleb},
  booktitle = {Proceedings of the 54th Annual ACM SIGACT Symposium on Theory of Computing},
  series    = {STOC 2022},
  pages     = {375--388},
  year      = {2022},
  doi       = {10.1145/3519935.3520017}
}

@inproceedings{DHLV2022,
  title     = {Good Quantum {LDPC} Codes with Linear Time Decoders},
  author    = {Dinur, Irit and Hsieh, Min-Hsiu and Lin, Ting-Chun and Vidick, Thomas},
  booktitle = {Proceedings of the 55th Annual ACM Symposium on Theory of Computing},
  series    = {STOC 2023},
  pages     = {905--918},
  year      = {2023},
  doi       = {10.1145/3564246.3585101}
}

@inproceedings{Dinur2024sheaf,
  title     = {Expansion of High-Dimensional Cubical Complexes: with Application to Quantum Locally Testable Codes},
  author    = {Dinur, Irit and Lin, Ting-Chun and Vidick, Thomas},
  booktitle = {2024 IEEE 65th Annual Symposium on Foundations of Computer Science (FOCS)},
  pages     = {379--385},
  year      = {2024},
  doi       = {10.1109/FOCS61266.2024.00031}
}

@misc{Lin2024transversal,
  title         = {Transversal non-Clifford gates for quantum {LDPC} codes on sheaves},
  author        = {Ting-Chun Lin},
  year          = {2024},
  eprint        = {2410.14631},
  archivePrefix = {arXiv},
  primaryClass  = {quant-ph},
  note          = {arXiv:2410.14631}
}

@misc{Li2025Poincare,
  title         = {Poincar{\'e} Duality and Multiplicative Structures on Quantum Codes},
  author        = {Yiming Li and Zimu Li and Zi-Wen Liu and Quynh T. Nguyen},
  year          = {2025},
  eprint        = {2512.21922},
  archivePrefix = {arXiv},
  primaryClass  = {quant-ph},
  note          = {arXiv:2512.21922}
}

@misc{LSWLL2026Theory,
  title         = {Theory of (Co)homological Invariants on Quantum {LDPC} Codes},
  author        = {Zimu Li and Yuguo Shao and Fuchuan Wei and Yiming Li and Zi-Wen Liu},
  year          = {2026},
  eprint        = {2603.25831},
  archivePrefix = {arXiv},
  primaryClass  = {quant-ph},
  note          = {arXiv:2603.25831}
}

@misc{LLL2026nontrivial,
  title         = {Transversal non-Clifford gates on almost-good quantum {LDPC} and quantum locally testable codes},
  author        = {Yiming Li and Zimu Li and Zi-Wen Liu},
  year          = {2026},
  eprint        = {2604.01874},
  archivePrefix = {arXiv},
  primaryClass  = {quant-ph},
  note          = {arXiv:2604.01874}
}

@article{10.21468/SciPostPhys.14.4.065,
  title   = {{Codimension-2 defects and higher symmetries in (3+1)D topological phases}},
  author  = {Maissam Barkeshli and Yu-An Chen and Sheng-Jie Huang and Ryohei Kobayashi and Nathanan Tantivasadakarn and Guanyu Zhu},
  journal = {SciPost Phys.},
  volume  = {14},
  pages   = {065},
  year    = {2023},
  doi     = {10.21468/SciPostPhys.14.4.065}
}

@article{Wang_2024,
  title   = {Efficient fault-tolerant implementations of non-{Clifford} gates with reconfigurable atom arrays},
  author  = {Wang, Yifei and Wang, Yixu and Chen, Yu-An and Zhang, Wenjun and Zhang, Tao and Hu, Jiazhong and Chen, Wenlan and Gu, Yingfei and Liu, Zi-Wen},
  journal = {npj Quantum Inf.},
  volume  = {10},
  number  = {1},
  pages   = {136},
  year    = {2024},
  doi     = {10.1038/s41534-024-00945-3}
}

@article{Breuckmann2024Cups,
  title   = {Cups and Gates {I}: Cohomology invariants and logical quantum operations},
  author  = {Breuckmann, Nikolas P. and Davydova, Margarita and Eberhardt, Jens N. and Tantivasadakarn, Nathanan},
  journal = {Commun. Math. Phys.},
  volume  = {407},
  number  = {5},
  pages   = {86},
  year    = {2026},
  doi     = {10.1007/s00220-026-05570-z}
}

@misc{RainbowCode,
  title         = {Quantum rainbow codes},
  author        = {Scruby, Thomas R. and Pesah, Arthur and Webster, Mark},
  year          = {2025},
  eprint        = {2408.13130},
  archivePrefix = {arXiv},
  primaryClass  = {quant-ph},
  note          = {arXiv:2408.13130}
}

@article{Zhu2023,
  title   = {Non-Clifford and Parallelizable Fault-Tolerant Logical Gates on Constant and Almost-Constant Rate Homological Quantum Low-Density Parity-Check Codes via Higher Symmetries},
  author  = {Zhu, Guanyu and Sikander, Shehryar and Portnoy, Elia and Cross, Andrew W. and Brown, Benjamin J.},
  journal = {PRX Quantum},
  volume  = {6},
  number  = {4},
  pages   = {040361},
  year    = {2025},
  doi     = {10.1103/wcxs-w69t}
}

@misc{Zhu2025A,
  title         = {A topological theory for {qLDPC}: non-Clifford gates and magic state fountain on homological product codes with constant rate and beyond the $N^{1/3}$ distance barrier},
  author        = {Guanyu Zhu},
  year          = {2025},
  eprint        = {2501.19375},
  archivePrefix = {arXiv},
  primaryClass  = {quant-ph},
  note          = {arXiv:2501.19375}
}

@misc{Zhu2025B,
  title         = {Transversal non-Clifford gates on {qLDPC} codes breaking the $\sqrt{N}$ distance barrier and quantum-inspired geometry with $\mathbb{Z}_2$ systolic freedom},
  author        = {Guanyu Zhu},
  year          = {2025},
  eprint        = {2507.15056},
  archivePrefix = {arXiv},
  primaryClass  = {quant-ph},
  note          = {arXiv:2507.15056}
}

@article{Bombin_2007,
  title   = {Topological computation without braiding},
  author  = {Bombin, H. and Martin-Delgado, M. A.},
  journal = {Phys. Rev. Lett.},
  volume  = {98},
  number  = {16},
  pages   = {160502},
  year    = {2007},
  doi     = {10.1103/PhysRevLett.98.160502}
}

@misc{Bombin_2013,
  title         = {Gauge Color Codes: Optimal Transversal Gates and Gauge Fixing in Topological Stabilizer Codes},
  author        = {Bomb{\'i}n, H{\'e}ctor},
  year          = {2015},
  eprint        = {1311.0879},
  archivePrefix = {arXiv},
  primaryClass  = {quant-ph},
  note          = {arXiv:1311.0879}
}

@article{Kubica2015,
  title = {Universal transversal gates with color codes: A simplified approach},
  author = {Kubica, Aleksander and Beverland, Michael E.},
  journal = {Phys. Rev. A},
  volume = {91},
  issue = {3},
  pages = {032330},
  numpages = {12},
  year = {2015},
  month = {Mar},
  publisher = {American Physical Society},
  doi = {10.1103/PhysRevA.91.032330},
  url = {https://link.aps.org/doi/10.1103/PhysRevA.91.032330}
}

@article{Gottesman2013,
  title   = {Fault-tolerant quantum computation with constant overhead},
  author  = {Gottesman, Daniel},
  journal = {Quantum Inf. Comput.},
  volume  = {14},
  number  = {15--16},
  pages   = {1338--1372},
  year    = {2014},
  note    = {arXiv:1310.2984}
}

@article{BravyiHaah2012,
  author        = {Bravyi, Sergey and Haah, Jeongwan},
  title         = {Magic-state distillation with low overhead},
  journal       = {Phys. Rev. A},
  volume        = {86},
  pages         = {052329},
  year          = {2012},
  doi           = {10.1103/PhysRevA.86.052329},
  eprint        = {1209.2426},
  archivePrefix = {arXiv},
  primaryClass  = {quant-ph}
}

@article{HastingsHaah2018,
  author        = {Hastings, Matthew B. and Haah, Jeongwan},
  title         = {Distillation with Sublogarithmic Overhead},
  journal       = {Phys. Rev. Lett.},
  volume        = {120},
  pages         = {050504},
  year          = {2018},
  doi           = {10.1103/PhysRevLett.120.050504},
  eprint        = {1709.03543},
  archivePrefix = {arXiv},
  primaryClass  = {quant-ph}
}

@article{Yamasaki2024QusiPloylog,
  title   = {Time-efficient constant-space-overhead fault-tolerant quantum computation},
  author  = {Yamasaki, Hayata and Koashi, Masato},
  journal = {Nat. Phys.},
  volume  = {20},
  number  = {2},
  pages   = {247--253},
  year    = {2024},
  doi     = {10.1038/s41567-023-02325-8}
}

@article{Tamiya2024PolylogTime,
  title   = {Fault-tolerant quantum computation with polylogarithmic time and constant space overheads},
  author  = {Tamiya, Shiro and Koashi, Masato and Yamasaki, Hayata},
  journal = {Nat. Phys.},
  volume  = {22},
  number  = {1},
  pages   = {27--32},
  year    = {2026},
  doi     = {10.1038/s41567-025-03102-5}
}

@inproceedings{Nguyen2025FT,
  title     = {Quantum Fault Tolerance with Constant-Space and Logarithmic-Time Overheads},
  author    = {Nguyen, Quynh T. and Pattison, Christopher A.},
  booktitle = {Proceedings of the 57th Annual ACM Symposium on Theory of Computing},
  series    = {STOC '25},
  pages     = {730--737},
  year      = {2025},
  doi       = {10.1145/3717823.3718318}
}

@article{Wills2024magic,
  title   = {Constant-overhead magic state distillation},
  author  = {Wills, Adam and Hsieh, Min-Hsiu and Yamasaki, Hayata},
  journal = {Nat. Phys.},
  volume  = {21},
  number  = {11},
  pages   = {1842--1846},
  year    = {2025},
  doi     = {10.1038/s41567-025-03026-0}
}

@inproceedings{Nguyen2025CCZ,
  title     = {Good Binary Quantum Codes with Transversal CCZ Gate},
  author    = {Nguyen, Quynh T.},
  booktitle = {Proceedings of the 57th Annual ACM Symposium on Theory of Computing},
  series    = {STOC '25},
  pages     = {697--706},
  year      = {2025},
  doi       = {10.1145/3717823.3718186}
}

@misc{He2025addressable,
  title         = {Quantum Codes with Addressable and Transversal Non-Clifford Gates},
  author        = {Zhiyang He and Vinod Vaikuntanathan and Adam Wills and Rachel Yun Zhang},
  year          = {2025},
  eprint        = {2502.01864},
  archivePrefix = {arXiv},
  primaryClass  = {quant-ph},
  note          = {arXiv:2502.01864}
}

@inproceedings{Eldar2016NLETS,
  title     = {Local Hamiltonians whose ground states are hard to approximate},
  author    = {Eldar, Lior and Harrow, Aram W.},
  booktitle = {2017 IEEE 58th Annual Symposium on Foundations of Computer Science ({FOCS})},
  pages     = {427--438},
  year      = {2017},
  doi       = {10.1109/FOCS.2017.46}
}

@article{AharonovAradVidick2013_qPCPSurvey,
  title   = {The Quantum {PCP} Conjecture},
  author  = {Aharonov, Dorit and Arad, Itai and Vidick, Thomas},
  journal = {SIGACT News},
  volume  = {44},
  number  = {2},
  pages   = {47--79},
  year    = {2013},
  doi     = {10.1145/2491533.2491549}
}

@article{AharonovEldar2015QLTC,
  title   = {Quantum Locally Testable Codes},
  author  = {Dorit Aharonov and Lior Eldar},
  journal = {SIAM J. Comput.},
  volume  = {44},
  number  = {5},
  pages   = {1230--1262},
  year    = {2015},
  doi     = {10.1137/140975498}
}

@misc{Tiew2026copycup,
  title         = {Copy-cup Gates in Tensor Products of Group Algebra Codes},
  author        = {Ryan Tiew and Nikolas P. Breuckmann},
  year          = {2026},
  eprint        = {2602.23307},
  archivePrefix = {arXiv},
  primaryClass  = {quant-ph},
  note          = {arXiv:2602.23307}
}

@misc{VirgileCCZ,
  title         = {Good quantum codes with addressable and parallelizable non-{Clifford} gates},
  author        = {Guemard, Virgile},
  year          = {2025},
  eprint        = {2510.19809},
  archivePrefix = {arXiv},
  primaryClass  = {quant-ph},
  note          = {arXiv:2510.19809}
}

@misc{GJ2026,
      title={Asymptotically Good Quantum Locally Testable Codes}, 
      author={William Gay and Fernando Granha Jeronimo},
      year={2026},
      eprint={2609.20780},
      archivePrefix={arXiv},
      primaryClass={quant-ph},
      url={https://arxiv.org/abs/2609.20780}, 
}

@misc{hsieh2025explicitlosslessvertexexpanders,
      title={Explicit Lossless Vertex Expanders}, 
      author={Jun-Ting Hsieh and Alexander Lubotzky and Sidhanth Mohanty and Assaf Reiner and Rachel Yun Zhang},
      year={2025},
      eprint={2504.15087},
      archivePrefix={arXiv},
      primaryClass={math.CO},
      url={https://arxiv.org/abs/2504.15087}, 
}

@article{RSV2019,
  author = {Nithi Rungtanapirom and Jakob Stix and Alina Vdovina},
  title = {Infinite series of quaternionic 1-vertex cube complexes, the doubling construction, and explicit cubical Ramanujan complexes},
  journal = {International Journal of Algebra and Computation},
  volume = {29},
  year = {2019},
  eprint = {1808.03290},
  archivePrefix = {arXiv}
}

@article{Schwartz1980,
  author = {Jacob T. Schwartz},
  title = {Fast probabilistic algorithms for verification of polynomial identities},
  journal = {Journal of the ACM}, volume = {27}, number = {4},
  pages = {701--717}, year = {1980}
}

@misc{BR,
  author = {Alexandru Ioan Badulescu and Philippe Roche},
  title = {Global {Jacquet--Langlands} correspondence for division algebras in characteristic {$p$}},
  year = {2014},
  eprint = {1302.5289},
  archivePrefix = {arXiv},
  primaryClass = {math.NT},
  note = {Version 2, July 15, 2014}
}

@article{Lafforgue2002,
  author = {Laurent Lafforgue},
  title = {Chtoucas de {Drinfeld} et correspondance de {Langlands}},
  journal = {Inventiones Mathematicae},
  volume = {147},
  pages = {1--241},
  year = {2002}
}

@article{MW,
  author = {Colette M{\oe}glin and Jean-Loup Waldspurger},
  title = {Le spectre r{\'e}siduel de {GL}({$n$})},
  journal = {Annales scientifiques de l'{\'E}cole Normale Sup{\'e}rieure},
  series = {4},
  volume = {22},
  number = {4},
  pages = {605--674},
  year = {1989}
}

@book{Macdonald1995,
 author={I. G. Macdonald}, 
 title={Symmetric Functions and Hall Polynomials},
 edition={2}, 
 series={Oxford Mathematical Monographs},
 publisher={Clarendon Press}, 
 address={Oxford}, 
 year={1995}
}

@article{Prasad1977,
  author = {Gopal Prasad},
  title = {Strong approximation for semi-simple groups over function fields},
  journal = {Annals of Mathematics},
  series = {2},
  volume = {105},
  number = {3},
  pages = {553--572},
  year = {1977},
  doi = {10.2307/1970924}
}

@misc{Wills2026TransversalT,
  author        = {Wills, Adam},
  title         = {Improved Quantum Codes with Transversal {$T$} Gates},
  year          = {2026},
  eprint        = {2608.24000},
  archivePrefix = {arXiv},
  primaryClass  = {quant-ph},
  doi           = {10.48550/arXiv.2608.24000}
}

@misc{GasnierGuemard2026,
  author        = {Gasnier, Jean and Gu{\'e}mard, Virgile},
  title         = {Quantum group codes for non-{Clifford} logic:
                   enhanced decoding, addressability and parallelizability},
  year          = {2026},
  eprint        = {2606.27211},
  archivePrefix = {arXiv},
  primaryClass  = {quant-ph},
  doi           = {10.48550/arXiv.2606.27211}
}

@misc{He2025GoodAddressable,
  author        = {He, Zhiyang and Vaikuntanathan, Vinod and
                   Wills, Adam and Zhang, Rachel Yun},
  title         = {Asymptotically Good Quantum Codes with Addressable
                   and Transversal Non-{Clifford} Gates},
  year          = {2025},
  eprint        = {2507.05392},
  archivePrefix = {arXiv},
  primaryClass  = {quant-ph},
  doi           = {10.48550/arXiv.2507.05392}
}

@misc{SanJose2026Triorthogonal,
  author        = {San-Jos{\'e}, Rodrigo},
  title         = {Asymptotically good binary triorthogonal codes
                   and higher-level transversal gates},
  year          = {2026},
  eprint        = {2609.08203},
  archivePrefix = {arXiv},
  primaryClass  = {cs.IT},
  doi           = {10.48550/arXiv.2609.08203}
}

@techreport{GolowichTamoZhu2026,
  author      = {Golowich, Louis and Tamo, Itzhak and Zhu, Guanyu},
  title       = {Improved Transversal Non-{Clifford} Gates
                 from Cup Products},
  institution = {ECCC},
  number      = {TR26-160},
  year        = {2026},
  url         = {https://eccc.weizmann.ac.il/report/2026/160/}
}

@misc{bafna2026goodquantumlocallytestable,
      title={Good Quantum Locally Testable Codes from Product Expansion}, 
      author={Mitali Bafna and Anqi Li and Quynh T. Nguyen},
      year={2026},
      eprint={2609.26735},
      archivePrefix={arXiv},
      primaryClass={quant-ph},
      url={https://arxiv.org/abs/2609.26735}, 
}

\end{document}